
\documentclass{birkjour}
%
%
%
\usepackage[english]{babel}
\usepackage[latin1]{inputenc}
\usepackage[T1]{fontenc}
\usepackage{amsmath}
\usepackage{amsfonts}
\usepackage{amssymb}
\usepackage{appendix}
\usepackage{enumitem}
\usepackage{dsfont}
\usepackage{cite}
%
%
\newtheorem{theorem}{Theorem}[section]
\newtheorem{lemma}[theorem]{Lemma}
\newtheorem{proposition}[theorem]{Proposition}
\newtheorem{corollary}[theorem]{Corollary}
\theoremstyle{definition}

\theoremstyle{remark}
\newtheorem*{remark}{Remark}
\newtheorem*{remarks}{Remarks}

\numberwithin{equation}{section}
%
%

\newcommand{\laml}{\Lambda}
\newcommand{\chiln}{\mathcal{X}_\Lambda^n}
\newcommand{\hln}{\mathcal{H}_\Lambda^n}
\newcommand{\x}{\textbf{x}}
\newcommand{\y}{\textbf{y}}
\newcommand{\z}{\textbf{z}}
\newcommand{\w}{\textbf{w}}

\usepackage{color}
%
\begin{document}
%
%
%
%
%
%
%
%
%
%
\title[Fock-space localization for hard-core particles in disorder]
 {Low-energy Fock-space localization\\ for attractive hard-core particles in disorder}
%
\author[V.~Beaud]{Vincent Beaud}
\address{%
Technische Universit\"at M\"unchen\\
Zentrum Mathematik\\
Boltzmannstr.~3\\
DE 85748 Garching b. M\"unchen\\
Germany}
\email{vincent.beaud@ma.tum.de}
%
\author[S.~Warzel]{Simone Warzel}
\address{%
Technische Universit\"at M\"unchen\\
Zentrum Mathematik\\
Boltzmannstr.~3\\
DE 85748 Garching b. M\"unchen\\
Germany}
\email{warzel@ma.tum.de}
\subjclass{82B44}
\keywords{many-body localization, XXZ spin chain, disorder, Combes-Thomas estimate}
\date{April 25, 2017}
%
%
\begin{abstract}
We study a one-dimensional quantum system with an arbitrary number of hard-core particles on the lattice, which are subject to a deterministic attractive interaction as well as a random potential. Our choice of interaction is suggested by the spectral analysis of the {XXZ} quantum spin chain. The main result concerns a version of high-disorder Fock-space localization expressed here in the configuration space of hard-core particles. The proof relies on an energetically motivated Combes-Thomas estimate and an effective one-particle analysis. As an application, we show the exponential decay of the two-point function in the infinite system uniformly in the particle number.
\end{abstract}
%
\maketitle
%
\section{Introduction}

Imbrie's works~\cite{Imbrie2016,ImbriePRL}  notwithstanding, complete mathematical proofs of many-body localization in the bulk of the many-particle spectrum remain a challenge. Much mathematical progress has been devoted to the understanding of localization in integrable systems~\cite{AizG,KP,HSS,ANSS16,SiW,ASSN,SeiW2016}. First proofs of ground-state localization for weakly interacting (non-integrable) Fermions subject to the Aubry-Andr\'e potential \cite{Mastropietro2015,Mastropietro2017} as well as within the Hartree-Fock approximation~\cite{Ducatez2016} are among the latest highlights. Interestingly, there is also a recent proof of quasi-localization in the Bose-Hubbard model without disorder, based on Nekhoroshev estimates \cite{BolsRoeck}.

The present paper returns to earlier attempts at addressing the localization problem, namely proofs of multi-particle localization~\cite{AiW2009,ChuSu2009}. In these articles (see also \cite{ChuSubook} and references therein), systems of $ n $ particles are proven to exhibit (strong dynamical) localization with a bound on the localization length that diverges with the particle number. A salient point of our study is to emphasize that, by adapting the techniques in the aforementioned works, the issue of a divergent localization length bound may be absent; in fact, whenever the interaction among particles is \emph{attractive}, thereby naturally forcing clustering of particles. This further allows us to energetically restrict to the low-energy regime defined by the two-cluster break-up. The many-particle localization estimates derived here suffice in particular to conclude exponential decay of the two-point function of low-energy eigenstates. Since such a scenario is relevant to the understanding of exponential decay of correlations in the XXZ quantum spin chain, we shall concentrate on the case of hard-core (spinless) particles with a translation-invariant, nearest-neighbor, attractive interaction on the one-dimensional lattice, or a subset thereof, $\laml:= \left[-L,L\right]\cap\mathbb{Z}$ with $ L \in \mathbb{N} $.

Though from a slightly different perspective, related results were proven independently in the recent preprint~\cite{Stolz}.
\subsection{Model and assumptions}

The configuration space of $ n  $ hard-core particles on $ \laml $ is identified with the set
\begin{equation*}
 	\chiln := \left\lbrace\, \x =\lbrace x_1,x_2,\ldots,x_n\rbrace\in\laml^n : x_1<x_2<\ldots <x_n\,\right\rbrace 
\end{equation*}
of all ordered $n$-tuples in $\Lambda$. A configuration $\x\in\chiln$ can equivalently be understood as a subset of $n$ sites in $\Lambda$. Two sites $u,v\in\Lambda$ are said to be neighboring if $\vert u-v\vert =1$ and similarly a site $u\in\Lambda$ is neighboring a subset $B\subset\Lambda$ if $\mathrm{dist}(u,B)=1$. We refer to a subset of neighboring sites in $\x$ as a \textit{cluster} if it is not neighboring any other site of $\x$. The set of configurations may thus be partitioned into a disjoint union,
\begin{equation*}
	\chiln  = \bigcup\limits_{k=1}^n \mathcal{C}_\Lambda^{(k)}\,,
\end{equation*}
of configurations with exactly $k$ clusters, $1\leqslant k\leqslant n$. The Hilbert space of $n$ hard-core particles on $\Lambda$ is then
\begin{equation*}
	\hln:=l^2\bigl(\chiln\bigr) =  \bigoplus\limits_{k=1}^n l^2\bigl(\mathcal{C}_\Lambda^{(k)}\bigr)\,,
\end{equation*}
with inner product $ \langle \cdot , \cdot \rangle $. An orthonormal basis  is given by $\lbrace\delta_\x\rbrace_{\x\in\chiln}$ where $\delta_\x(\y) = \delta_{\x,\y}$.

The class of Hamiltonians considered here consists of a kinetic hopping term, a hard-core attractive interaction and a random potential. The hopping of particles on $ \Lambda $ is modeled by the adjacency matrix, 
\begin{equation}\label{adjacency}
	A\delta_\x := \sum_{\substack{\y \in \chiln \\ d(\x,\y) = 1}} \delta_\y\,, 
\end{equation}
where the summation extends over all configurations $\y \in \chiln$ whose $\ell^1$-distance, 
\begin{equation}\label{eq:el1}
	d(\x,\y) := \sum_{j=1}^n \lvert x_j-y_j\rvert \,,
\end{equation}
to $\x$ is one. Due to the hard-core constraint, any translation-invariant, nearest-neighbor, attractive interaction among the particles  increases linearly with the number of clusters in the given configuration. Hence, for $ k $-cluster configurations, we set
\begin{equation}\label{potential}
	U\delta_\x := k\delta_\x\,,\qquad \x\in\mathcal{C}_\Lambda^{(k)} \,.
\end{equation}
The number $ k \equiv k(\x) $ equals half of the number of natural cluster boundaries in  $ \x \in C_\Lambda^{(k)} $. For $g>1$, $\lambda \geqslant 0$, we then define on $\hln$ the operator
\begin{equation}\label{hamiltonian}
	H_\Lambda := -A + 2g\,U + \lambda V\,,
\end{equation}
where $V$ is a random potential described below. It should be emphasized that the strength of the hopping term is of order $O(1)$ in this definition. These Hamiltonians are closely related to the XXZ spin chain in its Ising phase as detailed in Subsection~\ref{subsection_XXZ} below. It is assumed throughout the paper that:\\
\smallskip
\begin{itemize}
	\item[\textbf{A1}] The random potential is given in terms of a collection of iid random variables $\lbrace \omega(\alpha)\rbrace_{\alpha\in\Lambda}$ through
	\begin{equation}\label{random_potential}
		V\delta_\x = \left( \sum\limits_{j=1}^n \omega(x_j) \right) \, \delta_\x\,.
	\end{equation}
	\item[\textbf{A2}] The distribution of each $ \omega(\alpha) $ has a bounded density $ \varrho \in L^\infty(\mathbb{R})$ with compact support, $ {\rm supp}\, \varrho \subset [0,\omega_{\max}] $.
	\end{itemize}
\begin{remark}
	While the lower-boundedness of the random variables is essential, the upper-boundedness can be relaxed; this assumption is made here to keep the paper short. In particular, \textbf{A1} and \textbf{A2} imply all the assumptions in \cite{AiW2009}.
\end{remark}
\medskip
In the dynamics generated by $ H_\Lambda  $, clustering is energetically favored. In fact, let $\mathcal{P}^{(k)}$ stand for the orthogonal projection onto the subspace $l^2\bigl(\mathcal{C}_\Lambda^{(k)}\bigr)$ of exactly $ k $ clusters and $\mathcal{Q}^{(k)}$ for the orthogonal projection onto $\bigoplus_{j=k}^n l^2\bigl(\mathcal{C}_\Lambda^{(j)}\bigr)$, the subspace of at least $ k $ clusters. Note that
\begin{equation*}
	\mathds{1} = \mathcal{Q}^{(1)}\,,\qquad \mathcal{Q}^{(k)} = \mathcal{P}^{(k)} + \mathcal{Q}^{(k+1)}
		\quad (1\leqslant k\leqslant n-1)\,,\qquad \mathcal{Q}^{(n)} = \mathcal{P}^{(n)}\,.
\end{equation*} 
The following monotonically increasing lower bounds hold for restrictions $\mathcal{Q}^{(k)}H_\Lambda \mathcal{Q}^{(k)}$ of the Hamiltonian to sectors with at least $k$ clusters.
\begin{lemma}\label{lemma_thresholds}
	Let $H_\Lambda$ be as in \eqref{hamiltonian} with $ \lambda V \geqslant 0$. Then for all $1\leqslant k\leqslant n$:
	\begin{equation}
		\mathcal{Q}^{(k)} H_\Lambda \mathcal{Q}^{(k)}\geqslant 2k(g-1) \mathcal{Q}^{(k)} \, . 
	\end{equation}
\end{lemma}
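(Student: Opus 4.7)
The plan is to estimate $H_\Lambda = -A + 2gU + \lambda V$ term by term on the range of $\mathcal{Q}^{(k)}$. Since $U$ is diagonal in the basis $\{\delta_\x\}$ with eigenvalue $k(\x) \geqslant k$ on $\operatorname{ran}\mathcal{Q}^{(k)}$, and since $U$ commutes with $\mathcal{Q}^{(k)}$, one has $\mathcal{Q}^{(k)} U \mathcal{Q}^{(k)} \geqslant k\, \mathcal{Q}^{(k)}$; the assumption $\lambda V \geqslant 0$ trivially gives $\mathcal{Q}^{(k)} \lambda V \mathcal{Q}^{(k)} \geqslant 0$. The claim will therefore follow at once from the \emph{operator} inequality
\begin{equation*}
A \leqslant 2U \qquad \text{on } \hln,
\end{equation*}
since sandwiching yields $-\mathcal{Q}^{(k)} A \mathcal{Q}^{(k)} \geqslant -2\,\mathcal{Q}^{(k)} U \mathcal{Q}^{(k)} \geqslant -2k\,\mathcal{Q}^{(k)}$, which combined with $2g\,\mathcal{Q}^{(k)} U \mathcal{Q}^{(k)} \geqslant 2gk\,\mathcal{Q}^{(k)}$ delivers $\mathcal{Q}^{(k)} H_\Lambda \mathcal{Q}^{(k)} \geqslant 2k(g-1)\mathcal{Q}^{(k)}$.

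The inequality $A \leqslant 2U$ is established in two elementary steps. First, a combinatorial count shows that the degree function $\deg(\x) := \#\{\y \in \chiln : d(\x,\y) = 1\}$ satisfies $\deg(\x) \leqslant 2k(\x)$. Indeed, only the leftmost and rightmost particles of each cluster of $\x$ can execute a distance-one hop (interior particles of a cluster are blocked by their cluster-mates via the hard-core constraint). A singleton cluster contributes at most two neighbours (its particle hops left or right) and a cluster of size $\geqslant 2$ likewise contributes at most two neighbours (its leftmost particle may hop outward to the left, its rightmost outward to the right; no other move is unblocked); in either case at most two neighbours per cluster, giving $\leqslant 2k(\x)$ in total.

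Second, I would invoke the standard Schur/Gershgorin bound for the symmetric nonnegative matrix $A$: decomposing $A = \sum_e A_e$ over unordered edges $e = \{\x,\y\}$ with $d(\x,\y) = 1$, the elementary rank-two estimate
\begin{equation*}
A_e = |\delta_\x\rangle\langle\delta_\y| + |\delta_\y\rangle\langle\delta_\x| \leqslant |\delta_\x\rangle\langle\delta_\x| + |\delta_\y\rangle\langle\delta_\y|
\end{equation*}
sums to $A \leqslant D$, where $D\delta_\x = \deg(\x) \delta_\x$. The degree bound then gives $D \leqslant 2U$, and hence $A \leqslant 2U$, as required. The only non-routine input here is the cluster-by-cluster degree count; it is the point where one-dimensionality and the hard-core constraint enter decisively, and I expect it to be the main obstacle to a cleaner higher-dimensional analogue.
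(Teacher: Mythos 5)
Your core argument is sound and the lemma does follow from it: the operator inequality $A\leqslant 2U$, obtained from the edge-wise bound $A\leqslant D$ together with the degree count $\deg(\x)\leqslant 2k(\x)$, is correct (only the outermost particle of a cluster can hop, outward, so each cluster contributes at most two neighbours, and boundary effects only lower the count). One link in your opening chain is stated in the wrong direction, though: on $\mathrm{ran}\,\mathcal{Q}^{(k)}$ the operator $U$ has eigenvalues $j\geqslant k$, so the step $-2\,\mathcal{Q}^{(k)}U\mathcal{Q}^{(k)}\geqslant -2k\,\mathcal{Q}^{(k)}$ is false in general (it would need $U\leqslant k$ there), and the resulting claim $\mathcal{Q}^{(k)}A\mathcal{Q}^{(k)}\leqslant 2k\,\mathcal{Q}^{(k)}$ also fails, since configurations with $j>k$ clusters have degree up to $2j$. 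The repair is immediate and uses only what you already proved: keep the two terms together, $-A+2g\,U\geqslant 2(g-1)\,U$, and only then use $g>1$ and $U\geqslant k$ on $\mathrm{ran}\,\mathcal{Q}^{(k)}$ to get $\mathcal{Q}^{(k)}(-A+2gU)\mathcal{Q}^{(k)}\geqslant 2(g-1)\,\mathcal{Q}^{(k)}U\mathcal{Q}^{(k)}\geqslant 2k(g-1)\,\mathcal{Q}^{(k)}$, after which $\lambda V\geqslant 0$ finishes the proof.

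With this regrouping your route is genuinely different from the paper's. The paper works bond-locally in physical space: it writes $2U=\sum_\alpha\pi_{\alpha,\alpha+1}+\pi_{-\textsc{l}}+\pi_{\textsc{l}}$ and $-A=\sum_\alpha\bigl(1-\pi_{\alpha,\alpha+1}-\tau_{\alpha,\alpha+1}\bigr)$ with hermitian unitary swaps $\tau_{\alpha,\alpha+1}$, discards the positive operators $1-\tau_{\alpha,\alpha+1}$ to get $H_\Lambda\geqslant(g-1)\sum_\alpha\pi_{\alpha,\alpha+1}+g\bigl(\pi_{-\textsc{l}}+\pi_{\textsc{l}}\bigr)$, and then counts occupied--empty transitions, which forces a short case analysis for clusters sitting at the boundary (compensated by the $g\bigl(\pi_{-\textsc{l}}+\pi_{\textsc{l}}\bigr)$ term). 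Your version works globally on the configuration graph, with the Gershgorin-type bound $A\leqslant D$ and the per-cluster degree count playing the role of the swap positivity and the transition count; it buys you a cleaner treatment of the boundary (the degree can only decrease there, so no case distinction is needed), at the price of being a single global estimate rather than a bond-by-bond identity of the kind the paper reuses elsewhere.
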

The elementary proof is spelled out in Appendix~\ref{app_thresholds}.

\subsection{Main result}

Our main result will be formulated in terms of the \textit{configurational eigenfunction correlator} of $ H_\Lambda $ as introduced in~\cite{A94}. It is defined for any interval $ I \subset \mathbb{R} $ by
\begin{equation*}
	Q^{(n)}_\Lambda(\x,\y,I) := \sum_{E \in \sigma(H_\Lambda)\cap I}
		\left|\langle \delta_\x , P_{\{E\}}(H_\Lambda) \delta_\y \rangle \right| \, , 
\end{equation*}
where $ P_J(H) $ stands for the spectral eigenprojection of $ H $ associated with $ J \subset \mathbb{R} $. Note that $H_\Lambda$ has discrete spectrum $\sigma(H_\Lambda)$. In case of non-degenerate eigenvalues, the above definition hence coincides with the sum over eigenvalues $E \in  I $ of the product of normalized eigenvectors $|\Phi_E(\x)|\,|\Phi_E(\y)|$.  We recall two useful relations from \cite[Ch.~7]{AWbook}:
\begin{align}\label{eq:ECboundstriv}
	Q^{(n)}_\Lambda(\x,\y,I)  &\leqslant\, \sqrt{Q^{(n)}_\Lambda(\x,\x,I) \,  Q^{(n)}_\Lambda(\y,\y,I) }\,,\\
	Q^{(n)}_\Lambda(\x,\x,I) &=\,  \langle \delta_\x , P_{I}(H_\Lambda) \delta_\x \rangle \leqslant 1 \,.\notag
\end{align}
The correlator is thus bounded in terms of the local density of states, which is exponentially small in $n$  --- reflecting the fact that the potential energy (and hence the total energy) of any configuration is of order $\mathcal{O}(n) $. This is easily seen by an adaptation of the semigroup method in \cite[Lemma~4.6]{AiW2009}:
\begin{lemma}
	Let $ I = [0,\sup I] $ be a compact interval and  $ \lambda > 0 $. Under Assumptions~\textbf{A1}-\textbf{A2}, there exist constants $ C, c \in (0,\infty) $ such that for all $ n $, $ \Lambda $ and $\x\in\chiln$:
	\begin{equation}\label{eq:nodensity}
		\mathbb{E}\left[  Q^{(n)}_\Lambda(\x,\x,I) \right] \leqslant C e^{-c n } \, . 
	\end{equation}
\end{lemma}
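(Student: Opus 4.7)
The plan is to combine a Chebyshev-type spectral tilt, a Feynman--Kac path representation, and an elementary Laplace-transform bound.

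First, since $H_\Lambda \geqslant 0$ by Lemma~\ref{lemma_thresholds} (with $k=1$) and $I \subset [0,\sup I]$, for any $\beta > 0$ the spectral theorem gives $P_I(H_\Lambda) \leqslant e^{\beta\sup I}\,e^{-\beta H_\Lambda}$ as positive self-adjoint operators, whence
\[
Q^{(n)}_\Lambda(\x,\x,I) \;\leqslant\; e^{\beta\sup I}\,\langle \delta_\x, e^{-\beta H_\Lambda}\delta_\x\rangle.
\]
All off-diagonal entries of $H_\Lambda$ are non-positive, so $e^{-\beta H_\Lambda}$ has a non-negative kernel in the basis $\{\delta_\x\}$. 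Writing $-H_\Lambda = \mathcal{G} - W_\omega$ with $\mathcal{G} := A - D$ the generator of the continuous-time simple random walk $\gamma_s$ on $\chiln$ (unit rate per edge; $D$ the configuration-degree operator) and $W_\omega := 2gU - D + \lambda V$, Feynman--Kac yields
\[
\langle \delta_\x, e^{-\beta H_\Lambda}\delta_\x\rangle \;=\; \mathbb{E}_\x\!\left[e^{-\int_0^\beta W_\omega(\gamma_s)\,ds}\,\mathds{1}_{\{\gamma_\beta=\x\}}\right],
\]
where $\mathbb{E}_\x$ is expectation over the random walk started at $\x$.

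The deterministic part of $W_\omega$ is uniformly bounded below: since each cluster admits at most two outward single-particle moves, $D(\y)\leqslant 2U(\y)$ for every $\y$, yielding $2gU - D \geqslant 2(g-1)U \geqslant 2(g-1) > 0$ and hence a factor $e^{-2(g-1)\beta}$. For the random part, writing $\int_0^\beta V(\gamma_s)\,ds = \sum_\alpha \omega(\alpha)\tau_\alpha(\gamma)$ via the occupation times $\tau_\alpha(\gamma) := \int_0^\beta \mathds{1}_{\{\alpha\in\gamma_s\}}\,ds$ (which obey $\tau_\alpha\leqslant\beta$ and $\sum_\alpha \tau_\alpha = n\beta$), Fubini and the independence of $\{\omega(\alpha)\}$ give, for any fixed trajectory,
\[
\mathbb{E}\Big[e^{-\lambda \int_0^\beta V(\gamma_s)\,ds}\Big] \;=\; \prod_\alpha \phi(\lambda\tau_\alpha),\qquad \phi(s):=\int_0^{\omega_{\max}}\! e^{-s\omega}\varrho(\omega)\,d\omega.
\]

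The decisive pointwise estimate on this product is elementary. Since $\log\phi$ is convex with $\log\phi(0)=0$, the map $s\mapsto s^{-1}\log\phi(s)$ is non-decreasing on $(0,\infty)$, so $\log\phi(s)\leqslant s\log\phi(1)$ for all $s\in[0,1]$. Under Assumption~\textbf{A2} one has $\phi(1)\in(0,1)$, so $c := -\log\phi(1) > 0$. Choosing $\beta := 1/\lambda$ forces $\lambda\tau_\alpha \leqslant 1$ pointwise, and therefore
\[
\prod_\alpha \phi(\lambda\tau_\alpha) \;\leqslant\; \exp\!\Big(\log\phi(1)\cdot\lambda\sum_\alpha\tau_\alpha\Big) \;=\; \phi(1)^{n} \;=\; e^{-cn}
\]
along every trajectory. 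Combining the three steps and discarding $\mathds{1}_{\{\gamma_\beta=\x\}}\leqslant 1$ gives $\mathbb{E}[Q^{(n)}_\Lambda(\x,\x,I)] \leqslant e^{(\sup I - 2(g-1))/\lambda}\,e^{-cn}$, as claimed. The main difficulty is identifying the operator decomposition that makes the non-random part of $W_\omega$ manifestly non-negative ($2gU - D \geqslant 2(g-1)$); once this is in place, the disorder average reduces to the Laplace-transform bound above.
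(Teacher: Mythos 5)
Your argument is correct and follows the same overall route as the paper: an exponential Chebyshev bound $\mathds{1}_I(E)\leqslant e^{\beta \sup I}e^{-\beta E}$ reduces the diagonal correlator to the semigroup kernel $\langle\delta_\x,e^{-\beta H_\Lambda}\delta_\x\rangle$, which is then treated by Feynman--Kac and the single-site Laplace transform $\phi$ of $\varrho$. The one genuine difference is how the disorder average of $\exp(-\lambda\int_0^\beta V(\gamma_s)\,\mathrm{d}s)$ is reduced to $\phi(\cdot)^n$. The paper applies Jensen's inequality to the time integral, so that the disorder expectation is evaluated at a single time slice (giving $\phi(\beta\lambda)^n$ since every configuration occupies $n$ distinct sites), and it works with the unnormalized path measure generated by $-A$, discarding the free semigroup diagonal via $-A+2gU\geqslant 0$. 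You instead factorize exactly over occupation times, and use convexity of $\log\phi$ with $\log\phi(0)=0$ together with the choice $\beta=1/\lambda$ to get the pathwise bound $\prod_\alpha\phi(\lambda\tau_\alpha)\leqslant\phi(1)^{\lambda\sum_\alpha\tau_\alpha}=\phi(1)^n$; because you use the probabilistically normalized walk with generator $A-D$, you need the extra observation $D\leqslant 2U$, which is indeed valid (each cluster contributes at most two outward moves; equivalently $2U-D=\pi_{-\textsc{l}}+\pi_{\textsc{l}}\geqslant 0$ by Proposition~\ref{proposition_estimates_UA}). Your variant avoids Jensen, makes the mechanism pathwise and transparent, and yields a rate $c=-\log\phi(1)$ independent of $\lambda$ (the $\lambda$-dependence is pushed into the prefactor through $\beta=1/\lambda$), at the price of the small degree bound; the paper's version is slightly shorter and gives a rate $-\log\phi(t\lambda)$ that improves with large $\lambda$. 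Both arguments are complete and give the claimed uniform bound $Ce^{-cn}$.
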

\begin{proof}
	The spectral theorem and the bound $ \mathds{1}_I(E) \leqslant e^{t\sup I } e^{-tE} $, $ t > 0 $, on the indicator function $ \mathds{1}_I $ of $ I $ yield $ Q^{(n)}_\Lambda(\x,\x,I) \leqslant e^{t \sup I}\langle \delta_\x , e^{-t H_\Lambda }\delta_\x \rangle $. The latter may be expressed probabilistically using the Feynman-Kac formula \cite[Prop.~II.3.12]{CaLa}:
	\begin{equation*} 
		\langle \delta_\x , e^{-t H_\Lambda }\delta_\x \rangle 
			= \int \exp\left( -\int_0^t \lambda V(\y(s)) \mathrm{d}s -2g\int_0^t U(\y(s)) \mathrm{d}s \right)
			\nu_\Lambda^{(\x;t)}(\mathrm{d}\y) \,,
	\end{equation*}
	where $ \nu_\Lambda^{(\x;t)} $ is the measure generated by $  -A_\Lambda $ on paths $ \{\y(s)\}_{0\leqslant s\leqslant t}$ starting at and returning to $\x$ in time $t$. Jensen's inequality, $\exp( -\int_0^t \lambda V(\y(s)) \mathrm{d}s) \leqslant \int_0^t  \exp( -t \lambda V(\y(s)) ) \mathrm{d}s/t $, then yields:
	\begin{align*}
		\mathbb{E}\left[\langle \delta_\x , e^{-t H_\Lambda }\delta_\x \rangle\right]
			&\leqslant\, \mathbb{E}\left[ e^{-t \lambda V(\x) } \right] 
				\langle\delta_\x , e^{-t \left(  -A + 2g\,U \right)_\Lambda }\delta_\x \rangle\\
			&\leqslant\, \mathbb{E}\left[ e^{-t \lambda V(\x) } \right]
				= \left(  \int e^{-t\lambda \omega}  \varrho(\omega) \mathrm{d}\omega \right)^n \, . 
	\end{align*}
	The proof is concluded by noting that the Laplace transform of  $ \varrho $ on the right is strictly smaller than one for $ t \lambda > 0 $.  
\end{proof}

The main result of this paper is the following theorem, which provides a version of Fock-space localization~\cite{Basko20061126} for the present system of hard-core particles.
\begin{theorem}\label{theorem_exponential_cor}
	Let $ g > 1 $ and $ \mu_\textsc{t} > 0 $ be such that 
	\begin{equation}\label{energy_interval}
		E(g,\mu_\textsc{t} ) := 4 g - 12 e^{\mu_\textsc{t}} > 0 \, ,
	\end{equation}
	$ I\subset \left[0, E(g,\mu_\textsc{t}) \right) $ be a compact interval and $\mu \in (0,\mu_\textsc{t}) $. Under Assumptions~\textbf{A1}-\textbf{A2}, there exist constants  $\lambda_0 ,C \in(0,\infty) $ such that for all $ n \geqslant 2 $, $ \Lambda $, all $ \x, \y \in 	\chiln  $, and all $\lambda>\lambda_0$:
	\begin{equation}\label{exponential_fm_1}
		\mathbb{E}\bigl[\vert Q^{(n)}_\Lambda(\x,\y,I)\vert\bigr]\leqslant C  \, F_\mu(\x,\y)\,, 
	\end{equation}
	where 
	\begin{equation}\label{definition_F}
		F_\mu(\x,\y) := \left\{ \begin{array}{ll}
			\vphantom{\displaystyle\sum_x} e^{-\mu |x_1-y_1|} & \mbox{if $ \x,\y \in \mathcal{C}$,} \\
			\displaystyle\sum_{\w \in \mathcal{C} }  e^{-\mu \left( d(\x,\w) + |w_1-y_1| \right)}
				& \mbox{if $ \x \not \in \mathcal{C} $ and $\y \in \mathcal{C}$,} \\
			\displaystyle\sum_{\w , \textbf{v} \in \mathcal{C} }
				e^{-\mu \left(d(\x,\w)+ d(\textbf{v},\y)+  |w_1-v_1| \right)}
				& \mbox{if  $\x,\y \not \in \mathcal{C} $.}
		\end{array} \right.
	\end{equation}
	Here and henceforth, $\mathcal{C}\equiv \mathcal{C}^{(1)}_\Lambda$ abbreviates clustered configurations in $\chiln$.
\end{theorem}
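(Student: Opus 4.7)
The plan is to prove the fractional-moment analogue
\begin{equation*}
\mathbb{E}\bigl[|G_\Lambda(\x, \y; z)|^s\bigr] \leq C\, F_\mu(\x, \y), \qquad G_\Lambda(\x, \y; z) := \langle \delta_\x, (H_\Lambda - z)^{-1} \delta_\y \rangle,
\end{equation*}
for some $s \in (0, 1)$ uniformly in $\mathrm{Re}\,z \in I$, $\mathrm{Im}\,z \neq 0$, and to convert it to \eqref{exponential_fm_1} via the standard contour integration of \cite[Ch.~7]{AWbook}. The decisive structural input is Lemma~\ref{lemma_thresholds} with $k = 2$: since $\sup I < E(g, \mu_\textsc{t}) < 4(g-1)$, the multi-cluster sector $P_\mathcal{C}^\perp H_\Lambda P_\mathcal{C}^\perp$ is uniformly separated from the target energies.

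Decomposing $\hln = l^2(\mathcal{C}) \oplus l^2(\mathcal{C}^\perp)$ and using the key observation $P_\mathcal{C} A P_\mathcal{C} = 0$ -- any hop applied to a clustered configuration splits the cluster -- block-matrix inversion gives for $\x, \y \in \mathcal{C}$
\begin{equation*}
G_\Lambda(\x, \y; z) = \bigl\langle \delta_\x, (H_{\mathrm{eff}}(z) - z)^{-1} \delta_\y \bigr\rangle,
\end{equation*}
with $H_{\mathrm{eff}}(z) := 2g + \lambda P_\mathcal{C} V P_\mathcal{C} - P_\mathcal{C} A R^\perp(z) A P_\mathcal{C}$ and $R^\perp(z) := (P_\mathcal{C}^\perp H_\Lambda P_\mathcal{C}^\perp - z)^{-1}$. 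Identifying each $\x \in \mathcal{C}$ with its leftmost site $x_1$ turns $H_{\mathrm{eff}}(z)$ into an effective one-particle Schr\"odinger operator with (correlated) diagonal disorder $\lambda \sum_{j=0}^{n-1} \omega(x_1 + j)$.

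Next I would establish a Combes-Thomas estimate for $R^\perp(z)$, uniform in $n$, by conjugating with a weight $e^{\mu f}$ where $f$ is $1$-Lipschitz with respect to $d$ and sensitive only to hops involving cluster-boundary particles; the numerical constant $12 e^{\mu_\textsc{t}}$ in \eqref{energy_interval} bookkeeps the resulting commutator norm together with the contribution from the hybrid block $P_\mathcal{C} A P_\mathcal{C}^\perp$. This yields $|R^\perp(z)(\x, \y)| \leq C e^{-\mu d(\x, \y)}$, and hence the Schur correction $P_\mathcal{C} A R^\perp(z) A P_\mathcal{C}$ has exponentially decaying matrix elements in $|x_1 - y_1|$. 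The Aizenman-Molchanov fractional-moment method applied to $H_{\mathrm{eff}}(z)$ -- with the a priori bound produced by resampling a single $\omega(\alpha)$ despite the correlations of the effective potential -- then yields the cluster-cluster bound $\mathbb{E}\bigl[|G_\Lambda(\x, \y; z)|^s\bigr] \leq C e^{-\mu|x_1 - y_1|}$ for $\x, \y \in \mathcal{C}$ at sufficiently large $\lambda$. For the remaining cases, the off-diagonal block of the block inversion
\begin{equation*}
G_\Lambda(\x, \y; z) = \sum_{\substack{\w \in \mathcal{C},\, \w' \in \mathcal{C}^\perp \\ d(\w, \w') = 1}} R^\perp(z)(\x, \w')\, A(\w', \w)\, (H_{\mathrm{eff}}(z) - z)^{-1}(\w, \y)
\end{equation*}
(and its analogue when $\x, \y \notin \mathcal{C}$, which also contains a direct term $R^\perp(z)(\x, \y)$ dominated by $F_\mu$ via the $\w = \mathbf{v}$ diagonal term) together with the Combes-Thomas and fractional-moment bounds reproduces the structure of $F_\mu$ in \eqref{definition_F}, using that each $\w \in \mathcal{C}$ has at most two cluster-splitting neighbors.

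The principal obstacle is the $n$-uniform Combes-Thomas estimate: a generic weight would produce a commutator norm growing linearly in $n$, which would overwhelm the $O(1)$ gap above $I$, so the weight must be carefully localized on the $O(1)$ cluster-boundary particles per configuration, and the arithmetic tracked precisely to recover the constant $12$ in \eqref{energy_interval}. A secondary subtlety is the correlation structure of the effective diagonal potential across neighboring $x_1$'s, which forces the fractional-moment step to operate on individual $\omega(\alpha)$'s rather than on aggregated values.
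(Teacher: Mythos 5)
Your overall architecture (a deterministic Combes--Thomas bound on the multi-cluster sector, effective one-cluster localization at large disorder, then conversion to eigenfunction correlators) matches the paper's, and the observation $\mathcal{P}^{(1)}A\mathcal{P}^{(1)}=0$ is correct. The genuine gap is in the step you yourself flag as the principal obstacle: the $n$-uniform Combes--Thomas estimate for $R^\perp(z)=\bigl(\mathcal{Q}^{(2)}H_\Lambda\mathcal{Q}^{(2)}-z\bigr)^{-1}$. Your proposed fix rests on the premise that a configuration in $\mathcal{C}^\perp$ has ``$O(1)$ cluster-boundary particles'', but this is false: a configuration with $k$ clusters has up to $2k$ hopping neighbors, and on $\mathcal{Q}^{(2)}\mathcal{H}$ the cluster number $k$ ranges up to $n$. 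A weight $e^{\mu f}$ that is ``sensitive only to boundary-particle hops'' therefore either has conjugation error growing linearly in the cluster number (hence in $n$ on the many-cluster sectors, overwhelming the $O(1)$ gap $4(g-1)-E$), or else $f$ fails to dominate $d(\x,\y)$, in which case the resulting decay is in a degenerate pseudo-metric that is too weak for the later steps --- the sums over $\w\in\mathcal{C}$ in $F_\mu$ and the summability input of Lemma~\ref{lemma_sum} genuinely require decay in the full $\ell^1$-distance $d$. The mechanism that actually closes the argument, and which your proposal does not identify, is that the spectral gap grows \emph{linearly in the number of clusters}: by Lemma~\ref{lemma_thresholds}, $\mathcal{Q}^{(k)}H_\Lambda\mathcal{Q}^{(k)}\geqslant 2k(g-1)$, so the $O(k)$ conjugation error on the $k$-cluster block is beaten sector by sector. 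This is implemented in the paper by an induction over $k$-cluster sectors via Schur complements and the resolvent identity (Lemma~\ref{lem:CT}), with the full weight $e^{\mu_\textsc{t}d(\cdot,\y)}$; the constant $12$ in \eqref{condition_combes_thomas} arises precisely from requiring $\delta_k(E)>4ke^{\mu_\textsc{t}}$ and the iteration ratio $8e^{\mu_\textsc{t}}/\delta_2(E)<1$, not from bookkeeping a boundary-localized weight. Without this (or an equivalent) idea, the central deterministic estimate of your plan is unproven.

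A secondary, fixable weakness: by passing to the Feshbach operator $H_{\mathrm{eff}}(z)$ you make the effective hopping $\mathcal{P}^{(1)}AR^\perp(z)A\mathcal{P}^{(1)}$ both energy-dependent and \emph{random} (it depends on the same $\omega$'s through $\lambda V$ in $\mathcal{Q}^{(2)}H_\Lambda\mathcal{Q}^{(2)}$), so the Aizenman--Molchanov a priori bound and the decoupling step cannot be quoted off the shelf for $H_{\mathrm{eff}}$; resampling $\omega(\alpha)$ perturbs both the (correlated, rank-$n$) diagonal and the hopping. The paper avoids this by never forming a self-contained effective operator: it works with the full many-body Green function, uses the conditional a priori bound \eqref{assumption_a2} of \cite{AiW2009}, and decouples geometrically at $x_1$ so that the half-line resolvent $G_{\Lambda_\x}$ is independent of $\omega(x_1)$, yielding a self-consistent inequality for $S^{(1)}_\mu$ that closes for large $\lambda$; likewise, the passage to the eigenfunction correlator is not a plain contour integration but uses the many-body relation of Proposition~\ref{prop:EC} (with its sum over configurations $\w$ containing a fixed site, controlled via Lemma~\ref{lemma_distance_dbar}(v)) together with the singular limit \eqref{eq:singlim} for non-clustered configurations.
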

The proof essentially treats clustered and non-clustered configurations separately. Non-clustered configurations are localized by an energetically motivated Combes-Thomas estimate found in Section \ref{Sec2}. Localization of clustered configurations follows in Section~\ref{section_Green_function} by a standard (one-particle) argument, which is carried out on the level of the many-particle Green functions. The proof of Theorem~\ref{theorem_exponential_cor} in  Section~\ref{section_fm_to_eq} finally relies on a relation between  the Green function and the eigenfunction correlator from~\cite{AiW2009}.
\bigskip\\
\noindent We continue with further remarks:
\begin{enumerate}
	\item The spectrum of $ H_\mathbb{Z} $  with $ \lambda = 0 $ on $ \mathcal{H}_\mathbb{Z}^n  $ decomposes into (delocalized) bands caused by the energetic separation of clusters. The lowest such interval, essentially induced by fully clustered configurations $ \mathcal{C} $, is referred to as the droplet band and given by 
 	\begin{align*}
 		\Delta(n) &:=\, 2 \sqrt{g^2-1}
 			\left[\frac{\cosh(\rho_g n)-1 }{\sinh(\rho_g n)}, \frac{\cosh(\rho_g n)-1 }{\sinh(\rho_g n) } \right]\\
 			&\subset\,  \big[ 2(g-1),2(g+1)\big] \, , 
	\end{align*}
	where $ \rho_g := \ln (g + \sqrt{g^2-1} )$, see~\cite{NS01,NSS07,Fisch,FS14}. Since the spectrum of $  H_\mathbb{Z} $ then contains $ \Delta(n)  + \lambda\,  {\rm supp} \varrho  $, the above localization statement is not void for sufficiently large $ g $.  In particular, if $ {\rm supp} \varrho = [0,\omega_{\max}]  $, the random potential $ V $ has arbitrarily large clearings in the infinite system so that $\Delta(n) \subset \sigma( H_\mathbb{Z}) $ almost surely.\\
	\indent Although the localization estimates are uniform in the particle number, the above theorem does not address localization for typical realizations if the particle number is proportional to the system's size. That is to say, the bottom of the spectrum is typically above $E(g,\mu_\textsc{t})$ for large $n,\Lambda$ with $n/\vert\Lambda\vert = \mathrm{const}>0$. In fact, by similar methods as in the proof of Lemma \ref{lemma_thresholds}, we have
	\begin{equation*}
		\inf\sigma\bigl(H_\Lambda\bigr)\geqslant 2(g-1)
			+ \min\bigl\lbrace 2(g-1), \lambda V_\mathrm{min}^\mathcal{C}\bigr\rbrace\,,
	\end{equation*}
	where $V_\mathrm{min}^\mathcal{C}:= \min_{\x\in\mathcal{C}}\sum_{i=1}^{n}\omega(x_i)$. A Chernoff bound now yields for any $E\geqslant 0$
	\begin{equation*}
		\mathbb{P}\bigl(\lambda V_\mathrm{min}^\mathcal{C}\leqslant E\bigr)
			\leqslant (\vert\Lambda\vert-n+1)\inf\limits_{t>0}
			e^{tE}\left(\int e^{-t\lambda\omega}\varrho(\omega)\mathrm{d}\omega\right)^n\,,
	\end{equation*}
	which vanishes in the limit $\vert\Lambda\vert\to\infty$ if $n$ is proportional to $\vert\Lambda\vert$.
	\item The (non-optimal) restriction~\eqref{energy_interval} imposed on the value of $ g $ stems from the corresponding technical condition~\eqref{condition_combes_thomas} in the Combes-Thomas estimate below.
	\item At first sight, the summations over clustered configurations in the definition of $ F_\mu $ may look frightening. However, their entropic contribution is low since all these sums (e.g.\ $ \w \in  \mathcal{C} $) are equivalent to a summation over \emph{one} variable (e.g.\ $ w_1 \in \Lambda $). Introducing the function 
	\begin{equation}\label{definition_dbar_app}
		\overline{d}(\x,\y) := \min\limits_{\w,\textbf{v}\in\mathcal{C}} 
			\bigl\lbrace d(\x,\w) + \vert w_1-v_1\vert + d(\textbf{v},\y)\bigr\rbrace\,,
	\end{equation}
	one may simplify~\eqref{exponential_fm_1} and write
	\begin{equation}\label{exponential_fma}
		\mathbb{E}\bigl[\vert Q^{(n)}_\Lambda(\x,\y,I)\vert\bigr]\leqslant C  \, e^{-\mu \overline{d}(\x,\y) }\, ,
	\end{equation}
	(cf.~Lemma~\ref{lemma_distance_dbar}, which collects some properties of $ \overline{d} $). Note that $\overline{d}$ is no distance function, as it is not definite and fails to satisfy a triangle inequality. Nonetheless, even for clustered configurations, the result is stronger than in \cite{AiW2009} since the Hausdorff distance between the sets defined by $\x$ and $\y$ is much smaller than $ \overline{d}(\x,\y) $. Moreover,  the constants  $ C , \mu \in( 0,\infty) $ do not depend on the particle number $ n $. It would be interesting to see whether the techniques outlined in this paper can be further combined with \cite{AiW2009} or \cite{ChuSu2009,ChuSubook} to show localization higher up in the spectrum of $ H_\Lambda $, where clusters  $\mathcal{C}_\Lambda^{(k)} $ with finite $ k $ appear. 
	\item Thanks to~\eqref{eq:ECboundstriv}, the three bounds~\eqref{eq:nodensity}, \eqref{exponential_fm_1} and \eqref{exponential_fma} may be combined at one's convenience, i.e., for any triple $ s_1,s_2,s_3 \geqslant 0 $ with $ s_1+s_2+s_3 = 1 $:
	\begin{equation}\label{eq:combined}
		\mathbb{E}\bigl[\vert Q^{(n)}_\Lambda(\x,\y,I)\vert\bigr]
			\leqslant C \, e^{-s_1 c n } \, e^{- s_2 \mu \overline{d}(\x,\y) }\, F_{s_3\mu}(\x,\y) \, . 
	\end{equation}
	\item By standard arguments \cite{AWbook}, the bound~\eqref{exponential_fm_1} implies dynamical localization 
	\begin{equation*}
 		\mathbb{E}\bigl[\sup_{t\in \mathbb{R}}
 			\left| \langle \delta_\x , e^{-it H_\mathbb{Z}} P_I(H_\mathbb{Z}) \delta_\y \rangle \right| \big]
 			\leqslant \mathbb{E}[| Q^{(n)}_\mathbb{Z}(\x,\y,I)|] \leqslant C F_\mu(\x,\y) \, .
 	\end{equation*} 
	For any fixed $\x \in\mathcal{X}^{n}_\mathbb{Z}$ and using Lemma~\ref{lemma_sum}, $F_\mu(\x,\y)$ is summable with respect to $\y\in\mathcal{X}_\mathbb{Z}^n$. Thus, by the RAGE Theorem \cite{AizG}, any spectrum of $ H_\mathbb{Z} $ on $\mathcal{H}^{n}_\mathbb{Z}$ within $I$ is entirely pure point for any $ n $.
	\item In the one-particle case, localization is known to occur in one dimension for short-range hopping also at arbitrarily weak disorder $ \lambda > 0 $, cf.\ \cite{GMP,CaLa,AWbook,Schenker} and references therein. The idea behind the present proof suggests that this also applies to the localization of the droplet as long as $ n $ stays much smaller than $\vert \Lambda \vert$.
	A proof of such a result is not immediate, though.
\end{enumerate}
\smallskip
A further virtue of the bound~\eqref{exponential_fm_1} resides in its summability over configurations $\x$ and $\y$ containing at least one particle in disjoint subsets $U, V \subset \Lambda $. The key  technical observation here is Lemma~\ref{lemma_sum}. Such quantities are relevant in the discussion of exponential clustering of many-particle eigenstates. More precisely, for $U,V\subset \Lambda$ two connected and disjoint subsets, one may be interested in the correlator
\begin{equation*}
	q_\Lambda^{(n)}(U,V,I) := \sum_{E \in \sigma(H_\Lambda)\cap I}
		\sum\limits_{\substack{\x\in\chiln\\ \x\cap U\neq\emptyset}}
		\sum\limits_{\substack{\y\in\chiln\\ \y\cap V\neq\emptyset}}
		|\langle \delta_{\x} , P_{\{E\}}(H_\Lambda) \delta_{\y}  \rangle | \, , 
\end{equation*}
which coincides with the respective sum over the eigenfunction correlator. An immediate consequence of the bounds~\eqref{eq:combined} is the following 
\begin{corollary} 
	In the set-up of Theorem~\ref{theorem_exponential_cor}, there exist constants $ C , c, \mu \in (0,\infty) $ such that for all $ n \geqslant 2 $, $ \Lambda$, and all connected and disjoint $U,V\subset \Lambda$:
	\begin{equation}\label{eq:1prdm}
 		\mathbb{E}\left[q_\Lambda^{(n)}(U,V,I)\right] \leqslant C \,  e^{-cn} \,\exp\left( -\mu\,\mathrm{dist}(U,V) \right) \,,
	\end{equation}
	where $\mathrm{dist}(U,V):=\min\lbrace\vert u-v\vert : u\in U,v\in V\rbrace$ denotes the distance between $U$ and $V$.
\end{corollary}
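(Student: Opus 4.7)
The plan is to apply the combined bound~\eqref{eq:combined} termwise to
$$q_\Lambda^{(n)}(U,V,I) = \sum_{\substack{\x\in\chiln\\ \x\cap U\neq\emptyset}}\sum_{\substack{\y\in\chiln\\ \y\cap V\neq\emptyset}} Q_\Lambda^{(n)}(\x,\y,I),$$
splitting the exponent into pieces that produce, respectively, the spatial decay in $\mathrm{dist}(U,V)$, the particle-number decay $e^{-cn}$, and the summability over configurations. With any triple $s_1,s_2,s_3\geqslant 0$ summing to one, this yields
$$\mathbb{E}\bigl[q_\Lambda^{(n)}(U,V,I)\bigr] \leqslant C\, e^{-s_1 c n}\sum_{\substack{\x\cap U\neq\emptyset\\ \y\cap V\neq\emptyset}}e^{-s_2\mu\,\overline{d}(\x,\y)}\,F_{s_3\mu}(\x,\y).$$

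The first ingredient is the geometric lower bound
$$\overline{d}(\x,\y)\geqslant \mathrm{dist}(U,V)-2(n-1)$$
whenever $\x\cap U\neq\emptyset$ and $\y\cap V\neq\emptyset$. I would prove this by fixing arbitrary clusters $\w,\textbf{v}\in\mathcal{C}$: any particle of $\x$ lying in $U$ is matched in the $\ell^1$-distance~\eqref{eq:el1} to some particle of $\w$, so $d(\x,\w)\geqslant\mathrm{dist}(\w,U)$, and analogously $d(\textbf{v},\y)\geqslant\mathrm{dist}(\textbf{v},V)$. The interval triangle inequality
$$\mathrm{dist}(U,V)\leqslant\mathrm{dist}(\w,U)+(n-1)+|w_1-v_1|+(n-1)+\mathrm{dist}(\textbf{v},V),$$
which is valid because each cluster has diameter $n-1$, then produces the claim after minimizing over $\w,\textbf{v}$. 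Hence $e^{-s_2\mu\overline{d}(\x,\y)} \leqslant e^{2 s_2\mu n}\,e^{-s_2\mu\,\mathrm{dist}(U,V)}$.

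It remains to bound $\sum_{\x\cap U,\,\y\cap V} F_{s_3\mu}(\x,\y)$ uniformly in $\Lambda,U,V$ by a prefactor of the form $e^{O(n)}$. I would handle the three cases of~\eqref{definition_F} separately. In the doubly clustered case the sum collapses to a geometric double series over the leftmost particles $x_1\in[\min U-n+1,\max U]$ and $y_1\in[\min V-n+1,\max V]$, and is controlled uniformly in $\Lambda$ after extracting the factor $e^{-s_3\mu\,\mathrm{dist}(U,V)}$. In the mixed and doubly non-clustered cases, one interchanges the $\w,\textbf{v}$-summations with the sums over $\x,\y$ and applies the elementary estimate
$$\sum_{\x:\,\x\cap U\neq\emptyset} e^{-\mu d(\x,\w)}\leqslant C^n\, e^{-(\mu/2)\,\mathrm{dist}(\w,U)},$$
obtained by splitting the exponent in half, using $d(\x,\w)\geqslant\mathrm{dist}(\w,U)$, and bounding the residual by the factorized sum $\bigl(\sum_{k\in\mathbb{Z}} e^{-\mu|k|/2}\bigr)^n$; an analogous bound holds for $\y,\textbf{v},V$. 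The remaining double sum over $w_1,v_1\in\Lambda$ is then controlled by the same interval-triangle estimate as for $\overline{d}$, combined with the summability of $e^{-\mu|w_1-v_1|}$ in the difference variable $v_1-w_1$.

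Finally, choosing $s_1>0$ large enough that $s_1 c$ strictly exceeds the logarithmic rate of all the $e^{O(n)}$ contributions above absorbs them into a residual $e^{-c'n}$, producing the claimed bound with $\mu':=s_2\mu$. The main obstacle is the last summability step: the individual sums $\sum_{w_1}e^{-\mu\,\mathrm{dist}(\w,U)}$ grow linearly in $|U|$ rather than being uniformly bounded, so one must exploit the three exponential weights simultaneously---the two $\mathrm{dist}(\cdot,U),\mathrm{dist}(\cdot,V)$-weights together with the $|w_1-v_1|$-weight---to extract simultaneously a $\Lambda$-independent constant and the desired spatial decay. This is the content required from Lemma~\ref{lemma_sum}.
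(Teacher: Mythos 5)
Your overall architecture matches the paper's: apply the interpolated bound \eqref{eq:combined} termwise, use a geometric lower bound on $\overline{d}(\x,\y)$ in terms of $\mathrm{dist}(U,V)$ (your $-2(n-1)$ instead of the paper's $-(n-1)$ from Lemma~\ref{lemma_distance_dbar}(vi) is harmless), and control the remaining configuration sums of $F_{s_3\mu}$. However, there is a genuine gap in the summability step. You bound
\begin{equation*}
	\sum_{\x:\,\x\cap U\neq\emptyset} e^{-\mu d(\x,\w)}\leqslant \Bigl(\sum_{k\in\mathbb{Z}}e^{-\mu|k|/2}\Bigr)^{n}\, e^{-(\mu/2)\,\mathrm{dist}(\w,U)} ,
\end{equation*}
i.e.\ by a factorized estimate of size $C^{n}$ with $C=\coth(s_3\mu/4)>1$, and then claim to absorb all such $e^{O(n)}$ factors by ``choosing $s_1$ large enough that $s_1c$ exceeds their logarithmic rate.'' This cannot be done: $s_1\leqslant 1$ and the constant $c$ from \eqref{eq:nodensity} is a fixed quantity determined by $I$, $\varrho$ and $\lambda_0$, not a tunable parameter, while the competing rate $\log\coth(s_3\mu/4)$ is bounded below by a fixed positive number because $s_3\mu\leqslant\mu<\mu_{\textsc{t}}$ cannot be sent to infinity. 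So in general $s_1c$ does not beat $2n\log C$, and the absorption step fails; at best you would prove the corollary only after enlarging $\lambda_0$, which is not the stated claim. The only $n$-growing factor the paper permits itself to absorb is $e^{s_2\mu(n-1)}$ (from the $\overline{d}$ lower bound), which works precisely because $s_2$ may be taken arbitrarily small relative to $s_1 c$.

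The missing ingredient is the genuinely $n$-uniform summability of Lemma~\ref{lemma_sum}: for clustered $\x$, $\sum_{\textbf{v}\in\mathcal{X}^{n}_{\mathbb{Z}}}e^{-\mu d(\x,\textbf{v})}\leqslant C_\infty(\mu)$ with $C_\infty(\mu)$ independent of $n$, a consequence of the ordering of configurations (Euler's partition generating function), which is exactly what your factorization throws away. With it, the paper's Lemma~\ref{summability_F} bounds the full double sum of $F_\mu$ over $\{\x\cap U\neq\emptyset\}$, $\{\y\cap V\neq\emptyset\}$ by $C_\mu(n+1)$ --- polynomial in $n$ and uniform in $\Lambda,U,V$ --- and then the choice $s_1c>s_2\mu$ closes the argument. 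Your closing paragraph correctly senses that Lemma~\ref{lemma_sum} is needed, but the obstacle you name there (linear growth in $|U|$ of $\sum_{w_1}e^{-\mu\,\mathrm{dist}(\w,U)}$, cured by using the $U$- and $V$-weights jointly as in Lemma~\ref{summability_F}) is not the decisive one; the decisive point is replacing your $C^{n}$ by an $n$-independent constant. Also, in the doubly clustered case you should retain part of the exponential weight in $|x_1-y_1|$ when summing over the windows around $U$ and $V$, as otherwise the double sum grows with $|U|\,|V|$; this is minor and fixable by splitting the exponent, as in \eqref{eq_x_y_in_C}.
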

\begin{proof}
	The proof relies on~\eqref{eq:combined} and Lemma~\ref{lemma_distance_dbar}(vi), according to which $\overline{d}(\x,\y) \geqslant \mathrm{dist}(U,V) - (n-1)$. Combining  this with Lemma~\ref{summability_F}, we arrive at the bound
	\begin{equation*}
		\sum\limits_{\substack{\x\in\chiln\\ \x\cap U\neq\emptyset}}
			\sum\limits_{\substack{\y\in\chiln\\ \y\cap V\neq\emptyset}}
			\mathbb{E}\left[Q_\Lambda^{(n)}\bigl(\x, \y,I\bigr)\right] \leqslant C \, (n+1) \,
			e^{- n (s_1 c - s_2 \mu)} e^{-s_2 \mu\,\mathrm{dist}(U,V) } \, ,
	\end{equation*}
	where $ c , C, \mu \in (0,\infty) $ are independent of $ n $ and $ \Lambda $. Choosing $ s_1,s_2 \in (0,1) $ such that $s_1 c > s_2 \mu $ yields the result.
\end{proof}

In more physical terms, the above result guarantees in particular that, for any non-degenerate, normalized  $ n $-particle eigenvector $ \Phi_E^{(n)} $ of $ H_\mathbb{Z} $ with energies $ E < E(g,\mu_\textsc{t} )  $, the corresponding one-particle reduced density matrix decays exponentially, i.e., there are (non-random) constants $ c ,\mu \in (0,\infty) $ and a random variable $ A > 0 $ with finite mean $ \mathbb{E}\left[ A \right] < \infty $ such that for all $ n \geqslant 2 $ and all $ u,v \in \mathbb{Z} $:
\begin{equation}\label{eq:twopoint}
	\left|\langle \Phi_E^{(n)} , a_u^{*} a_v \Phi_E^{(n)}\rangle \right|
		\leqslant A\,e^{-cn}\,(1+|u|^2)\,\exp\left( -\mu  |u-v| \right) \,
\end{equation}
where $ a_u^*, a_v $ denote the particle creation and annihilation operators. This follows immediately from~\eqref{eq:1prdm}, see e.g.~\cite[Ch.~7.1]{AWbook}.

\subsection{Relation to the XXZ spin chain}\label{subsection_XXZ}

The class of Hamiltonians introduced in (\ref{hamiltonian}) is intimately related to the disordered XXZ spin chain in its Ising phase. This subsection aims at bridging the gap between the two formulations. The Hilbert space for a chain of $N$ spins $1/2$ is the $N$-fold tensor product of $\mathbb{C}^2$, denoted by
\begin{equation*}
	\mathcal{H}_\textsc{n}^\textsc{xxz} = \bigotimes\limits_{k=1}^{N}\mathbb{C}_k^2\,,
\end{equation*}
and endowed with the inner product on tensor product spaces. The indices $1\leqslant k\leqslant N$ merely identify the single factors in the product. The XXZ Hamiltonian without disorder,
\begin{equation*}
	H_\textsc{n}^\textsc{xxz} := \sum\limits_{k=1}^{N-1} h_{k,k+1} \, ,  
\end{equation*}
is the sum of nearest-neighbor interactions
\begin{equation*}
	 h_{k,k+1} := -\frac{1}{\Delta}\bigl(S_{k}^{x}\otimes S_{k+1}^{x} + S_{k}^{y}\otimes S_{k+1}^{y}\bigr)
		+ \bigl(\frac{1}{4}\mathds{1}_k\otimes\mathds{1}_{k+1}- S_{k}^{z}\otimes S_{k+1}^{z}\bigr)\,,
\end{equation*}
acting on $\mathbb{C}_k^2\otimes\mathbb{C}_{k+1}^2$ and extended by unity to $\mathcal{H}_\textsc{n}^\textsc{xxz}$. Here, $\Delta^{-1}$ is a real constant and $S^{x,y,z}$ are the spin matrices, normalized to have eigenvalues $\pm\frac{1}{2}$. The constant term $1/4\cdot\mathds{1}$ merely serves normalization purposes. The regime where $\Delta^{-1}=0$ corresponds to the (ferromagnetic) Ising model, while the Ising phase of the XXZ Hamiltonian is described  by $1>\Delta^{-1}>0$.

The dynamics generated by $ H_\textsc{n}^\textsc{xxz} $ conserves the total $z$-component of the spin. It thus suffices to consider restrictions $H_{n,\textsc{n}}^\textsc{xxz}$ to superselection sectors $\mathcal{H}_{n,\textsc{n}}^\textsc{xxz}$ with a constant number $n$ of, say, down-spins. This property persists upon addition of non-vanishing fields in the $z$-direction. In the sequel, we concentrate on the so-called \textit{droplet Hamiltonian} with disorder,
\begin{equation}\label{hamiltonian_XXZ_++}
	H_\textsc{n}^{++} := H_\textsc{n}^\textsc{xxz} + \frac{\gamma}{2}\bigl(\mathds{1}-S^{z}_1-S^{z}_\textsc{n}\bigr) 
		+\frac{\lambda}{2\Delta} \sum\limits_{k=1}^{N} \omega(k) \bigl(\frac{1}{2} \, \mathds{1}-S_k^{z}\bigr)\,,
\end{equation}
where $\gamma\geqslant 0$ and $\lbrace\omega(k)\rbrace_{k=1}^N$ is the given family of iid random variables.  The superscript $++$ reflects the fact that having up-spins at both boundary sites is energetically most favorable. 
The Hamiltonian (\ref{hamiltonian}) with $n$ hard-core particles on $\Lambda$ is, up to a multiplicative constant, unitarily equivalent to the XXZ Hamiltonian (\ref{hamiltonian_XXZ_++}) restricted to the sector with $n$ down-spins on $\Lambda$, and with constants set to $\Delta=g$ and $\gamma=1$.  
This unitary equivalence is the object of the following proposition. It provides a dictionary
which allows one to translate results for hard-core particles to the {XXZ} system. In particular, two-point functions such as in~\eqref{eq:twopoint} relate to spin correlation functions. 
\begin{proposition}\label{prop:XXZ}
	Let $H_{n,\Lambda}(g)\equiv H_\Lambda$ and $H_{n,\textsc{n}}^{++}(\Delta,\gamma)\equiv H_{n,\textsc{n}}^{++}$ be as in \eqref{hamiltonian} and \eqref{hamiltonian_XXZ_++}. Then, there exists a unitary operator $\mathcal{U}:\mathcal{H}_\Lambda^n\to\mathcal{H}_{n,\vert\Lambda\vert}^\textsc{xxz}$ such that
	\begin{equation}\label{equivalence_hamiltonians}
		\mathcal{U}\,  H_{n,\Lambda}(g) \, \mathcal{U}^{*} = 2g \, H_{n,\vert\Lambda\vert}^{++}(g,1)\,,
	\end{equation}
	where $\vert\Lambda\vert$ stands for the number of sites in $\Lambda$.
\end{proposition}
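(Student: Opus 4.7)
The plan is to construct $\mathcal{U}$ from the standard ``down-spin = hard-core particle'' correspondence and then verify the identity term-by-term on basis vectors. Identifying $\Lambda$ with $\{1,\dots,N\}$ where $N=\vert\Lambda\vert$, define $\mathcal{U}\delta_\x := \bigotimes_{k=1}^N |\sigma_k(\x)\rangle$ with $|\sigma_k\rangle = |{\downarrow}\rangle$ if $k\in\x$ and $|{\uparrow}\rangle$ otherwise. Since the sector $\mathcal{H}_{n,N}^\textsc{xxz}$ of exactly $n$ down-spins has an orthonormal basis indexed precisely by $\chiln$, this extends to a unitary. It then suffices to compute $\mathcal{U}\, H_\Lambda\, \mathcal{U}^*$ on each basis element and check that it equals $2g\,H_{n,N}^{++}(g,1)$ applied to the corresponding spin state.

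For the hopping, I would use $S^x_k S^x_{k+1}+S^y_k S^y_{k+1}=\tfrac12(S^+_k S^-_{k+1}+S^-_k S^+_{k+1})$. Acting on a basis state, $S^+_k S^-_{k+1}$ produces $\tfrac12$ times the state where a down-spin has hopped from $k{+}1$ to $k$, which is nonzero exactly when this gives a valid hard-core configuration $\y$ with $d(\x,\y)=1$; summing the two adjoint terms over bonds reproduces $A$ with coefficient $-\tfrac{1}{\Delta}\cdot\tfrac12\cdot 2=-\tfrac{1}{g}$. Multiplying by $2g$ recovers the coefficient $-1$ in front of $A$ in \eqref{hamiltonian}.

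For the interaction and boundary field, a case analysis shows $\tfrac14\mathds{1}-S^z_k\otimes S^z_{k+1}$ acts as $\tfrac12$ on bonds with exactly one down-spin and as $0$ otherwise, so $\sum_{k=1}^{N-1}(\tfrac14\mathds{1}-S^z_k S^z_{k+1})\,\mathcal{U}\delta_\x=\tfrac12 D(\x)\,\mathcal{U}\delta_\x$ where $D(\x)$ is the number of interior domain walls. If $b(\x)\in\{0,1,2\}$ denotes the number of down-spins among the two boundary sites, then each cluster contributes two natural boundaries but a boundary-touching cluster loses one interior wall, giving $D(\x)=2k(\x)-b(\x)$. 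The term $\tfrac{\gamma}{2}(\mathds{1}-S^z_1-S^z_N)$ acts by multiplication with $\tfrac{\gamma}{2}b(\x)$, so for $\gamma=1$ the sum equals $k(\x)$; multiplying by $2g$ yields the eigenvalue $2g\,k(\x)$ of $2gU$. This is the only step where the specific tuning $\gamma=1$ is used, and it is the main (albeit elementary) obstacle: verifying that interior-wall counting together with the boundary field precisely recover the cluster count regardless of whether clusters touch the boundary.

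Finally, for the disorder, $(\tfrac12\mathds{1}-S^z_k)$ has eigenvalues $1$ on $|{\downarrow}\rangle$ and $0$ on $|{\uparrow}\rangle$, so $\tfrac{\lambda}{2\Delta}\sum_{k=1}^N\omega(k)(\tfrac12\mathds{1}-S^z_k)\,\mathcal{U}\delta_\x = \tfrac{\lambda}{2g}\bigl(\sum_{j=1}^n\omega(x_j)\bigr)\,\mathcal{U}\delta_\x$; after multiplication by $2g$ this matches $\lambda V$. Combining the three contributions gives \eqref{equivalence_hamiltonians} on the dense basis $\{\delta_\x\}$, which extends by linearity to all of $\hln$.
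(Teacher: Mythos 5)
Your construction of $\mathcal{U}$ and your overall verification strategy coincide with the paper's proof in Appendix~B: the same down-spin/hard-core dictionary, a rewriting of the XXZ terms via ladder and number operators, and a term-by-term comparison on basis vectors. Your treatment of the interaction and boundary field is correct and in fact more explicit than the paper's: the interior domain-wall count $D(\x)=2k(\x)-b(\x)$ together with the boundary term contributing $\tfrac{\gamma}{2}b(\x)$ does recover the cluster number $k(\x)$ exactly at $\gamma=1$, and the disorder term is handled correctly.

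The one flaw is the bookkeeping of the hopping coefficient, which as written is internally inconsistent. With the normalization of the paper ($S^{+}e^{-}=e^{+}$, $S^{-}e^{+}=e^{-}$), the operator $S^{+}_{k}S^{-}_{k+1}$ maps the relevant product state to the hopped state with coefficient $1$, not $\tfrac12$; moreover, for a fixed pair of $\ell^1$-neighboring configurations only \emph{one} of the two terms $S^{+}_{k}S^{-}_{k+1}$, $S^{-}_{k}S^{+}_{k+1}$ contributes to that matrix element, so the extra factor $2$ from ``summing the two adjoint terms'' is not justified. Your stated coefficient $-\tfrac{1}{\Delta}\cdot\tfrac12\cdot 2=-\tfrac1g$ would, after multiplication by $2g$, give $-2$ rather than the claimed $-1$; if that were the true coefficient, the identity \eqref{equivalence_hamiltonians} would fail. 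The correct count is
\begin{equation*}
	-\frac{1}{\Delta}\bigl(S^{x}_{k}S^{x}_{k+1}+S^{y}_{k}S^{y}_{k+1}\bigr)
		= -\frac{1}{2\Delta}\bigl(S^{+}_{k}S^{-}_{k+1}+S^{-}_{k}S^{+}_{k+1}\bigr)\,,
\end{equation*}
whose matrix element between the spin states of two configurations at $\ell^1$-distance one equals $-\tfrac{1}{2g}$; multiplying by $2g$ then yields exactly the coefficient $-1$ of the adjacency operator $A$ in \eqref{hamiltonian}. With this correction your argument is complete and matches the paper's.
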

For the reader's convenience, a proof is given in Appendix \ref{app_XXZ_relation}.
\bigskip\\
\noindent Many-body localization (MBL) has received extensive attention from the physics community~\cite{NHuse,VoAlt,IRS17} including early on the case of the {XXZ} quantum spin chain  \cite{Prosen08}. 
The definition of the term MBL varies from Fock-space localization to local integrals of motions (LIOM's). 
It is an interesting question to further clarify the validity and
relations of these notions even in the above situation.

\section{Controlling cluster break-up}\label{Sec2}

The main technical difference between the situation covered by \cite{AiW2009} and the present set-up lies in the presence of an additional spectral threshold. The spectrum of the Hamiltonian restricted to at least two clusters, $\mathcal{Q}^{(2)}H_\Lambda \mathcal{Q}^{(2)}$, is energetically higher than the regime of interest defined by essentially  one cluster, cf.~Lemma~\ref{lemma_thresholds}. By the Combes-Thomas estimate stated below, the corresponding Green function thus decays deterministically in this regime. To avoid cluttered expressions, we henceforth use the two shorthand notations:
\begin{equation*}
	H_\Lambda^{(k)} := \mathcal{Q}^{(k)}H_\Lambda \mathcal{Q}^{(k)}\,,
		\qquad G_\Lambda^{(k)}(\x,\y;z):=\langle \delta_\x,\bigl(H_\Lambda^{(k)}-z\bigr)^{-1}\delta_\y\rangle\,.
\end{equation*}
The operator $H_\Lambda^{(k)}-z$ is implicitly understood to act on the subspace $\mathcal{Q}^{(k)}\mathcal{H}$. The main result of this section is the following theorem.
\begin{theorem}\label{theorem_combes_thomas}
	For any $g >1$, $\mu_\textsc{t}>0$ and $E\geqslant 0$ satisfying
	\begin{equation}\label{condition_combes_thomas}
		4g-E > 12\,e^{\mu_\textsc{t}}\,,
	\end{equation}
	there exists a constant $C_\textsc{t} \equiv C_\textsc{t}(g,\mu_\textsc{t},E) \in(0, \infty)$ such that for all $ n \geqslant 2 $, $ \Lambda $, $\lambda\geqslant 0$ and all $\x, \y\in \chiln \backslash  \mathcal{C} $:
	\begin{equation}\label{combes_thomas}
		\vert G^{(2)}_\Lambda(\x,\y; E) \vert \leqslant C_\textsc{t} e^{-\mu_\textsc{t}d(\x,\y)}\, . 
	\end{equation}
\end{theorem}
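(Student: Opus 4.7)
I would prove Theorem~\ref{theorem_combes_thomas} by a Combes--Thomas argument on the subspace $\mathcal{Q}^{(2)}\hln$. Lemma~\ref{lemma_thresholds} with $k=2$ guarantees the strict operator inequality $H_\Lambda^{(2)} - E \geq 4(g-1) - E > 0$ under the hypothesis~(\ref{condition_combes_thomas}), so the restricted resolvent is well defined. Fix an arbitrary $\y \in \chiln \setminus \mathcal{C}$ as the reference point and introduce the diagonal multiplication operator $U := e^{\mu_\textsc{t}\,d(\cdot,\y)}$. Since $2gU$, $\lambda V$ and $U$ are all diagonal in the basis $\{\delta_\x\}$, conjugation leaves the diagonal part untouched and only twists the hopping:
\[
U\bigl(H_\Lambda^{(2)}-E\bigr)U^{-1} \;=\; P - \widetilde{A},\qquad P := 2gU + \lambda V - E,\quad \widetilde{A} := U A^{(2)} U^{-1}.
\]
Each nearest-neighbour matrix element of $A^{(2)}$ is multiplied by a factor in $\{e^{\pm\mu_\textsc{t}},1\}$, owing to $|d(\x,\y)-d(\x',\y)| \leq 1$ whenever $d(\x,\x')=1$.

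The plan is to invert $P - \widetilde{A}$ on $\mathcal{Q}^{(2)}\hln$ via the symmetric Neumann factorisation
\[
\bigl(P-\widetilde{A}\bigr)^{-1} \;=\; P^{-1/2}\bigl(I - P^{-1/2}\widetilde{A}\, P^{-1/2}\bigr)^{-1} P^{-1/2}.
\]
Since $U \delta_\y = \delta_\y$ and $U^{-1}\delta_\x = e^{-\mu_\textsc{t} d(\x,\y)}\delta_\x$, extracting the matrix element yields
\[
G_\Lambda^{(2)}(\x,\y;E) \;=\; e^{-\mu_\textsc{t} d(\x,\y)}\,\bigl\langle \delta_\x, \bigl(P - \widetilde{A}\bigr)^{-1}\delta_\y\bigr\rangle,
\]
so (\ref{combes_thomas}) follows at once from Cauchy--Schwarz and a uniform operator-norm bound $\|(P-\widetilde{A})^{-1}\| \leq C_\textsc{t}$.

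The pivotal technical estimate is thus the weighted Schur bound $\bigl\|P^{-1/2}\widetilde{A}P^{-1/2}\bigr\| < 1$ uniformly in $n,\Lambda,\lambda$. Two structural features of $A^{(2)}$ are exploited: (i) a single hop changes $k(\x)$ by at most one, since hard-core repulsion forbids motion of interior cluster particles, giving a block-tridiagonal structure of $A^{(2)}$ relative to $\mathcal{Q}^{(2)} = \bigoplus_{k\geq 2}\mathcal{P}^{(k)}$; (ii) any $\x\in\mathcal{P}^{(k)}$ has at most $2k$ neighbours in $\chiln$, corresponding to the two boundary particles of each of the $k$ clusters. Paired with the diagonal estimates $P(\x)\geq 2g\,k(\x)-E$ on $\mathcal{P}^{(k)}$ and $P(\x')\geq 2g\max(k-1,2)-E$ for the admissible neighbours $\x'\in\mathcal{Q}^{(2)}$, the Schur row sums of $P^{-1/2}\widetilde A P^{-1/2}$ are of the order
\[
\frac{2k\,e^{\mu_\textsc{t}}}{\sqrt{(2gk-E)\bigl(2g\max(k-1,2)-E\bigr)}},
\]
which is bounded uniformly in $k \geq 2$ thanks to the cancellation between the degree $2k$ and the diagonal growth $2gk$. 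Taking the maximum over $k$ and combining with the symmetric column-sum estimate yields a bound of the form $c\,e^{\mu_\textsc{t}}/(4g-E)$; the combinatorial constant $c$ is exactly what the hypothesis~(\ref{condition_combes_thomas}) is designed to absorb.

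The main obstacle is the uniformity in $n$ of this Schur estimate. A direct Combes--Thomas bound for $A^{(2)}$ alone would produce an $n$-dependent coupling, since the local degree of a configuration in $\chiln$ can be as large as $2n$. The \emph{energetically motivated} reweighting by $P^{-1/2}$ is tailored precisely to balance the linear growth $2k$ of the local degree against the linear growth $2gk$ of the cluster-interaction energy on $\mathcal{P}^{(k)}$; once this balance has been accounted for, the remaining steps --- the Neumann summation, the Cauchy--Schwarz extraction of the matrix element, and the monotonicity under $\lambda V \geq 0$ --- are routine.
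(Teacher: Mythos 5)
Your argument is correct, but it takes a genuinely different route from the paper. The paper proves Theorem~\ref{theorem_combes_thomas} through Lemma~\ref{lem:CT}: after the same conjugation by $M_\y=e^{\mu_\textsc{t}d(\cdot,\y)}$, it runs an induction over the cluster sectors $\mathcal{P}^{(k)}\hln$ (from the maximal cluster number down to $k=2$), using the Schur complement formula to control the diagonal blocks and a resolvent identity to propagate the bound to off-diagonal blocks $\mathcal{P}^{(k)}M_\y R^{(2)}M_\y^{-1}\mathcal{P}^{(j)}$; the $n$-uniformity comes from comparing the conjugation error and self-energy, of size $O(k e^{\mu_\textsc{t}})$, with the sector gaps $\delta_k(E)=2k(g-e^{\mu_\textsc{t}})-E$. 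You achieve the same balance in one shot: the symmetric factorisation $(P-\widetilde A)^{-1}=P^{-1/2}(I-P^{-1/2}\widetilde A P^{-1/2})^{-1}P^{-1/2}$ with the energetic weight $P^{-1/2}$ absorbs the linear growth of the local degree ($\leqslant 2k$ neighbours for a $k$-cluster configuration, hopping changing $k$ by at most one) into the linear growth $2gk-E$ of the diagonal, so a single Holmgren/Schur-test estimate replaces both the induction and the Schur complement. The one step you leave implicit is the uniform-in-$k$ evaluation of your row sums; it does hold, e.g.\ via $2gk-E\geqslant\tfrac{k}{2}(4g-E)$ for $E\geqslant 0$, $k\geqslant 2$, which gives $\|P^{-1/2}\widetilde A P^{-1/2}\|\leqslant 4\sqrt{3/2}\,e^{\mu_\textsc{t}}/(4g-E)<1$ under \eqref{condition_combes_thomas} (indeed your argument needs only $4g-E>4\sqrt{3/2}\,e^{\mu_\textsc{t}}$, so it is slightly less wasteful than the paper's condition). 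What the paper's lemma buys in exchange is explicit block-wise resolvent bounds between arbitrary sectors $\mathcal{P}^{(k)},\mathcal{P}^{(j)}$, which are not needed for the statement of the theorem itself; your proof is more elementary and yields a comparable constant $C_\textsc{t}$ depending only on $(g,\mu_\textsc{t},E)$, uniformly in $n$, $\Lambda$ and $\lambda\geqslant 0$.
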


\begin{remarks}
	1.~For any  $2\leqslant k\leqslant n$, the condition (\ref{condition_combes_thomas}) entails
	\begin{equation}\label{condition_delta_k}
		\delta_k(E) := 2k(g-e^{\mu_\textsc{t}}) -E \, > 4k e^{\mu_\textsc{t}}\,.
	\end{equation}
	As an immediate consequence of Lemma \ref{lemma_thresholds}, this definition implies the following inequality on $\mathcal{Q}^{(k)}\mathcal{H}$:
	\begin{equation}\label{eq_threshold_k}
		H_\Lambda^{(k)} - E > \delta_k(E) + 2k(e^{\mu_\textsc{t}}-1)\,.
	\end{equation}
	\noindent 2.~The subsequent proof yields:
	\begin{equation*}
	C_\textsc{t} \equiv C_\textsc{t}(g,\mu_\textsc{t},E)= \frac{2}{\delta_2(E)}\left(1-\frac{8e^{\mu_\textsc{t}}}{\delta_2(E)}\right)^{-1}  \in \left( 2/\delta_2(E) , \infty\right) \,,
	\end{equation*}
	where the inclusion is a direct consequence of (\ref{condition_delta_k}). For $g\to\infty$ and fixed $(\mu_\textsc{t}, E )$ the bound $C_\textsc{t}$ tends to $0$.
	\medskip\\
	\noindent 3.~Energies $E$ satisfying (\ref{condition_combes_thomas}) are always below the two-cluster threshold $4(g-1)$, but may be chosen arbitrarily close to it, provided $g$ is sufficiently large. Namely, for $E<4\alpha(g-1)$ with $\alpha<1$, condition~(\ref{condition_combes_thomas}) is satisfied if $(1-\alpha)g +\alpha>3e^{\mu_\textsc{t}}$.
\end{remarks}
We follow the basic  idea for the standard Combes-Thomas bound in \cite{A94} augmented by an inductive analysis using the Schur complement formula on $ k $-cluster subspaces $ \mathcal{P}^{(k)} \hln $. We fix $ \y \in \chiln$  and consider the following (bounded and invertible) multiplication operator $ M_\y \delta_\x := e^{\mu_\textsc{t}d(\x,\,\y)}\delta_\x $ on $ \hln $, which commutes with all of the projections $\mathcal{P}^{(k)}, \mathcal{Q}^{(k)}$. 
\begin{lemma}\label{lem:CT}
	In the set-up of Theorem~\ref{theorem_combes_thomas}, for any $ j,k \in \{ 2, \dots , n \} $ and  any $ \y \in \chiln$:
	\begin{align}
		&\Vert \mathcal{P}^{(k)}M_\y R^{(l)}(E)M_\y^{-1} \mathcal{P}^{(j)}\Vert  
			\leqslant \frac{2}{\delta_l(E)} \,,  \qquad  l\in\lbrace j,k\rbrace \,, \label{CT1}\\
		&\Vert \mathcal{P}^{(k)} M_\y R^{(2)}(E) M_\y^{-1}\mathcal{P}^{(j)}\Vert \leqslant C_\textsc{t}\,, \label{CT2}
	\end{align}
	where $ R^{(k)}(E) := \big( H^{(k)}_\Lambda - E \big)^{-1} $. 
\end{lemma}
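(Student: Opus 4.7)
I proceed by downward induction on $l$ from $l = n$ down to $l = 2$. At each level I use the Schur complement of $H_\Lambda^{(l)} - E$ with respect to the decomposition $\mathcal{Q}^{(l)} = \mathcal{P}^{(l)} \oplus \mathcal{Q}^{(l+1)}$ to peel off one cluster-number sector. Two observations underlie the argument. First, since both $U$ and $V$ are diagonal in the configuration basis, only the kinetic term $-A$ contributes to $[M_\y, H_\Lambda^{(l)}] M_\y^{-1}$ and to the off-diagonal blocks, which take the form $-\mathcal{P}^{(l)} A \mathcal{P}^{(l+1)}$ since hops change the cluster number by at most one. Second, on the $l$-cluster sector $\mathcal{P}^{(l)}$ each configuration has at most $2l$ hopping neighbors in $\chiln$: interior particles are blocked by the hard-core constraint, so only the (at most two) boundary particles of each cluster can move. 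The same counting shows the off-diagonal hopping block $A_{PQ} := \mathcal{P}^{(l)} A \mathcal{P}^{(l+1)}$ is sparse with row and column sums of order $l$, whence a Schur test gives $\|A_{PQ}\| = O(l)$ and $\|M_\y A_{PQ} M_\y^{-1}\| = O(l\,e^{\mu_\textsc{t}})$.

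\textbf{Base case $l = n$.} Here $\mathcal{Q}^{(n)} = \mathcal{P}^{(n)}$ and there is no Schur step. Writing $M_\y (H_\Lambda^{(n)} - E) M_\y^{-1} = (H_\Lambda^{(n)} - E) + B$, the commutator $B$ obeys $\|B\| \leq 2n(e^{\mu_\textsc{t}} - 1)$ by the sparsity observation. Combining this with the threshold estimate~(\ref{eq_threshold_k}), which gives $H_\Lambda^{(n)} - E \geq \delta_n(E) + 2n(e^{\mu_\textsc{t}} - 1)$ and thus provides exactly the cushion needed to absorb $B$, the Neumann-series form of the Combes-Thomas argument yields $\|M_\y R^{(n)} M_\y^{-1}\| \leq 1/\delta_n(E) \leq 2/\delta_n(E)$.

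\textbf{Inductive step.} Assume~(\ref{CT1}) has been established at level $l + 1$. The Schur complement formula reads
\[
R^{(l)}_{PP} = \bigl(H^{(l)}_{PP} - E - A_{PQ} R^{(l+1)} A_{QP}\bigr)^{-1},
\]
with analogous expressions for the PQ, QP, and QQ blocks in terms of $R^{(l)}_{PP}$, $R^{(l+1)}$, and $A_{PQ}$. Conjugation by $M_\y$, which commutes with $\mathcal{P}^{(l)}$ and $\mathcal{Q}^{(l+1)}$, followed by insertion of $M_\y M_\y^{-1}$ between every pair of factors, reduces the task to bounding (i) a direct Combes-Thomas perturbation on the PP block using the commutator bound $2l(e^{\mu_\textsc{t}} - 1)$ together with $H^{(l)}_{PP} - E \geq \delta_l(E) + 2l(e^{\mu_\textsc{t}} - 1)$ on $\mathcal{P}^{(l)}$ (inherited from Lemma~\ref{lemma_thresholds}); and (ii) the Schur correction $(M_\y A_{PQ} M_\y^{-1})(M_\y R^{(l+1)} M_\y^{-1})(M_\y A_{QP} M_\y^{-1})$, whose norm is $O(l^2 e^{2\mu_\textsc{t}} / \delta_{l+1}(E))$ by the inductive hypothesis. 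Under the hypothesis~(\ref{condition_combes_thomas}), which implies $\delta_l(E) > 4l\,e^{\mu_\textsc{t}}$ for every $l \geq 2$, this correction absorbs at most half of the gap $\delta_l(E)$, leaving $\|M_\y R^{(l)}_{PP} M_\y^{-1}\| \leq 2/\delta_l(E)$. Substituting this into the Schur formulas for the PQ and QP blocks yields~(\ref{CT1}) at level $l$.

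\textbf{The bound~(\ref{CT2}) and the main obstacle.} The remaining case $k, j \geq 3$ in~(\ref{CT2}) follows from the Schur expression for the QQ block,
\[
R^{(l)}_{QQ} = R^{(l+1)} + R^{(l+1)} A_{QP} R^{(l)}_{PP} A_{PQ} R^{(l+1)},
\]
iterated starting from $l = 2$. Summing the resulting geometric series of Schur corrections produces the explicit constant $C_\textsc{t} = \frac{2}{\delta_2(E)}(1 - 8 e^{\mu_\textsc{t}}/\delta_2(E))^{-1}$ announced in the remark. The main obstacle is the combinatorial counting of merging and splitting hops giving the sharp linear-in-$l$ bound on $\|A_{PQ}\|$; the numerical constant $12$ in~(\ref{condition_combes_thomas}) is tuned precisely so that each Schur correction fits within half of the local gap $\delta_l(E)$ for every $l \geq 2$, which is what allows the induction to close.
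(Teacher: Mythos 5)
Your treatment of \eqref{CT1} is essentially the paper's argument: the base case, the downward Schur-complement induction on the diagonal block using the threshold \eqref{eq_threshold_k} and the Schur-test bound on the conjugated hopping blocks, and the off-diagonal cases via the QP/PQ block formulas (the paper phrases this last step through the resolvent identity \eqref{eq:reseqCT}, which is the same relation), with the numerics closing because $8l e^{\mu_\textsc{t}}/\delta_{l+1}\leqslant 2$ under \eqref{condition_delta_k}. The genuine gap is in your proof of \eqref{CT2}. You iterate the QQ-block Schur formula in the \emph{resolvent level}, replacing $R^{(m)}$ by $R^{(m+1)}$ plus a correction; but the leading term in that recursion carries no contraction factor, so the corrections accumulate purely additively. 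Bounding each correction with \eqref{CT1} and the Schur bound $\Vert M_\y T_m M_\y^{-1}\Vert\leqslant 2me^{\mu_\textsc{t}}$ gives, for $j=\min\{j,k\}$,
\begin{equation*}
	\Vert \mathcal{P}^{(k)} M_\y R^{(2)}(E) M_\y^{-1}\mathcal{P}^{(j)}\Vert
		\;\leqslant\; \frac{2}{\delta_j(E)} \;+\; \sum_{m=2}^{j-1}\frac{32\,m^2 e^{2\mu_\textsc{t}}}{\delta_m(E)\,\delta_{m+1}(E)^2}\,,
\end{equation*}
and since $\delta_m(E)$ grows only linearly in $m$, the summands decay like $1/m$: the sum grows like $\ln j$, i.e.\ like $\ln n$ in the worst case. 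This is not a geometric series, and it does not yield the $n$-independent constant $C_\textsc{t}=\frac{2}{\delta_2(E)}\bigl(1-8e^{\mu_\textsc{t}}/\delta_2(E)\bigr)^{-1}$ you announce; the uniformity in $n$ is precisely what the lemma (and its downstream use) requires.

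The paper avoids this by recursing in the \emph{sector index} while keeping the resolvent fixed at level $2$: from the adjoint of \eqref{eq:reseqCT} with $k=j$, $m=2$, one gets $\Vert\mathcal{P}^{(k)} M_\y R^{(2)}M_\y^{-1}\mathcal{P}^{(j)}\Vert \leqslant \frac{2}{\delta_2(E)} + r\,\Vert\mathcal{P}^{(k)} M_\y R^{(2)}M_\y^{-1}\mathcal{P}^{(j-1)}\Vert$ with $r=8e^{\mu_\textsc{t}}/\delta_2(E)<1$, because the step coefficient $4(j-1)e^{\mu_\textsc{t}}/\delta_j(E)$ is bounded by $r$ uniformly in $j$; iterating down to $j=2$ genuinely produces the geometric series and the stated $C_\textsc{t}$. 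Alternatively, your scheme could be repaired by strengthening \eqref{CT1} to record the off-diagonal decay $r^{\,k-l}\,2/\delta_l(E)$ (which the off-diagonal iteration in fact provides), so that the level-$m$ corrections are geometrically suppressed in $j-m$ and $k-m$; but with only the stated bounds \eqref{CT1}, your final step does not establish \eqref{CT2}.
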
 

\begin{proof}[Proof of Theorem~\ref{theorem_combes_thomas}] 
	We assume without loss of generality that $ \x \in \mathcal{C}^{(k)}_\Lambda $ and $ \y \in \mathcal{C}^{(j)}_\Lambda $. Since $G^{(2)}_\Lambda(\x,\y;E) \, e^{\mu_\textsc{t}d(\x,\y)}= \bigl\langle\delta_\x, \mathcal{P}^{(k)} M_\y R^{(2)}(E)  M_\y^{-1} \mathcal{P}^{(j)} \delta_\y\bigr\rangle $, the claim is immediate from~\eqref{CT2}.
\end{proof}
It hence remains to give a proof of Lemma \ref{lem:CT}.

\begin{proof}[Proof of Lemma~\ref{lem:CT}]
	Since $ \y $ is fixed throughout the proof, we drop it from the notation, $ M \equiv M_\y $. Note that (\ref{CT1}) is only non-trivial for $l=j\leqslant k$ or $l=k\leqslant j$. It is first proven for $k=j=l$, inductively from the maximal number of clusters down to $k = 2$. We spell out the argument in case the maximal number of clusters is $ k = n $, corresponding to $ n \leqslant L +1 $. If $ L \leqslant n $, a similar argument applies.

	\noindent\textit{Base case, $ k = n $.} The operator $ B^{(n)} := M H_\Lambda^{(n)} M^{-1}-H_\Lambda^{(n)}$ is bounded through the Schur bound: 
	\begin{align}\label{bound_bn}
		\Vert B^{(n)}\Vert &\leqslant \sqrt{\sup_\x\sum_{\x'}\vert B^{(n)}(\x,\x')\vert}
			\sqrt{  \sup_{\x'}\sum_{\x}\vert B^{(n)}(\x,\x')\vert } \notag \\
		&\leqslant \sup_\x\sum_{\x'}\vert H_\Lambda^{(n)}(\x,\x')\vert\bigl(e^{\mu_\textsc{t}d(\x,\x')}-1\bigr)
			\leqslant 2n \bigl(e^{\mu_\textsc{t}}-1\bigr)\, . 
	\end{align}
	The last inequality used that only neighboring configurations $\x,\x'$ contribute to the sum, and for such: $ |H_\Lambda^{(n)}(\x,\x')| = | \langle \delta_\x, H_\Lambda^{(n)} \delta_{\x'} \rangle | = 1 $. Moreover,  for any given $\x\in\mathcal{C}^{(n)}_\Lambda$, there are at most $2n$ neighboring $\x'$. By~\eqref{eq_threshold_k}, we thus have $ {\rm dist}(E,\sigma(H_\Lambda^{(n)}))- \| B^{(n)}\| \geqslant \delta_n(E) > 0 $ and hence the operator $ M ( H_\Lambda^{(n)} - E ) M^{-1} =  H_\Lambda^{(n)} - E +  B^{(n)} $ is invertible on $\mathcal{Q}^{(n)}\mathcal{H}$ with bounded inverse:
	\begin{align*}
		\left\|M R^{(n)}(E) M^{-1}\right\| &=\,  \left\|\left(  H_\Lambda^{(n)} - E + B^{(n)}\right)^{-1}\right\|\\
			&\leqslant\, \left( {\rm dist}(E,\sigma(H_\Lambda^{(n)}))- \| B^{(n)}\| \right)^{-1} 
			\leqslant \delta_n(E)^{-1} \, . 
	\end{align*}
	\medskip\\
	\noindent\textit{Induction step, $ k+1 \to k $.} Assume that~(\ref{CT1}) holds for $j+1=k+1\leqslant n$. Using the orthogonal decomposition of $\mathcal{Q}^{(k)}$ into $\mathcal{P}^{(k)} + \mathcal{Q}^{(k+1)}$, the Schur complement formula yields
	\begin{align*}
		\mathcal{P}^{(k)} M R^{(k)}&(E) M^{-1} \mathcal{P}^{(k)}\\
			&=\, \mathcal{P}^{(k)} M \bigl(\mathcal{P}^{(k)}(H_\Lambda-E)\mathcal{P}^{(k)} - S^{(k)}(E)\bigr)^{-1}
				M^{-1}\mathcal{P}^{(k)} \notag \\
			&=\, \mathcal{P}^{(k)} \bigl(\mathcal{P}^{(k)}(H_\Lambda-E)\mathcal{P}^{(k)} + B^{(k)}(E)\bigr)^{-1}
				\mathcal{P}^{(k)}\,,
	\end{align*}
	where 
	\begin{align}
		S^{(k)}(E) &:=\, T_k R^{(k+1)}(E) T_k^{*}\,,
			\quad\textnormal{with}\quad T_k := \mathcal{P}^{(k)} H_\Lambda \mathcal{P}^{(k+1)}\,, \notag \\
		B^{(k)}(E) &:=\, M \mathcal{P}^{(k)}H_\Lambda \mathcal{P}^{(k)}M^{-1} - \mathcal{P}^{(k)}H_\Lambda \mathcal{P}^{(k)}
			- M S_\Lambda^{(k)}(E) M^{-1}\,.
	\label{definition_bk}
	\end{align}
	In the definition of $S^{(k)}(E)$, the operator $T_k$ and its adjoint $T_k^{*}$ arise from $\mathcal{P}^{(k)}H_\Lambda \mathcal{Q}^{(k+1)} = \mathcal{P}^{(k)}H_\Lambda \mathcal{P}^{(k+1)}$, since the hopping induced by $H_\Lambda$ only connects configurations whose number of clusters are at most one apart.
	
	By (\ref{eq_threshold_k}), we have $\mathcal{P}^{(k)}(H_\Lambda-E)\mathcal{P}^{(k)} \geqslant \delta_k(E) + 2k(e^{\mu_\textsc{t}}-1)$ on $\mathcal{P}^{(k)}\mathcal{H}$. The claimed bound now follows (by reasoning analogous to the base case) if $\delta_k(E) + 2k(e^{\mu_\textsc{t}}-1) -\Vert B^{(k)}(E)\Vert \geqslant \delta_k(E)/2 $, or equivalently, $\Vert B^{(k)}(E)\Vert \leqslant 2k(e^{\mu_\textsc{t}}-1) +\delta_k(E)/2 $. In fact, splitting (\ref{definition_bk}) into two additive terms and arguing as in (\ref{bound_bn}), we have on the one hand,
	\begin{align*}
		\bigl\Vert M \mathcal{P}^{(k)}H_\Lambda \mathcal{P}^{(k)}M^{-1}-\mathcal{P}^{(k)}H_\Lambda \mathcal{P}^{(k)}\bigr\Vert  
			&\leqslant\, \sup_\x\sum_{\x'}\vert H_\Lambda^{(k)}(\x,\x')\vert\bigl(e^{\mu_\textsc{t}d(\x,\x')}-1\bigr)\\
			&\leqslant\,  2(k-1)(e^{\mu_\textsc{t}}-1)\,,
	\end{align*}
	and on the other hand,
	\begin{equation}\label{eq:SchurH}
		\bigl\Vert M T_k M^{-1}\bigr\Vert \, , \; \bigl\Vert M T_k^* M^{-1}\bigr\Vert \leqslant 2k e^{\mu_\textsc{t}}\,. 
	\end{equation} 
	Together with the induction hypothesis this guarantees
	\begin{equation*}
		\bigl\Vert M S_\Lambda^{(k)}(E)M^{-1}\bigr\Vert \leqslant (2k e^{\mu_\textsc{t}})^2 \frac{2}{\delta_{k+1}(E)}
			\leqslant \frac{(4k e^{\mu_\textsc{t}})^2}{2\delta_{k+1}(E)} 
			< \frac{\delta_k(E)}{2}\, , 
	\end{equation*}
	where the last inequality is by (\ref{condition_delta_k}). This concludes the proof of~\eqref{CT1} for $j=k$.
	\bigskip\\	
	\noindent In the remainder of the proof, we assume $( 2 \leqslant)\,j \leqslant k$. The other case follows analogously. The arguments are by iteration, based on the following resolvent formula 
	for $ (2 \leqslant) \, m < k\, $:
	\begin{equation}\label{eq:reseqCT}
		\mathcal{P}^{(k)} R^{(m)}(E)
			=\mathcal{P}^{(k)} R^{(k)}(E)-\mathcal{P}^{(k)} R^{(k)}(E)\, T_{k-1}^* \,\mathcal{P}^{(k-1)}\, R^{(m)}(E) \, , 
	\end{equation} 
	where we used that
	\begin{equation*}
		\mathcal{Q}^{(k)} \left( H_\Lambda^{(m)} - H_\Lambda^{(k)} \right)
			= \mathcal{Q}^{(k)} H_\Lambda \left( \mathcal{Q}^{(m)} - \mathcal{Q}^{(k)} \right)
			=  \sum_{j=m}^{k-1} \mathcal{Q}^{(k)} H_\Lambda \mathcal{P}^{(j)} =  T_{k-1}^*\,.
	\end{equation*}
	For a proof of~\eqref{CT1} for $ 2 \leqslant l= j < k $, we set $m=j$ in (\ref{eq:reseqCT}) and note that the first term on the right hand side does not contribute. Together with the Schur bound~\eqref{eq:SchurH}, (\ref{CT1}) with $j=k$ and~\eqref{condition_delta_k}, this yields: 
	\begin{align}
		\bigl\Vert\mathcal{P}^{(k)} M R^{(j)}(E)M^{-1} &\mathcal{P}^{(j )}  \bigr\Vert 
			=\bigl\Vert\mathcal{P}^{(k)}M R^{(k)}(E)T^{*}_{k-1}\mathcal{P}^{(k-1)}R^{(j)}(E)M^{-1}\mathcal{P}^{(j)}\bigr\Vert 
				\notag\\
			&\leqslant\,\frac{4(k-1) e^{\mu_\textsc{t}}}{\delta_{k-1}(E)} \frac{\delta_{k-1}(E)}{\delta_{k}(E)} 
	   			\bigl\Vert\mathcal{P}^{(k-1)} M R^{(j)}(E)M^{-1}  \mathcal{P}^{(j)}\bigr\Vert\notag\\
			&\leqslant\, r  \, \bigl\Vert \mathcal{P}^{(k-1)} M R^{(j)}(E)M^{-1}  \mathcal{P}^{(j )}  \bigr\Vert \,,
	 			\qquad  r:=  \frac{8 e^{\mu_\textsc{t}}}{\delta_{2}(E)} < 1 \, .\notag 
	 \end{align}
	 Since $r<1$, iteration reduces the claim to the case $ k = j $ and thus concludes the proof of~\eqref{CT1}. 
	 
	 For a proof of~\eqref{CT2}, we only consider $j,k\geqslant 3$, since the other cases are covered by~\eqref{CT1}, and again assume $j\leqslant k$. By the adjoint of the resolvent formula (\ref{eq:reseqCT}) with $k=j$ and $m=2$, combined with (\ref{CT1}) and the Schur bound~(\ref{eq:SchurH}), we obtain
	 \begin{align*}
	 	\bigl\Vert\mathcal{P}^{(k)} M R^{(2)}&(E)M^{-1} \mathcal{P}^{(j)}\bigr\Vert\\
	 		&\leqslant\, \frac{2}{\delta_j(E)}
	 			+\frac{4(j-1) e^{\mu_\textsc{t}}}{\delta_{j-1}(E)} \frac{\delta_{j-1}(E)}{\delta_{j}(E)}
	 			\bigl\Vert\mathcal{P}^{(k)} M R^{(2)}(E)M^{-1}  \mathcal{P}^{(j-1)}\bigr\Vert\notag \\
	 		&\leqslant\,\frac{2}{\delta_2(E)}
	 			+r\,\bigl\Vert\mathcal{P}^{(k)} M R^{(2)}(E)M^{-1} \mathcal{P}^{(j-1)}\bigr\Vert \,.
	 \end{align*}
	 Iteration until $j-1=2$ yields the bound $\bigl(2/\delta_2(E)\bigr)(1-r)^{-1}$.
\end{proof}

\section{Bound on the Green function's fractional moments}\label{section_Green_function}

The main result, Theorem~\ref{theorem_exponential_cor}, will follow by proving bounds on fractional moments of the Green function, 
\begin{equation*}
	G_\Lambda(\x,\y;z) :=\langle \delta_\x, \bigl(H_\Lambda-z\bigr)^{-1}\delta_\y\rangle\,.
\end{equation*}
Our assumptions guarantee that such moments are finite. In fact, integration over just (one or) two variables associated with sites $u, v$  suffices provided these sites belong to the considered configurations (a fact, which we denote by  $u\in\x$ and $v\in\y$). In the following such conditional expectations will be denoted by $  \mathbb{E}\bigl[ \cdot \:\vert\:\omega_{\neq \lbrace u,v\rbrace}\bigr] := \int(\cdot)\varrho(\omega(u))\,\varrho(\omega(v))\mathrm{d}\omega(u)\,\mathrm{d}\omega(v)$. 
\begin{proposition}[Theorem~2.1 in \cite{AiW2009}] 
	Let $s \in (0,1)$. Under Assumptions \textbf{A1}-\textbf{A2}, there exists a constant $ C_s < \infty $ such that for all $n\in\mathbb{N}$, $\Lambda$, and all $\x,\y\in\chiln$ with $u\in\x$ and $v\in\y$:
	\begin{equation}\label{assumption_a2}
		\sup_{z \in \mathbb{C} } \; \mathbb{E}\bigl[\vert G_\Lambda(\x,\y;z)\vert^s \:\vert\: 
			\omega_{\neq \lbrace u,v\rbrace}\bigr]\leqslant \frac{C_s}{\lambda^s}\,.
	\end{equation}
\end{proposition}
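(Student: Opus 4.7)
The statement is verbatim Theorem~2.1 of \cite{AiW2009}, and the proof given there transfers to the hard-core setting essentially without change. The plan is to outline the strategy.

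The key observation is that, thanks to the hard-core constraint, the number operators $N_\alpha\delta_\x:=\mathds{1}\{\alpha\in\x\}\delta_\x$ are orthogonal projections on $\hln$, so that
\begin{equation*}
\lambda V = \lambda\omega(u) N_u + \lambda\omega(v) N_v + K,
\end{equation*}
with $K$ independent of $\omega(u),\omega(v)$. Moreover $u\in\x$ and $v\in\y$ give $N_u\delta_\x=\delta_\x$ and $N_v\delta_\y=\delta_\y$. A two-step Schur complementation of $H_\Lambda-z$ --- first onto the subspace of configurations containing $u$ or $v$ (where $\lambda\omega(u)N_u+\lambda\omega(v)N_v$ acts as a scalar on each of the four mutually orthogonal occupation sub-blocks), and then onto the at most two-dimensional $\mathrm{span}\{\delta_\x,\delta_\y\}$ --- expresses the Green function as
\begin{equation*}
G_\Lambda(\x,\y;z) = \bigl[\bigl(\lambda D(\omega(u),\omega(v)) + \widetilde M(z)\bigr)^{-1}\bigr]_{\x,\y},
\end{equation*}
where $D$ is a diagonal matrix linear in $(\omega(u),\omega(v))$ and $\widetilde M(z)$ is a $2\times 2$ complex matrix independent of $\omega(u),\omega(v)$. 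By Cramer's rule, $G_\Lambda(\x,\y;z)$ is thus a ratio of polynomials of total degree at most two in $(\omega(u),\omega(v))$, with top-degree coefficient $\lambda^2$ in the denominator (or $\lambda$ in degenerate cases such as $u=v$).

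The conclusion then follows from the two-variable Aizenman--Molchanov fractional moment bound: for any $s\in(0,1)$ and any bounded compactly supported density $\varrho$,
\begin{equation*}
\int\!\!\int\frac{\varrho(\omega)\varrho(\eta)\,d\omega\,d\eta}{|P_2(\omega,\eta)|^s} \leqslant \frac{C_s}{\lambda^s},
\end{equation*}
uniformly in all coefficients of any polynomial $P_2$ of total degree two in $(\omega,\eta)$ whose coefficient of $\omega\eta$ equals $\lambda^2$; see \cite[Ch.~8]{AWbook} or \cite[\S 2]{AiW2009}. Combined with the rational representation above, this yields the announced $C_s/\lambda^s$ estimate uniformly in $z$, $\Lambda$ and $n$.

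The principal technical subtlety, and the reason the reduction is not a direct transcription of the one-particle case, is that $N_u$ and $N_v$ are \emph{not} rank-one projections: their ranges have dimension growing with $|\Lambda|$ and $n$. A direct rank-two Schur complement of $H_\Lambda-z$ onto $\mathrm{span}\{\delta_\x,\delta_\y\}$ would leave residual $\omega(u),\omega(v)$-dependence in $\widetilde M(z)$ through the complementary block $(\mathds{1}-P)(H_\Lambda-z)(\mathds{1}-P)$, since this block still contains configurations occupying $u$ or $v$. The intermediate reduction onto configurations occupying $u$ or $v$ circumvents this by exploiting the block structure of $N_u,N_v$ on this subspace; this is the one step that requires a careful check, and is precisely what \cite[\S 2]{AiW2009} carries out.
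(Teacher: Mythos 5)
There is a genuine gap in the central step. Your claim that after the two-step Schur complementation the effective $2\times 2$ matrix $\widetilde M(z)$ is independent of $(\omega(u),\omega(v))$ is false for $n\geqslant 2$. After the first reduction onto the subspace of configurations meeting $\{u,v\}$, the effective operator is indeed $\lambda\omega(u)N_u+\lambda\omega(v)N_v+M(z)$ with $M(z)$ independent of the two conditioned variables; but the second Schur complement onto $\mathrm{span}\{\delta_\x,\delta_\y\}$ requires inverting the complementary block \emph{inside} that subspace, and this block still carries $\lambda\omega(u)N_u+\lambda\omega(v)N_v$ restricted to all the other configurations containing $u$ or $v$. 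Hence $\widetilde M(z)$ does depend on $(\omega(u),\omega(v))$, and $G_\Lambda(\x,\y;z)$ is a rational function of these variables of degree comparable to $\mathrm{rank}(N_u)+\mathrm{rank}(N_v)$, which grows with $n$ and $|\Lambda|$ --- not a ratio of polynomials of total degree two with $\omega\eta$-coefficient $\lambda^2$. The two-variable Aizenman--Molchanov estimate you quote is therefore applied to a representation that does not hold (it does hold for $n=1$, where $N_u,N_v$ are rank one, but the proposition is claimed for all $n$). Your last paragraph correctly identifies this as the delicate point, but the intermediate reduction does not circumvent it.

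For comparison: the paper gives no proof at all --- it cites Theorem~2.1 of \cite{AiW2009} after remarking that \textbf{A1}--\textbf{A2} imply the hypotheses there. The argument in that reference does not proceed through a low-degree polynomial reduction; it exploits the structural fact that the two couplings enter through projections which \emph{dominate} the relevant rank-one projections, $N_u\geqslant |\delta_\x\rangle\langle\delta_\x|$ and $N_v\geqslant |\delta_\y\rangle\langle\delta_\y|$ (this is exactly where $u\in\x$, $v\in\y$ is used). For such couplings one has a weak-$L^1$ (``double sampling'') bound on the resolvent matrix element as a function of the two conditioned variables, obtained by dissipativity and spectral-averaging arguments (see \cite[Ch.~8]{AWbook} and the references invoked in \cite{AiW2009}); the bound $C_s/\lambda^s$ then follows by layer-cake integration against the bounded, compactly supported density $\varrho$. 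If you want a self-contained proof, that dominated-coupling lemma is what you must invoke or reprove; the degree-two reduction is the step that fails.
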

The Combes-Thomas estimate shown in the previous section yields a deterministic exponential bound on $ G^{(2)}_\Lambda$, i.e.~when restricted to $\mathrm{ran}(\mathcal{Q}^{(2)})$. What is left is to localize the clusters. This is done by adopting a standard high-disorder localization technique for one-particle models using geometric decoupling.
\begin{theorem}\label{theorem_exponential_fm}
	Let $I\subset \left[0, E(g,\mu_\textsc{t}) \right)$ be a compact interval, $\overline{d}$ the function defined in \eqref{definition_dbar_app}, $s\in (0,1)$ and $\mu\in(0,\mu_\textsc{t})$. Under Assumptions~\textbf{A1}-\textbf{A2}, there exist constants $C_s^{(1)}, C_s^{(2)} \in (0,\infty) $ and $\lambda_0 >0$  such that for all $n\geqslant 2$, $\Lambda $, and all $\lambda>\lambda_0$, $E\in I$:
	\begin{align}
		S_\mu^{(1)}(E) &:=\,\sup\limits_{\Lambda'\subseteq\Lambda}\sup\limits_{\x\in\mathcal{C}_{\Lambda'}} 	
			\sum\limits_{\y\in\mathcal{C}_{\Lambda'}}e^{s\mu \vert x_1-y_1\vert} \,
			\mathbb{E}\bigl[\vert G_{\Lambda'}(\x,\y;E)\vert^s\bigr]\leqslant C^{(1)}_s\,,\label{definition_S1}\\
		S_\mu^{(2)}(E) &:=\,\sup\limits_{\Lambda'\subseteq\Lambda}\sup\limits_{\x\in\mathcal{C}_{\Lambda'}}
			\sum\limits_{\y\in\mathcal{X}_{\Lambda'}^{n}}e^{s\mu \overline{d}(\x,\y)} \,
			\mathbb{E}\bigl[\vert G_{\Lambda'}(\x,\y;E)\vert^s\bigr]\leqslant C^{(2)}_s\,,\label{definition_S2}
	\end{align}
	where $\mathcal{C}_\Lambda\equiv\mathcal{C}_\Lambda^{(1)}$ abbreviates clustered configurations in $\chiln$.
\end{theorem}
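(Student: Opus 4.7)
The plan is to prove both bounds jointly by performing a Feshbach--Schur reduction of $H_\Lambda-E$ onto the cluster subspace $l^2(\mathcal{C})$, running a standard one-body fractional-moment analysis there, and recovering the non-clustered contributions from the deterministic Combes--Thomas bound of Theorem~\ref{theorem_combes_thomas}. First I would observe that, since no two clustered configurations are $A$-adjacent (moving any particle in a cluster of size $n$ produces a two-cluster configuration), $\mathcal{P}^{(1)}H_\Lambda\mathcal{P}^{(1)}$ is diagonal on $l^2(\mathcal{C})$. The Schur complement formula along the split $\hln=\mathcal{P}^{(1)}\hln\oplus\mathcal{Q}^{(2)}\hln$ then produces the effective operator
\[
	M_{11}(E):=\mathcal{P}^{(1)}(H_\Lambda-E)\mathcal{P}^{(1)}-K(E),\qquad K(E):=T R^{(2)}(E) T^*,
\]
with $T:=\mathcal{P}^{(1)}H_\Lambda\mathcal{P}^{(2)}$, whose inverse equals $G_\Lambda(\x,\z;E)$ for $\x,\z\in\mathcal{C}$. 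Writing the effective hopping as $K(E)(\x,\z)=\sum_{\w\in\{\x^\pm\},\w'\in\{\z^\pm\}} G^{(2)}_\Lambda(\w,\w';E)$ and using that $d(\w,\w')\geqslant n|x_1-z_1|-2$ (rigidity of separated clusters), Theorem~\ref{theorem_combes_thomas} yields the deterministic bound $|K(E)(\x,\z)|\leqslant 4C_\textsc{t}\,e^{-\mu_\textsc{t}(n|x_1-z_1|-2)}$ for $\x\neq\z$. Parameterized by $x_1\in\Lambda$, $M_{11}(E)$ therefore plays the role of a one-dimensional Schr\"odinger operator with large random diagonal and exponentially small off-diagonal hopping. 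For $\x\in\mathcal{C}$, $\y\notin\mathcal{C}$ the same Schur calculation gives
\[
	G_\Lambda(\x,\y;E)=-\sum_{\w\in\mathcal{C}}G_\Lambda(\x,\w;E)\sum_{\w'\in\{\w^\pm\}}G^{(2)}_\Lambda(\w',\y;E),
\]
which will funnel $S^{(2)}_\mu$ through $S^{(1)}_\mu$.

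Next I would establish $S^{(1)}_\mu$ by a first-step fractional-moment argument on $l^2(\mathcal{C})$. For $\x\neq\y$ in $\mathcal{C}$ the identity $M_{11}(E)G_\Lambda(\cdot,\y;E)\equiv 0$ at $\x$ reads
\[
	G_\Lambda(\x,\y;E)=-\frac{1}{M_{11}(E)(\x,\x)}\sum_{\z\in\mathcal{C},\z\neq\x} K(E)(\x,\z)\,G_\Lambda(\z,\y;E).
\]
Taking the $s$-th power with $s\in(0,1)$, using $|a+b|^s\leqslant|a|^s+|b|^s$, and integrating first over one random variable $\omega(x_j)$ at a site $x_j\in\x$ absent from $\z$ (possible since $n\geqslant 2$), the deterministic Combes--Thomas bound on $K(E)$ detaches it from the remaining Green function, and the a~priori bound~\eqref{assumption_a2} furnishes
\[
	\mathbb{E}\bigl[|G_\Lambda(\x,\y;E)|^s\bigr]\leqslant\frac{C_s}{\lambda^s}\sum_{\z\neq\x}|K(E)(\x,\z)|^s\,\mathbb{E}\bigl[|G_\Lambda(\z,\y;E)|^s\bigr].
\]
Multiplying by $e^{s\mu|x_1-y_1|}$, summing over $y_1\in\Lambda$, and absorbing $|x_1-y_1|\leqslant|x_1-z_1|+|z_1-y_1|$ produces the self-consistent inequality $S^{(1)}_\mu(E)\leqslant C_\text{diag}+(C_s/\lambda^s)\,\kappa\,S^{(1)}_\mu(E)$, where
\[
	\kappa\leqslant 2\sum_{k\geqslant 1}(4C_\textsc{t})^s e^{2s\mu_\textsc{t}}\,e^{-s(n\mu_\textsc{t}-\mu)k}
\]
is finite uniformly in $n\geqslant 2$ thanks to $\mu<\mu_\textsc{t}$. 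Choosing $\lambda_0$ large enough that $(C_s/\lambda_0^s)\kappa<1/2$ closes the bound.

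Finally I would reduce $S^{(2)}_\mu$ to $S^{(1)}_\mu$. Subadditivity applied to the off-diagonal Schur block, together with the deterministic bound $|G^{(2)}_\Lambda(\w',\y;E)|\leqslant C_\textsc{t}\,e^{-\mu_\textsc{t} d(\w',\y)}$, yields for $\y\notin\mathcal{C}$,
\[
	\mathbb{E}\bigl[|G_\Lambda(\x,\y;E)|^s\bigr]\leqslant C_\textsc{t}^s\sum_{\w\in\mathcal{C}}\sum_{\w'\in\{\w^\pm\}}\mathbb{E}\bigl[|G_\Lambda(\x,\w;E)|^s\bigr]\,e^{-s\mu_\textsc{t} d(\w',\y)}.
\]
Inserting $\overline{d}(\x,\y)\leqslant|x_1-w_1|+1+d(\w',\y)$ (take $\u=\x$, $\v=\w$ in the definition of $\overline{d}$ and use $d(\w,\y)\leqslant d(\w,\w')+d(\w',\y)=1+d(\w',\y)$) and summing over $\y\notin\mathcal{C}$ produces $T_2(\x)\leqslant C'\,S^{(1)}_\mu(E)\cdot\Sigma$, with $\Sigma:=\sup_{\w'\in\mathcal{P}^{(2)}}\sum_{\y\notin\mathcal{C}}e^{-s(\mu_\textsc{t}-\mu)d(\w',\y)}$. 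Combined with the trivial $T_1(\x)\leqslant S^{(1)}_\mu(E)$, one obtains $S^{(2)}_\mu(E)\leqslant(1+C'\Sigma)\,S^{(1)}_\mu(E)$. The hard part is to control $\Sigma$ uniformly in $n$: the crude $\ell^1$ counting $\sum_{\y\in\chiln}e^{-c d(\w',\y)}\lesssim\bigl((1+e^{-c})/(1-e^{-c})\bigr)^n$ grows exponentially in the particle number, so the argument must invoke a finer combinatorial analysis of configurations at fixed distance from a two-cluster $\w'$ (the Green-function analogue of the summability statement Lemma~\ref{lemma_sum}/\ref{summability_F} used later for $F_\mu$), exploiting the near-cluster rigidity of such $\y$ and the freedom to enlarge $\mu_\textsc{t}$ by taking $g$ large.
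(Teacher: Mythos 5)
Your overall architecture (Schur/Feshbach reduction to the cluster subspace, Combes--Thomas control of the effective hopping $K(E)$, a first-step fractional-moment bound on the effective one-dimensional operator, and then funnelling $S^{(2)}_\mu$ through $S^{(1)}_\mu$) is a sensible and genuinely different route from the paper's. The deterministic ingredients are fine: $\mathcal{P}^{(1)}H_\Lambda\mathcal{P}^{(1)}$ is indeed diagonal for $n\geqslant 2$, and the rigidity bound $d(\w,\w')\geqslant n|x_1-z_1|-2$ together with Theorem~\ref{theorem_combes_thomas} does give the exponential smallness of $K(E)(\x,\z)$. The gap is in the probabilistic step. The inequality $\mathbb{E}[|G_\Lambda(\x,\y;E)|^s]\leqslant (C_s/\lambda^s)\sum_{\z\neq\x}|K(E)(\x,\z)|^s\,\mathbb{E}[|G_\Lambda(\z,\y;E)|^s]$ does not follow from ``integrating first over $\omega(x_j)$'' plus \eqref{assumption_a2}: the three factors $1/M_{11}(E)(\x,\x)$, $K(E)(\x,\z)$ and $G_\Lambda(\z,\y;E)$ are all functions of the \emph{same} disorder, and in particular the full-volume Green function $G_\Lambda(\z,\y;E)$ depends on $\omega(x_j)$ even when $x_j\notin\z\cup\y$, so it cannot be pulled out of the conditional expectation. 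Moreover \eqref{assumption_a2} is a bound on matrix elements of $(H_\Lambda-z)^{-1}$, not on the reciprocal of the Schur-complement diagonal $M_{11}(E)(\x,\x)=\lambda V(\x)+2g-E-K(E)(\x,\x)$, whose dependence on $\omega(x_j)$ is nonlinear through $K(E)(\x,\x)$; to control $\mathbb{E}[|M_{11}(\x,\x)|^{-s}]$ you would need an extra argument (e.g. monotonicity of $K(E)(\x,\x)$ in $\omega(x_j)$ below the two-cluster threshold plus a layer-cake estimate), and even then decoupling it from the correlated factor $|G_\Lambda(\z,\y)|^s$ via H\"older changes the exponent and spoils the self-consistency in $S^{(1)}_\mu$. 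This is exactly the difficulty the paper's proof is built to avoid: it decouples \emph{geometrically}, expanding across the bond at $x_1$ so that the second factor is a Green function on $\Lambda_\x=\Lambda\cap[x_1+1,\infty)$, hence genuinely independent of $\omega(x_1)$, after which the conditional expectation over $\omega(x_1)$ hits only the first factor, to which \eqref{assumption_a2} applies with $u=v=x_1$. To repair your route you would need either this kind of domain surgery or a substantially more involved decoupling scheme; as written, the key displayed estimate is unjustified.

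On the second point you flag as ``the hard part'': the uniform-in-$n$ control of $\Sigma=\sup_{\w'}\sum_{\y\notin\mathcal{C}}e^{-s(\mu_\textsc{t}-\mu)d(\w',\y)}$ is not actually a gap. Since $\w'$ is adjacent to a clustered configuration $\w$, one has $d(\w',\y)\geqslant d(\w,\y)-1$, and Lemma~\ref{lemma_sum} bounds $\sup_{\w\in\mathcal{C}}\sum_{\y}e^{-c\,d(\w,\y)}$ by $C_\infty(c)$ \emph{uniformly in $n$}: the hard-core ordering of configurations reduces the sum to Euler's partition generating function, so the exponential-in-$n$ growth you fear (which would occur for unconstrained particles) does not happen. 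This is precisely the summability input the paper uses in its proof of \eqref{definition_S2}, not merely later for $F_\mu$.
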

The proof of Theorem \ref{theorem_exponential_fm} mainly relies on Theorem~\ref{theorem_combes_thomas}, some properties of $\overline{d}$ and a standard resolvent expansion, which for the convenience of the reader we summarize in:
\begin{proposition}\label{lemma_resolvent_expansion}
	Let $H$ be a Hamiltonian on some separable Hilbert space with orthonormal basis $\lbrace\delta_x\rbrace$ and let $(Q,P)$ be a pair of non-trivial complementary orthogonal projections on $\mathcal{H}$. Then, for any $\delta_x\in\mathrm{ran}(Q)$, $\delta_y\in\mathrm{ran}(P)$ and $E\in\mathbb{C} \backslash \sigma(H) $, we have
	\begin{equation}\label{eq:reseq}
		\vert G(x,y)\vert
			\leqslant \sum\limits_{\substack{\delta_u\in\mathrm{ran}(Q)\\ \delta_v\in\mathrm{ran}(P)}}
			\vert G(x,u)\vert\vert H(u,v)\vert\vert\langle\delta_v, (P(H-E)P)^{-1}\delta_y\rangle\vert\,,
	\end{equation}
	where $G(x,y) := \langle\delta_x,(H-E)^{-1}\delta_y\rangle$ and $H(x,y):=\langle\delta_x,H\delta_y\rangle$.
\end{proposition}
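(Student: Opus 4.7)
The plan is to apply the standard Feshbach/Schur-type resolvent identity induced by the block decomposition $\mathcal{H} = \mathrm{ran}(Q) \oplus \mathrm{ran}(P)$, and then expand in the orthonormal basis. The proof is purely algebraic; no spectral or probabilistic input is needed beyond the invertibility of $H-E$ and of $P(H-E)P$ on $\mathrm{ran}(P)$ (the latter being implicit in the statement through the appearance of $(P(H-E)P)^{-1}$).

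Concretely, I would first introduce the auxiliary vector $w := (P(H-E)P)^{-1}\delta_y \in \mathrm{ran}(P)$, where the inverse is understood on $\mathrm{ran}(P)$. Using $Pw = w$, a direct computation gives
\begin{equation*}
	(H-E)w \,=\, P(H-E)Pw + QHPw \,=\, \delta_y + QHw,
\end{equation*}
so that, after acting with $(H-E)^{-1}$ on both sides and rearranging,
\begin{equation*}
	(H-E)^{-1}\delta_y \,=\, w - (H-E)^{-1}(QHw).
\end{equation*}
Taking the inner product with $\delta_x$ and observing that $\langle \delta_x, w\rangle = 0$ (since $\delta_x \in \mathrm{ran}(Q)$ while $w \in \mathrm{ran}(P)$) yields
\begin{equation*}
	G(x,y) \,=\, -\langle \delta_x, (H-E)^{-1}\,QHw\rangle.
\end{equation*}

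To reach the form of \eqref{eq:reseq}, I would expand $w = \sum_{\delta_v\in\mathrm{ran}(P)} w(v)\,\delta_v$ with $w(v) = \langle \delta_v, (P(H-E)P)^{-1}\delta_y\rangle$, and similarly insert the resolution of the identity on $\mathrm{ran}(Q)$ between $(H-E)^{-1}$ and $QH$, using $QH\delta_v = \sum_{\delta_u \in \mathrm{ran}(Q)} H(u,v)\,\delta_u$. This gives
\begin{equation*}
	G(x,y) \,=\, -\sum_{\substack{\delta_u\in\mathrm{ran}(Q)\\ \delta_v\in\mathrm{ran}(P)}}
		G(x,u)\,H(u,v)\,\langle \delta_v,(P(H-E)P)^{-1}\delta_y\rangle,
\end{equation*}
and the triangle inequality delivers \eqref{eq:reseq}.

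No step is expected to be an obstacle: the content is really the identity $\delta_y = P(H-E)P\,w$ combined with the off-diagonal "leak" $QHw$, which is the content of the second resolvent identity applied to the splitting $H = (QHQ + PHP) + (QHP + PHQ)$. The only minor care point is to track that the inverse $(P(H-E)P)^{-1}$ is implicitly restricted to $\mathrm{ran}(P)$, so that $w$ indeed lies in $\mathrm{ran}(P)$ and the cross term $\langle \delta_x, w\rangle$ drops out; once this is acknowledged, the identity is formal.
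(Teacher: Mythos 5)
Your argument is correct: it is exactly the standard geometric-resolvent (Feshbach-type) identity $(H-E)^{-1}\delta_y = w - (H-E)^{-1}QHw$ with $w=(P(H-E)P)^{-1}\delta_y$, expanded in the basis and estimated by the triangle inequality, which is precisely the "standard resolvent expansion" the paper invokes without writing out a proof. The only points worth acknowledging explicitly (as you partly do) are that the basis is assumed adapted to the decomposition $\mathrm{ran}(Q)\oplus\mathrm{ran}(P)$ and that $(P(H-E)P)^{-1}$ exists on $\mathrm{ran}(P)$, both of which are implicit in the statement and hold in the paper's applications.
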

\begin{remark}
	Using the resolvent identity and the Combes-Thomas estimate once more, one may also show that for any $\x,\y\in\chiln$:
	\begin{equation}\label{exponential_fm}
		\mathbb{E}\bigl[\vert G_\Lambda(\x,\y;E)\vert^s\bigr]\leqslant C_s \, e^{-s\mu D(\x,\y)} \, , 
	\end{equation}
	where
	\begin{equation}\label{definition_D}
		D(\x,\y):=\min\bigl\lbrace d(\x,\y),\overline{d}(\x,\y)\bigr\rbrace\,.
	\end{equation}
	In contrast to $\overline{d}$, $D$ is a distance function, as shown in Lemma~\ref{lemma_distance_dbar}(i).
\end{remark}
We are now ready to give a 
\begin{proof}[Proof of Theorem \ref{theorem_exponential_fm}]
	The dependence of quantities on the energy $E$ shall be omitted throughout the proof. Nevertheless, note that
	\begin{equation*}
		C_\textsc{t}(I):= \sup\limits_{E\in I} C_\textsc{t}(g,\mu_\textsc{t},E) =C_\textsc{t}(g,\mu_\textsc{t},\sup I)\in (0,\infty).
	\end{equation*}
	Let first $\x,\y\in\mathcal{C}$. Such configurations satisfy either $x_1<y_1$, $\x=\y$ or $x_1>y_1$. In the middle case, the summand is bounded using~\eqref{assumption_a2}, while the remaining two cases may by symmetry be treated in a similar manner. We thus only present the case $x_1<y_1$. Let $\Lambda_\x := \Lambda\cap[x_1+1,\infty)$ and $\mathcal{X}_\x$ ($\mathcal{C}_\x$) be the subset of (clustered) configurations to the right of $x_1$, i.e.
	\begin{equation*}
		\mathcal{X}_\x := \left\lbrace \textbf{z}\in\chiln : x_1 < z_1\right\rbrace\equiv\mathcal{X}_{\Lambda_\x}^n\,,\qquad
		\mathcal{C}_\x := \mathcal{C}\cap\mathcal{X}_\x\,.
	\end{equation*}
	Let $P_\x$ be the projection onto $\mathcal{H}_{\Lambda_\x}^n\equiv l^2(\mathcal{X}_\x)$ and $Q_\x$ its orthogonal complement in $\hln$. Applying the resolvent equation~(\ref{eq:reseq}), we obtain the expansion
	\begin{equation*}
		\vert G_{\Lambda}(\x,\y)\vert \leqslant 
			\sum\limits_{\substack{\textbf{u}\notin\mathcal{X}_\x\\ \textbf{v}\in\mathcal{X}_\x}}
			\vert G_{\Lambda}(\x,\textbf{u})\vert \vert H_\Lambda(\textbf{u},\textbf{v})\vert 
			\vert G_{\Lambda_\x}(\textbf{v},\y)\vert\,.
	\end{equation*}
	The matrix element $H_\Lambda(\textbf{u},\textbf{v})$ is only non-vanishing for $\textbf{u}=\textbf{v}$ or $d(\textbf{u},\textbf{v})=1$. Combined with the imposed conditions, $\textbf{u}\notin\mathcal{X}_\x$ and $\textbf{v}\in\mathcal{X}_\x$, the sum is restricted to pairs $\lbrace\textbf{u},\textbf{v}\rbrace$ with $u_1=x_1$, $v_1=x_1+1$, and $u_k=v_k$, $2\leqslant k\leqslant n$. Hence, for any such $\textbf{v}$, there is a unique neighboring $\textbf{u}\equiv\textbf{u}( \textbf{v}) $. Only one such $\textbf{v}$ is also fully clustered; we shall denote it by $\textbf{v}_0\in\mathcal{C}$. We may thus bound $\vert G_{\Lambda}(\x,\y)\vert$ by
	\begin{equation}\label{resolvent_expansion_1}
		\vert G_\Lambda(\x,\textbf{u}(\textbf{v}_0))\vert \vert G_{\Lambda_\x}(\textbf{v}_0,\y)\vert
			+\sum\limits_{\substack{\textbf{v}\notin\mathcal{C}\\v_1=x_1+1}}
			\vert G_\Lambda(\x,\textbf{u}(\textbf{v}))\vert\vert G_{\Lambda_\x}(\textbf{v},\y)\vert\,.
	\end{equation}
	By Proposition~\ref{lemma_resolvent_expansion} applied to the complementary projections $(\mathcal{P}^{(1)}, \mathcal{Q}^{(2)})$, the last resolvent factor, $\vert G_{\Lambda_\x}(\textbf{v},\y)\vert$ with $\y\in\mathcal{C}$, may now be expanded into
	\begin{align}
		\vert G_{\Lambda_\x}(\textbf{v},\y)\vert
			&\leqslant\, \sum\limits_{\substack{\textbf{w}\in\mathcal{X}_\x\setminus\mathcal{C}_\x\\ 	
				\textbf{z}\in\mathcal{C}_\x}} \vert G_{\Lambda_\x}^{(2)}(\textbf{v},\textbf{w})\vert
				\vert H_{\Lambda_\x}(\textbf{w},\textbf{z})\vert \vert G_{\Lambda_\x}(\textbf{z},\textbf{y})\vert \notag\\
			&\leqslant\, 2\, C_\textsc{t} e^{\mu_\textsc{t}}\sum\limits_{\textbf{z}\in\mathcal{C}_\x}
				e^{-\mu_\textsc{t}d(\textbf{v},\textbf{z})}\vert G_{\Lambda_\x}(\textbf{z},\textbf{y})\vert\,.
				\label{inequality_resolvent_ct}
	\end{align}
	The second inequality is by the Combes-Thomas estimate (\ref{combes_thomas}) and the observation that, for any $\w\not\in\mathcal{C}_\x$, either $H_\Lambda(\textbf{w},\textbf{z})= 0$ or $d(\textbf{w},\textbf{z})=1$ and $\vert H_\Lambda(\textbf{w},\textbf{z})\vert = 1$. The factor $2$ follows from the fact that, for any $\textbf{z}\in\mathcal{C}$, at most two $\textbf{w}$ satisfy $d(\textbf{w},\textbf{z})=1$.
	
	The resolvent $G_{\Lambda_\x}$ is independent of $\omega(x_1)$. Taking the fractional moment of (\ref{resolvent_expansion_1}), conditioning on the random potential at all sites but $x_1$, and using (\ref{assumption_a2}) (with $ u=v = x_1$) thus yields:
	\begin{align*}
		\mathbb{E}\bigl[\vert G_\Lambda(\x,\y)\vert^s\bigr]
			\leqslant& \frac{{C}_s}{\lambda^s} \, \mathbb{E}\bigl[\vert G_{\Lambda_\x}(\textbf{v}_0,\y)\vert^s\bigr]\\
			&+\, \bigl(2C_\textsc{t}e^{\mu_\textsc{t}}\bigr)^s\frac{{C}_s}{\lambda^s} 
			\sum\limits_{\substack{\textbf{v}\notin\mathcal{C}\\v_1=x_1+1}}\sum\limits_{\textbf{z}\in\mathcal{C}_\x}
			e^{-s\mu_\textsc{t}d(\textbf{v},\textbf{z})}\mathbb{E}\bigl[\vert G_{\Lambda_\x}(\textbf{z},\y)\vert^s\bigr]\,.
	\end{align*}
	By $ | x_1-y_1 | = 1+ |v_1-y_1 | \leqslant 1+  |v_1-z_1| + |z_1-y_1 | $ and Lemma \ref{lemma_sum}, we then obtain for any $\mu<\mu_\textsc{t}$ the bound
	\begin{align*}
		\sum\limits_{\y\in\mathcal{C}_{\Lambda_\x}} e^{s\mu\vert x_1-y_1\vert}\,
			\mathbb{E}\bigl[\vert G_\Lambda(\x,\y)\vert^s\bigr]
		&\leqslant \frac{{C}_s}{\lambda^s}e^{s\mu} S^{(1)}_\mu + 
			\bigl(2C_\textsc{t}e^{\mu_\textsc{t}}\bigr)^s  \frac{{C}_s}{\lambda^s} \, e^{s\mu}  \notag \\
		& \mkern-230mu\times \sum\limits_{\substack{\textbf{v}\notin\mathcal{C}\\v_1=x_1+1}}
			\sum\limits_{\textbf{z}\in\mathcal{C}_\x}  e^{-s(\mu_\textsc{t}-\mu) |v_1-z_1| }
			e^{-s \mu_\textsc{t} \sum_{j=2}^n |v_j-z_j| }
			\sum\limits_{\y\in\mathcal{C}_{\Lambda_\x}} e^{s\mu\vert z_1-y_1\vert}
			\bigl[\vert G_{\Lambda_\x}(\z,\y)\vert^s\bigr]  \notag \\
		&\leqslant  \frac{{C}_s}{\lambda^s}e^{s\mu}  \left( 1 + \bigl(2C_\textsc{t}e^{\mu_\textsc{t}}\bigr)^s
			\frac{C_\infty(s\mu_\textsc{t})}{1-e^{-s(\mu_\textsc{t}-\mu)}} \right) \, S^{(1)}_\mu\,,
	\end{align*}
	which in turn implies the following estimate of (\ref{definition_S1}):
	\begin{equation*}
		S^{(1)}_\mu \leqslant \frac{{C}_s}{\lambda^s} + 2\frac{{C}_s}{\lambda^s}e^{s\mu}
			\left(1+\bigl(2C_\textsc{t}e^{\mu_\textsc{t}}\bigr)^s
			\frac{C_\infty(s \mu_\textsc{t})}{1-e^{-s(\mu_\textsc{t}-\mu)}}\right) S^{(1)}_\mu \,.
	\end{equation*}
	The first term arises from the case $\y=\x$ and the factor $2$ accounts for the two cases $x_1<y_1$ and $x_1>y_1$. Hence, if $\lambda$ is chosen so large that the coefficient of $S^{(1)}_\mu $ on the right hand side is strictly less than $1$, $S^{(1)}_\mu$ (which is finite) is uniformly bounded as claimed.
	\medskip\\
	Let $\x\in\mathcal{C}$ and $\y\in\mathcal{X}$. We have shown that the terms with clustered $\y\in\mathcal{C}_{\Lambda'}$ in (\ref{definition_S2}) are bounded as in (\ref{definition_S1}), and therefore we henceforth concentrate on the case $\x\in\mathcal{C}_{\Lambda'}$ and $\y\notin\mathcal{C}_{\Lambda'}$. Applying Proposition~\ref{lemma_resolvent_expansion} with the pair of complementary projections $(\mathcal{P}^{(1)},\mathcal{Q}^{(2)})$ and in turn the Combes-Thomas estimate (\ref{combes_thomas}), we have the following upper bound for $\vert G_\Lambda(\x,\y)\vert$:
	\begin{equation*}
		\sum\limits_{\substack{\textbf{u}\in\mathcal{C}_{\Lambda}\\ \textbf{v}\not\in\mathcal{C}_{\Lambda}}}
		\vert G_\Lambda(\x,\textbf{u})\vert \vert H_\Lambda(\textbf{u},\textbf{v})\vert
		\vert G_\Lambda^{(2)}(\textbf{v},\y)\vert
		\leqslant 2C_\textsc{t}e^{\mu_\textsc{t}}\sum\limits_{\textbf{u}\in\mathcal{C}_\Lambda}
			e^{-\mu_\textsc{t} d(\textbf{u},\y)}\vert G_\Lambda(\x,\textbf{u})\vert.
	\end{equation*}
	As in (\ref{inequality_resolvent_ct}), the second inequality uses that non-vanishing contributions have $d(\textbf{u},\textbf{v})=1$ and $\vert H_\Lambda(\textbf{u},\textbf{v})\vert = 1$, and the factor $2$ reflects the fact that at most two $\textbf{v}$ satisfy $d(\textbf{u},\textbf{v})=1$. By (\ref{definition_S1}) and Lemma \ref{lemma_distance_dbar}(iv) this yields:
	\begin{align*}
		&\sup\limits_{\Lambda'\subseteq\Lambda}\sup\limits_{\x\in\mathcal{C}_{\Lambda'}}
			\sum\limits_{\y\not\in\mathcal{C}_{\Lambda'}}
			e^{s\mu \overline{d}(\x,\y)}\mathbb{E}\bigl[\vert G_{\Lambda'}(\x,\y)\vert^s\bigr]  \notag\\
		&\:\leqslant\,\sup\limits_{\Lambda'\subseteq\Lambda}\sup\limits_{\x\in\mathcal{C}_{\Lambda'}}
			\bigl(2C_\textsc{t}e^{\mu_\textsc{t}}\bigr)^s
			\sum\limits_{\y\not\in\mathcal{C}_{\Lambda'}}\sum\limits_{\textbf{u}\in\mathcal{C}_{\Lambda'}}
			e^{s\mu\overline{d}(\textbf{u},\y)-s\mu_\textsc{t}d(\textbf{u},\y)}e^{s\mu\overline{d}(\x,\textbf{u})}
			\mathbb{E}\bigl[\vert G_{\Lambda'}(\x,\textbf{u})\vert^s\bigr] \notag\\
		&\:\leqslant\, \bigl(2C_\textsc{t}e^{\mu_\textsc{t}}\bigr)^s C_\infty\bigl(s(\mu_\textsc{t}-\mu)\bigr) \, C^{(1)}_s\,,
	\end{align*}
	where the uniform boundedness of $C_\infty$ is by Lemma \ref{lemma_sum}.
\end{proof}

\section{From fractional moments to eigenfunction correlators}\label{section_fm_to_eq}

As explained in \cite{A94,AWbook}, in order to relate the configurational eigenfunction correlator to the fractional moment of the Green function, it is useful to consider the family of interpolated eigenfunction correlators with parameter $ s \in [0, 1] $:
\begin{equation*}
	Q^{(n)}_\Lambda(\x,\y, I , s) := \sum_{E \in \sigma(H_\Lambda)\cap I}  
		\left| \langle \delta_\x , P_{\{E\}}(H_\Lambda) \delta_\x \rangle \right|^{1-s}
		\left| \langle \delta_\x , P_{\{E\}}(H_\Lambda) \delta_\y \rangle \right|^s \,.
\end{equation*}
The following bounds, which are taken from~\cite[Ch.~7.3.2]{AWbook} (see also \cite{AiW2009}), hold for any $s \in [0,1] $: 
\begin{equation}\label{eq:ECbounds}
	Q^{(n)}_\Lambda(\x,\y, I ) \leqslant \sqrt{ Q^{(n)}_\Lambda(\x,\y,I,s) \, Q^{(n)}_\Lambda(\y,\x,I,s)} \,, \quad 
  	Q^{(n)}_\Lambda(\x,\y,I,s)  \leqslant 1 \, . 
\end{equation}

In \cite{AiW2009} a general relation concerning the eigenfunction correlator was derived which in our situation reads: 
\begin{proposition}[Thm.~4.5 in \cite{AiW2009}] \label{prop:EC}
	Let $s\in(0,1)$, $ \lambda > 0 $ and  $ I \subset \mathbb{R} $ be an interval. Under Assumptions~\textbf{A1}-\textbf{A2}, there exists a constant $ C \in (0,\infty) $ such that for all $ n \in \mathbb{N}$, $ \Lambda $ and all $ \x, \y \in \mathcal{X}_{\Lambda}^{n}  $ with  $ u \in \x\,$:
	\begin{equation}
		\mathbb{E}\left[ Q^{(n)}_\Lambda(\x,\y,I,s) \right]
			\leqslant C \sum_{\substack{\w \in\mathcal{X}_{\Lambda}^{n} \\ u \in \w} }
			\int_I \mathbb{E}\left[ |G_\Lambda(\w,\y;E)|^s   \right]\, \mathrm{d}E \,. 
	\end{equation}
\end{proposition}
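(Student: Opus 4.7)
My plan is to combine a positivity reduction with the multi-rank spectral-averaging / fractional-moment formalism of Aronszajn--Krein applied to the perturbation $\omega(u)N_u$ at the site $u\in\x$, where
$N_u := \sum_{\w \in \chiln:\,u\in\w} |\delta_\w\rangle\langle\delta_\w|$
is the occupation-number projection at $u$. Writing $H_\Lambda = \tilde H_u + \omega(u)N_u$ with $\tilde H_u$ independent of $\omega(u)$, the sum $\sum_{\w:u\in\w}$ appearing in the target bound will arise naturally from a trace over the range of $N_u$.

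Expanding $Q^{(n)}_\Lambda(\x,\y,I,s)$ in a (generically simple) eigenbasis $\{\phi_E\}$ of $H_\Lambda$ gives
\begin{equation*}
Q^{(n)}_\Lambda(\x,\y,I,s) = \sum_{E \in \sigma(H_\Lambda)\cap I} |\phi_E(\x)|^{2-s}\,|\phi_E(\y)|^s.
\end{equation*}
Since $u\in\x$, positivity of the summands yields $|\phi_E(\x)|^{2-s} \leqslant \sum_{\w:u\in\w} |\phi_E(\w)|^{2-s}$, and hence
\begin{equation*}
Q^{(n)}_\Lambda(\x,\y,I,s) \leqslant \sum_{\w:u\in\w} Q^{(n)}_\Lambda(\w,\y,I,s).
\end{equation*}
This reduces the claim to establishing, for each $\w$ with $u\in\w$, the pointwise bound
\begin{equation*}
\mathbb{E}\bigl[Q^{(n)}_\Lambda(\w,\y,I,s)\bigr] \leqslant C \int_I \mathbb{E}\bigl[|G_\Lambda(\w,\y;E)|^s\bigr]\, \mathrm{d}E.
\end{equation*}

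For this pointwise bound I would use spectral averaging with respect to $\omega(u)$. Conditioning on all other disorder, the $\omega(u)$-dependence of $P_{\{E\}}(H_\Lambda)$ is entirely encoded by the matrix-valued Weyl function $M_u(z) := N_u(\tilde H_u - z)^{-1}N_u$ acting on $\operatorname{ran}(N_u)$. The multi-rank Aronszajn--Krein formula expresses entries of $P_{\{E\}}$ in $\operatorname{ran}(N_u)$ as Cauchy-type boundary residues of $(\mathds{1} + \omega(u)M_u(z))^{-1}M_u(z)$; integrating against the bounded density $\varrho$ and applying a layer-cake / H\"older argument then converts the $s$-th moment of a spectral-projection entry into the $E$-integral of $|G_\Lambda(\w,\y;E)|^s$ over $I$. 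This is precisely the argument of Theorem~4.5 in \cite{AiW2009}, which I would invoke at this step.

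The principal technical obstacle is the higher-rank nature of $N_u$: in the single-particle setting $N_u$ has rank one and the scalar Aronszajn--Krein calculus applies directly, whereas here the rank $k_u = |\{\w \in \chiln: u \in \w\}|$ may be large, and one must bound entries of the $k_u\times k_u$ matrix resolvent of $M_u$ in fractional moment uniformly in $n$ and $\Lambda$. This is handled by the Schur complement for matrix-valued Herglotz functions combined with Assumption~\textbf{A2}: the resulting constant $C$ depends only on $s$ and $\|\varrho\|_\infty$, and is therefore uniform in the spatial parameters, as required.
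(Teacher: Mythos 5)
The paper does not prove this proposition at all: it is imported verbatim, sum over $\w\ni u$ included, from [AiW2009, Thm.~4.5]. Measured against that result, your reconstruction has a genuine gap at the step where you reduce everything to the pointwise bound $\mathbb{E}\bigl[Q^{(n)}_\Lambda(\w,\y,I,s)\bigr]\leqslant C\int_I\mathbb{E}\bigl[|G_\Lambda(\w,\y;E)|^s\bigr]\,\mathrm{d}E$ for each fixed $\w\ni u$. The sum over $\w\ni u$ in the true statement is not a cosmetic consequence of the trivial inequality $|\phi_E(\x)|^{2-s}\leqslant\sum_{\w\ni u}|\phi_E(\w)|^{2-s}$; it is generated by the spectral-averaging step itself. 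Because $\omega(u)$ couples to the higher-rank projection $N_u$, the eigenvalue motion under variation of $\omega(u)$ has derivative $\lambda\langle\phi_E,N_u\phi_E\rangle=\lambda\sum_{\w\ni u}|\phi_E(\w)|^2$ (not $\lambda|\phi_E(\w)|^2$), and the weak-$L^1$/Krein-type bounds for the matrix-valued Herglotz function $M_u$ control traces and norms over $\mathrm{ran}(N_u)$, not individual matrix elements. In essence one bounds $|\phi_E(\x)|^{2-s}\leqslant\langle\phi_E,N_u\phi_E\rangle^{1-s}|\phi_E(\x)|^{s}$ and changes variables using that Jacobian, and the price is precisely that the resulting Green-function bound involves the whole range of $N_u$. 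So what the machinery you invoke (and [AiW2009, Thm.~4.5] itself) delivers for a fixed $\w$ is again a bound carrying the sum over all configurations containing $u$; your per-$\w$ bound, with $C$ depending only on $s$ and $\Vert\varrho\Vert_\infty$, is strictly stronger than the cited theorem, and the appeal to ``Schur complements for matrix-valued Herglotz functions'' does not substantiate it. If it were available, the sum over $\w$ in the proposition, and with it much of the apparatus of $S^{(2)}_\mu$ in Section~4, would be superfluous.

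Your decomposition also cannot be repaired by citing [AiW2009] at the inner step: applied to $Q^{(n)}_\Lambda(\w,\y,I,s)$ that theorem returns $C\sum_{\w'\ni u}\int_I\mathbb{E}\bigl[|G_\Lambda(\w',\y;E)|^s\bigr]\mathrm{d}E$, so after your preliminary bound $Q^{(n)}_\Lambda(\x,\y,I,s)\leqslant\sum_{\w\ni u}Q^{(n)}_\Lambda(\w,\y,I,s)$ you accumulate the cardinality $|\{\w\in\chiln : u\in\w\}|$ as a prefactor, which grows combinatorially in $n$ and $|\Lambda|$ and destroys exactly the uniformity in $n,\Lambda$ that the proposition and its later use require; invoking the theorem whose proof you are trying to supply would in any case be circular. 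The correct route is to perform the spectral average in $\omega(u)$ directly on $Q^{(n)}_\Lambda(\x,\y,I,s)$, accepting that the higher-rank coupling forces the sum over $\w\ni u$ to appear at that stage rather than through your positivity step.
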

The proposition will be used to conclude localization bounds in case $ \x, \y \in \mathcal{C} $. The remaining case $\x \not\in \mathcal{C} $ and $ \y \in  \mathcal{X}_{\Lambda}^{n} $ is dealt with perturbatively. 

\begin{lemma}\label{lem:perturbationth}
	In the set-up of Theorem~\ref{theorem_exponential_cor}, for any $\x\not\in \mathcal{C}$ and $\y\in\mathcal{X}_{\Lambda}^{n}$:
	\begin{equation}
		Q^{(n)}_\Lambda(\x,\y,I) \leqslant 2\,C_\textsc{t}(I)\,e^{\mu_\textsc{t}} \sum_{\w \in \mathcal{C}}
			e^{- \mu_\textsc{t}  d(\x,\w) }  \, Q^{(n)}_\Lambda(\w,\y, I ) \,.
	\end{equation}
\end{lemma}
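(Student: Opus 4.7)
The plan is to exploit that $\delta_\x = \mathcal{Q}^{(2)} \delta_\x$ whenever $\x \notin \mathcal{C}$, so that the Combes--Thomas bound from Theorem~\ref{theorem_combes_thomas} applies deterministically to matrix elements of $(H_\Lambda^{(2)}-E)^{-1}$ for every $E \in I$. I would work directly with the spectral projections $P_{\{E\}}(H_\Lambda)$, rather than with individual eigenvectors, which sidesteps any issue arising from possible eigenvalue degeneracies of $H_\Lambda$.

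The central step is a Schur--Feshbach identity obtained from $(H_\Lambda - E) P_{\{E\}}(H_\Lambda) = 0$ by decomposing the projection according to $\mathds{1} = \mathcal{P}^{(1)} + \mathcal{Q}^{(2)}$. Projecting the resulting identity onto $\mathcal{Q}^{(2)}$ and using $\mathcal{Q}^{(2)} E \mathcal{P}^{(1)} = 0$ yields
\begin{equation*}
\mathcal{Q}^{(2)}\, P_{\{E\}}(H_\Lambda) = -\bigl(H_\Lambda^{(2)} - E\bigr)^{-1}\,\mathcal{Q}^{(2)} H_\Lambda \mathcal{P}^{(1)}\,P_{\{E\}}(H_\Lambda),
\end{equation*}
where invertibility of $H_\Lambda^{(2)} - E$ follows from $E \in I \subset [0, E(g,\mu_\textsc{t}))$ lying strictly below the two-cluster spectral threshold $4(g-1)$, cf.\ Lemma~\ref{lemma_thresholds}. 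Since $U$ and $V$ preserve $\mathcal{P}^{(1)}\hln$, only the hopping $-A$ contributes to $\mathcal{Q}^{(2)} H_\Lambda \mathcal{P}^{(1)}$; a short combinatorial inspection shows that for every $\w \in \mathcal{C}$ there are at most two neighboring $\textbf{v} \notin \mathcal{C}$, namely the configurations obtained by moving one of the two endpoints of the cluster outward by one site, each contributing $|H_\Lambda(\textbf{v},\w)| = 1$.

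Taking matrix elements of the above identity against $\delta_\x$ (using $\delta_\x = \mathcal{Q}^{(2)}\delta_\x$) and $\delta_\y$ then gives
\begin{equation*}
\langle \delta_\x, P_{\{E\}}(H_\Lambda) \delta_\y \rangle = -\sum_{\w \in \mathcal{C}}\,\sum_{\substack{\textbf{v} \notin \mathcal{C}\\ d(\textbf{v},\w)=1}} G_\Lambda^{(2)}(\x,\textbf{v};E)\,H_\Lambda(\textbf{v},\w)\,\langle \delta_\w, P_{\{E\}}(H_\Lambda) \delta_\y \rangle.
\end{equation*}
Inserting $|G_\Lambda^{(2)}(\x,\textbf{v};E)| \leqslant C_\textsc{t}(I)\,e^{-\mu_\textsc{t} d(\x,\textbf{v})}$ together with the triangle inequality $d(\x,\textbf{v}) \geqslant d(\x,\w) - d(\textbf{v},\w) = d(\x,\w) - 1$, and summing the resulting absolute-value bound over $E \in \sigma(H_\Lambda) \cap I$, recovers the claim with constant $2\,C_\textsc{t}(I)\,e^{\mu_\textsc{t}}$. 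The main technical point is to set up the Schur--Feshbach identity at the level of the spectral projection rather than for a single eigenfunction, together with the observation that $\mathcal{Q}^{(2)} H_\Lambda \mathcal{P}^{(1)}$ has at most two non-vanishing entries in each column indexed by $\w \in \mathcal{C}$; the Combes--Thomas input then turns this into the stated exponential bound, and the remainder of the argument is bookkeeping.
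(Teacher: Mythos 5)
Your argument is correct, and it takes a different route from the paper's proof. The paper does not work with the spectral projections directly: it invokes the singular relation $Q_\Lambda^{(n)}(\x,\y,I)=\lim_{s\nearrow 1}\frac{1-s}{2}\int_I |G_\Lambda(\x,\y;E)|^s\,\mathrm{d}E$ (Prop.~7.9 in the Aizenman--Warzel book, quoted as \eqref{eq:singlim}), expands $|G_\Lambda(\x,\y;E)|^s$ once via Proposition~\ref{lemma_resolvent_expansion} with the pair $(\mathcal{Q}^{(2)},\mathcal{P}^{(1)})$, bounds the $G^{(2)}_\Lambda$ factors by Combes--Thomas, and then passes to the limit $s\nearrow 1$, in which the purely deterministic term $C_\textsc{t}^s e^{-s\mu_\textsc{t} d(\x,\y)}\mathds{1}_{\{\y\notin\mathcal{C}\}}$ is killed by the prefactor $(1-s)/2$ while the remaining sum reproduces $Q^{(n)}_\Lambda(\w,\y,I)$. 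You instead derive the identity $\mathcal{Q}^{(2)}P_{\{E\}}(H_\Lambda)=-(H^{(2)}_\Lambda-E)^{-1}\mathcal{Q}^{(2)}H_\Lambda\mathcal{P}^{(1)}P_{\{E\}}(H_\Lambda)$ directly from $(H_\Lambda-E)P_{\{E\}}(H_\Lambda)=0$, which is legitimate since $E\in I$ lies strictly below the two-cluster threshold so that $H^{(2)}_\Lambda-E$ is invertible on $\mathcal{Q}^{(2)}\mathcal{H}$, and then conclude by Combes--Thomas, the bound $d(\x,\mathbf{v})\geqslant d(\x,\w)-1$, the fact that each $\w\in\mathcal{C}$ has at most two neighbors $\mathbf{v}\notin\mathcal{C}$ with $|H_\Lambda(\mathbf{v},\w)|=1$, and summation over $E\in\sigma(H_\Lambda)\cap I$ (everything is finite-dimensional, so no convergence issues arise). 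What your approach buys is self-containedness and transparency: it bypasses the external singular-limit relation and the small trick of discarding a term in the $s\nearrow 1$ limit, and it handles degenerate eigenvalues automatically by working with $P_{\{E\}}(H_\Lambda)$, which matches the definition of the correlator. What the paper's route buys is uniformity of method: it stays entirely within the Green-function/fractional-moment framework used throughout Sections~\ref{section_Green_function} and~\ref{section_fm_to_eq} and reuses Proposition~\ref{lemma_resolvent_expansion} verbatim. Both proofs yield the same constant $2\,C_\textsc{t}(I)\,e^{\mu_\textsc{t}}$.
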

\begin{proof}
	The inequality is based on the following singular relation \cite[Prop.~7.9]{AWbook} between the eigenfunction correlator and the Green function:
	\begin{equation}\label{eq:singlim}
		Q_\Lambda^{(n)}(\x,\y, I )  = \lim_{s \nearrow 1} \frac{1-s}{2} \int_I |G_\Lambda(\x,\y; E)|^s \, \mathrm{d}E \,. 
	\end{equation}
	Using $ \x \not\in  \mathcal{C} $, the resolvent identity (\ref{eq:reseq}) and the Combes-Thomas estimate~\eqref{combes_thomas}  yield the following upper bound for $|G_\Lambda(\x,\y; E)|^s\,$:
	\begin{align*}
		|G_\Lambda^{(2)}(\x,\y;& E)|^s \, \mathds{1}_{\{\y\not\in\mathcal{C}\}}
	  		+\sum\limits_{\substack{\textbf{u} \not\in\mathcal{C}\\ \textbf{v}\in\mathcal{C}}} 
	  		|G_\Lambda^{(2)}(\x,\textbf{u}; E)|^s  |H_\Lambda(\textbf{u},\textbf{v})|^s
	  		| G_\Lambda(\textbf{v},\y;E) |^s \notag \\
	  	& \leqslant  C_\textsc{t}^s   e^{- s \mu_\textsc{t}  d(\x,\y) }\, + 2 \bigl(C_\textsc{t}e^{\mu_\textsc{t}}\bigr)^s
	  		\sum_{\textbf{v}\in \mathcal{C}} e^{- s \mu_\textsc{t} d(\x, \textbf{v})} | G_\Lambda(\textbf{v},\y,E) |^s \, . 
	\end{align*}
	The factor $2$ on the right side is due to the fact that for a given $ \textbf{v} \in   \mathcal{C} $ there are at most two configurations $ \textbf{u} \not\in   \mathcal{C} $ for which $ 0 \neq  |H_\Lambda(\textbf{u},\textbf{v})| = 1 $. In the singular limit~\eqref{eq:singlim}, the first term on the right side vanishes and the last term yields the claim.
\end{proof}

We are now ready to prove our main result:

\begin{proof}[Proof of Theorem~\ref{theorem_exponential_cor}]
	We treat the three cases 1.~$\x, \y \in \mathcal{C} $,  2.~$ \x \not\in\mathcal{C} , \;  \y \in \mathcal{C} $ and 3.~$  \x, \y \not\in\mathcal{C} $ separately.
	\medskip\\
	\noindent 1.~For $\x,\y\in\mathcal{C}$, we assume without loss of generality that $x_1\leqslant y_1 $. The bound~\eqref{eq:ECbounds} and a H\"older estimate yield for any $s\in [0,1]$:
	\begin{equation*}
		\mathbb{E}\left[Q^{(n)}_\Lambda(\x,\y,I)\right]
			\leqslant \mathbb{E}\left[Q^{(n)}_\Lambda(\x,\y,I,s)\right]^\frac{1}{2}
		\mathbb{E}\left[Q^{(n)}_\Lambda(\y,\x,I,s)\right]^\frac{1}{2}\,.
	\end{equation*}
	Now, Proposition~\ref{prop:EC} (with $ u = x_1 $) together with Lemma \ref{lemma_distance_dbar}(v) (by which $ y_1 - x_1 \leqslant \overline{d}(\w,\y) $ for any $ \w \ni x_1 $) implies:
	\begin{align*}
		\mathbb{E}\left[Q^{(n)}_\Lambda(\x,\y,I,s)\right] 
			&\leqslant\,C \, e^{-s\mu|x_1-y_1|} \int_I   \sum_{\substack{\w \in\mathcal{X}_{\Lambda}^{n} \\ x_1 \in \w} }
				e^{s\mu  \overline{d}(\w,\y)} \, \mathbb{E}\left[ |G_\Lambda(\w,\y;E)|^s   \right]\, \mathrm{d}E \notag \\
			&\leqslant C\, |I | \,\sup_{E \in I } S^{(2)}_\mu(E) \, e^{-s\mu|x_1-y_1|} \, .
	\end{align*}
	The finiteness of $\sup_{E \in I } S^{(2)}_\mu(E)\leqslant C_s^{(2)} $ is by \eqref{definition_S2}. Exchanging $\x$ and $\y$, and using the second statement of Lemma \ref{lemma_distance_dbar}(v), the same bound holds since $\vert x_n-y_n\vert=\vert x_1-y_1\vert$ for clustered $\x$ and $\y$. Hence, we obtain
	\begin{equation}\label{eq:ECcluster}
		\mathbb{E}\left[ Q^{(n)}_\Lambda(\x,\y,I)  \right]
			\leqslant  C\, |I | \, C_s^{(2)} \, e^{-s\mu|x_1-y_1|}\,.
	\end{equation}
	\noindent 2.~For $\x \not\in\mathcal{C}$ and $ \y \in \mathcal{C} $, we use Lemma~\ref{lem:perturbationth} together with the previous estimate~\eqref{eq:ECcluster} to conclude
	\begin{equation*}
		\mathbb{E}\left[ Q^{(n)}_\Lambda(\x,\y, I)  \right]
			\leqslant C\sum_{\w \in \mathcal{C} }  e^{- \mu_\textsc{t}  d(\x,\w) } e^{-s\mu|w_1-y_1|} \, . 
	\end{equation*}
	\noindent 3.~In the remaining case $\x, \y \not\in\mathcal{C}$, we use an iterated version of Lemma~\ref{lem:perturbationth}:
	\begin{equation*}
		Q^{(n)}_\Lambda(\x,\y, I )
			\leqslant (2 \, C_\textsc{t}(I) e^{\mu_\textsc{t}})^2 \sum_{\w , \textbf{v} \in \mathcal{C} }
			e^{- \mu_\textsc{t}  (d(\x,\w)+ d(\textbf{v},\y)) }  \, Q^{(n)}_\Lambda(\w,\textbf{v} , I ) \, . 
	\end{equation*}
	Averaging over the disorder and inserting~\eqref{eq:ECcluster} thus yields
	\begin{equation*}
		\mathbb{E}\left[ Q^{(n)}_\Lambda(\x,\y, I)  \right]
			\leqslant C \sum_{\w , \textbf{v} \in \mathcal{C} }
				e^{- \mu_\textsc{t}  (d(\x,\w)+ d(\textbf{v},\y)) }    e^{-s\mu|w_1-v_1|} \, . \qedhere 
	\end{equation*}
\end{proof}

\appendix

\section{Proof of thresholds}\label{app_thresholds}

The proof of Lemma \ref{lemma_thresholds} requires some preliminary considerations. Let $\alpha\neq\beta$, $\alpha, \beta\in\laml$ be two sites and the operators $\pi_\alpha$ and $\pi_{\alpha,\beta}$ be defined on $\hln$ through
\begin{align*}
	\pi_\alpha\,\delta_\x :=\,\begin{cases}\delta_\x\,,&\textnormal{if }\alpha\in\x\,,\\
		0\,,&\textnormal{otherwise},\end{cases}\qquad
	\pi_{\alpha,\beta}\,\delta_\x &:=\,\begin{cases}\delta_\x\,,
		&\textnormal{if either }\alpha\in \x\textnormal{ or }\beta\in \x\,,\\ 0\,,&\textnormal{otherwise.}\end{cases}
\end{align*}
It is worthwhile noting that a basis element $\delta_\x$ is only in the range of $\pi_{\alpha,\alpha+1}$ if the configuration $\x$ has a cluster ending on $\alpha$ or beginning on $\alpha+1$. The operators $\pi_{\alpha,\alpha+1}$ are related to transitions between occupied and empty sites.
A second operator $\tau_{\alpha,\beta}$ is defined on $\laml^n$ as
\begin{align*}
	\tau_{\alpha,\beta}(\x)_i :=\begin{cases}\alpha\,,&\textnormal{if }x_i = \beta\,,\\ \beta\,,
		&\textnormal{if }x_i = \alpha\,,\\ x_i\,,&\textnormal{otherwise}\,.\end{cases}
\end{align*}
After reordering, this also defines an operator on $\chiln$ which we denote by the same symbol. The action of $\tau_{\alpha,\beta}$ then amounts to exchanging the sites $\alpha$ and $\beta$ together with their occupancy. In particular, $\tau_{\alpha,\beta}(\x)=\x$ if $\alpha, \beta\in \x$ or $\alpha, \beta\notin \x$. By the embedding $\x\mapsto\delta_\x$, $\tau_{\alpha,\beta}$ is extended to an operator on $\hln$ through $\tau_{\alpha,\beta}\,\delta_\x := \delta_{\tau_{\alpha,\beta}(\x)}$. In the special case $\beta=\alpha+1$, the operator $\tau_{\alpha,\alpha+1}$ describes the hopping between the neighboring sites $\alpha$ and $\alpha+1$.
\begin{proposition}\label{proposition_estimates_UA}
	Let $A$ and $U$ be defined as in \eqref{adjacency} resp.~\eqref{potential} on $ \hln $ and set $\laml^{-} := \left[-L,L-1\right]\cap\mathbb{Z}$. Then:
	\begin{enumerate}[label=(\roman*)]
		\item $2\,U = \sum\limits_{\alpha\in\laml^{-}}\pi_{\alpha,\alpha+1} + \pi_{-\textsc{l}}+\pi_\textsc{l}$,
		\item $-A = \sum\limits_{\alpha\in\laml^{-}}\bigl(1 -\pi_{\alpha,\alpha+1} - \tau_{\alpha,\alpha+1}\bigr)$.
	\end{enumerate}
\end{proposition}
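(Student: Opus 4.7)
My plan is to verify both identities by applying the two sides to an arbitrary basis vector $\delta_\x$, $\x \in \chiln$, and performing a combinatorial count. The key observation underlying everything is the natural reading of the remark in the text: $\pi_{\alpha,\alpha+1}\delta_\x = \delta_\x$ precisely when \emph{exactly one} of $\alpha, \alpha+1$ belongs to $\x$, i.e., when a cluster boundary sits between these two adjacent sites. I will adopt this exclusive-or interpretation throughout, as this is what makes (i) correctly identify $2U$.

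\textbf{Proof of (i).} Writing the clusters of $\x \in \mathcal{C}_\Lambda^{(k)}$ as intervals $\{a_i, a_i+1, \ldots, b_i\}$ for $i = 1, \ldots, k$, each cluster carries exactly two natural boundaries (at its left end $a_i$ and right end $b_i$), giving $2k$ boundaries in total. I will match each boundary to exactly one summand on the right-hand side. For the left boundary of cluster $i$: if $a_i > -L$, then $a_i - 1 \notin \x$ while $a_i \in \x$, so it is picked up by $\pi_{a_i-1,\,a_i}\delta_\x = \delta_\x$; if $a_i = -L$, it is picked up instead by $\pi_{-L}\delta_\x = \delta_\x$. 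The analogous statement holds for the right boundary $b_i$ (matching $\pi_{b_i,\,b_i+1}$ or $\pi_L$). Summing over all clusters and noting that distinct boundaries produce distinct indices yields $2k\,\delta_\x = (2U)\delta_\x$.

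\textbf{Proof of (ii).} Since $|\laml^-| = 2L$, applying the right-hand side to $\delta_\x$ produces three pieces. First, $\sum_{\alpha\in\laml^-} 1 \cdot \delta_\x = 2L\,\delta_\x$. Second, using (i), $\sum_{\alpha\in\laml^-}\pi_{\alpha,\alpha+1}\delta_\x = B(\x)\,\delta_\x$ with $B(\x) := 2k(\x) - \mathds{1}_{-L\in\x} - \mathds{1}_{L\in\x}$ the number of ``interior'' cluster boundaries. Third, for each $\alpha\in\laml^-$ the operator $\tau_{\alpha,\alpha+1}$ either acts as the identity on $\delta_\x$ (precisely when both or neither of $\alpha, \alpha+1$ lie in $\x$, which occurs for $2L - B(\x)$ values of $\alpha$) or it sends $\delta_\x$ to $\delta_\y$ for some $\y$ with $d(\x,\y) = 1$ (for the remaining $B(\x)$ values). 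The essential point is that the latter $\alpha$'s are in bijection with the neighbors of $\x$: every $\y$ with $d(\x,\y) = 1$ arises by moving a single particle to an unoccupied nearest-neighbor site, which uniquely determines the bond $(\alpha, \alpha+1)$ across which the hop occurs. Hence $\sum_{\alpha\in\laml^-}\tau_{\alpha,\alpha+1}\delta_\x = (2L - B(\x))\delta_\x + A\delta_\x$, and combining the three pieces gives
\begin{align*}
\sum_{\alpha\in\laml^-}\bigl(1 - \pi_{\alpha,\alpha+1} - \tau_{\alpha,\alpha+1}\bigr)\delta_\x
&= 2L\,\delta_\x - B(\x)\,\delta_\x - \bigl((2L - B(\x))\,\delta_\x + A\delta_\x\bigr)\\
&= -A\delta_\x.
\end{align*}

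\textbf{Main obstacle.} There is no serious conceptual difficulty; both identities reduce to organized bookkeeping on the occupation pattern of $\x$. The only mild subtlety lies in pinning down the correct (exclusive-or) reading of $\pi_{\alpha,\alpha+1}$ so that (i) counts $2k(\x)$, after which (ii) follows immediately by combining (i) with the bijection between XOR bonds and nearest neighbors in $\chiln$.
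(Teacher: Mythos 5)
Your proof is correct and takes essentially the same route as the paper's: both verify the identities directly on the basis vectors $\delta_\x$ by counting occupied/empty transitions (with the exclusive-or reading of $\pi_{\alpha,\alpha+1}$, which is indeed the intended one, as the paper's own proof confirms) together with the boundary corrections $\pi_{-\textsc{l}},\pi_\textsc{l}$, and by using the bijection between transition bonds $(\alpha,\alpha+1)$ and the nearest-neighbor configurations $\tau_{\alpha,\alpha+1}(\x)$. The only difference is cosmetic: in (ii) the paper cancels $1-\pi_{\alpha,\alpha+1}-\tau_{\alpha,\alpha+1}$ termwise for each $\alpha$, whereas you sum the three families separately and cancel the scalar counts $2L$ and $2L-B(\x)$ at the end.
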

\begin{proof}
	Each claim is established by proving that both sides of the equalities coincide when applied to the basis vectors $\delta_\y$, $\y\in\chiln$.
	\medskip\\
	\noindent (i) From the preliminary considerations above, $\sum_{\alpha\in\laml^{-}}\pi_{\alpha,\alpha+1}\delta_\y = C_\y\,\delta_\y$, where $C_\y$ is the number of transitions in $\y$ between occupied and empty sites; or equivalently the number of other configurations that can be obtained from $\y$ by moving one particle to an adjacent empty site. Comparing with the definition (\ref{potential}) of $U$ shows that $2\,U\delta_\y = \bigl(C_\y+\pi_{-\textsc{l}}+\pi_\textsc{l}\bigr)\,\delta_\y$.
	\medskip\\
	\noindent (ii) We have
	\begin{align*}
		\left(1-\pi_{\alpha,\alpha+1} - \tau_{\alpha,\alpha+1}\right)\delta_\y 
			= \begin{cases}-\tau_{\alpha,\alpha+1}\delta_\y\,,&\textnormal{if either }\alpha\in\y
			\textnormal{ or }\alpha+1\in\y\,,\\ 0\,,&\textnormal{otherwise}.\end{cases}
	\end{align*}
	Comparing with the definition (\ref{adjacency}) of $A$, we obtain the claimed identity. In fact, any neighboring pair $\lbrace \x,\y\rbrace $ is uniquely written as $\lbrace \tau_{\alpha,\alpha+1}(\y),\y\rbrace$ for some $\alpha\in\laml^{-}$ with either $\alpha\in\y$ or $\alpha+1\in\y$.
\end{proof}
\begin{proof}[Proof of Lemma \ref{lemma_thresholds}]
	By assumption, $\lambda V\geqslant 0$. From Proposition \ref{proposition_estimates_UA}, we have
	\begin{equation*}
		H_\Lambda \geqslant \sum\limits_{\alpha\in\laml^{-}}\left[\bigl(1-\tau_{\alpha,\alpha+1}\bigr)
			+ (g-1)\pi_{\alpha,\alpha+1}\right] + g\bigl(\pi_{-\textsc{l}}+\pi_\textsc{l}\bigr)\,.
	\end{equation*}
	By their definition, $\tau_{\alpha,\beta}$ are hermitian and unitary so that $1-\tau_{\alpha,\beta}\geqslant 0$ and hence
	\begin{equation*}
		H_\Lambda \geqslant (g-1)\sum\limits_{\alpha\in\laml^{-}}\pi_{\alpha,\alpha+1} 
			+ g\bigl(\pi_{-\textsc{l}}+\pi_\textsc{l}\bigr)\,.
	\end{equation*}
	As said, $\sum_{\alpha\in\laml^{-}}\pi_{\alpha,\alpha+1}$ counts the number of transitions between occupied and empty sites. A configuration in $\mathrm{ran}(\mathcal{Q}^{(k)})$ features either $2(k-1)$, $2k-1$ or $2k$ such transitions, depending on whether it encompasses two, one or no boundary clusters. Each boundary cluster being however penalized by $g\bigl(\pi_{-\textsc{l}}+\pi_\textsc{l}\bigr)$, the configurations with no boundary clusters are energetically most favorable, establishing the lower bound on $\mathcal{Q}^{(k)}H_\Lambda \mathcal{Q}^{(k)}$ at $2k(g-1)$.
\end{proof}

\section{Distance functions and summability}\label{app_auxiliary_results}

This appendix is dedicated to proving three technical, but important lemmas on properties of the (distance) functions $d$, $\overline{d}$ and $D =\min\lbrace d,\overline{d}\rbrace$ as defined in~\eqref{eq:el1}, \eqref{definition_dbar_app} and \eqref{definition_D}.
\begin{lemma}\label{lemma_distance_dbar}
	\begin{enumerate}[label=(\roman*)]
		\item $D(\cdot,\cdot)$ is a distance function on $\chiln$,
		\item $\overline{d}(\x,\y)=D(\x,\y)=\vert x_1-y_1\vert$ for all $\x,\y\in\mathcal{C}$,
		\item $\overline{d}(\x,\y)= D(\x,\y)=\min_{\textbf{v}\in\mathcal{C}}\lbrace\vert x_1-v_1\vert +d(\textbf{v},\y)\rbrace$ for all $\x\in\mathcal{C}$, $\y\in\chiln$,
		\item $\overline{d}(\x,\y)\leqslant \overline{d}(\x,\textbf{u}) +\overline{d}(\textbf{u},\y)$ for all $\x,\textbf{u}\in\mathcal{C}$, $\y\in\chiln$,
		\item $\overline{d}(\w,\y)\geqslant y_1-x_1$ for all $\y\in\mathcal{C}$ and $ \w\in\chiln$ such that $x_1\leqslant y_1$ and $x_1\in\w$; and similarly, $\overline{d}(\w,\x)\geqslant y_n-x_n$ for all $\x\in\mathcal{C}$ and $\w\in\chiln$ such that $y_n\geqslant x_n$ and $y_n\in\w$,
		\item $\overline{d}\bigl(\x,\y\bigr)\geqslant \mathrm{dist}(U,V)-(n-1)$ for all $\x\cap U\neq\emptyset$ and $\y\cap V\neq\emptyset$ with disjoint $U,V\subset\Lambda$.
	\end{enumerate}
\end{lemma}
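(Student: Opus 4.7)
The plan is to prove the six items essentially in the order (ii), (iii), (iv), (v), (vi), (i), since the later items rely on the simplification formulas in (ii) and (iii), and the triangle inequality for $D$ in (i) benefits from having a clear picture of $\overline{d}$ already at hand. For (ii), I would observe that plugging the trivial witnesses $\w=\x$ and $\textbf{v}=\y$ into the definition~\eqref{definition_dbar_app} yields $\overline{d}(\x,\y) \leqslant |x_1-y_1|$, while for the reverse direction the bounds $d(\x,\w) \geqslant |x_1-w_1|$ and $d(\textbf{v},\y) \geqslant |v_1-y_1|$ combined with the scalar triangle inequality give $\overline{d}(\x,\y) \geqslant |x_1-y_1|$. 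The identity $D(\x,\y)=|x_1-y_1|$ then follows since for clustered $\x,\y$ one has $d(\x,\y) = n|x_1-y_1| \geqslant |x_1-y_1|$. Part (iii) is essentially the same computation: for clustered $\x$ the choice $\w=\x$ is optimal by $d(\x,\w)+|w_1-v_1| \geqslant |x_1-w_1|+|w_1-v_1| \geqslant |x_1-v_1|$, and $d(\x,\y) \geqslant \overline{d}(\x,\y)$ is seen by the witness $\w=\textbf{v}=\x$ in~\eqref{definition_dbar_app}.

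For (iv), with $\x,\textbf{u}\in\mathcal{C}$ I would just combine the explicit expressions from (ii) and (iii): $\overline{d}(\x,\textbf{u})+\overline{d}(\textbf{u},\y) = |x_1-u_1| + \min_{\textbf{v}\in\mathcal{C}}\{|u_1-v_1|+d(\textbf{v},\y)\}$, and then use $|x_1-u_1|+|u_1-v_1|\geqslant |x_1-v_1|$ under the min to recover the formula for $\overline{d}(\x,\y)$ given in (iii). For (v), I would use symmetry of $\overline{d}$ and (iii) to write $\overline{d}(\w,\y) = \min_{\textbf{v}\in\mathcal{C}}\{|y_1-v_1|+d(\textbf{v},\w)\}$; picking $k$ with $w_k=x_1$ yields $d(\textbf{v},\w)\geqslant |v_1+(k-1)-x_1|$, and a short case split on the sign of $v_1-x_1$ and $v_1-y_1$ produces $|y_1-v_1|+d(\textbf{v},\w)\geqslant y_1-x_1$ in each case. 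The second statement is handled analogously by using $|x_1-v_1|=|x_n-v_n|$ for clustered $\x,\textbf{v}$ and the lower bound $d(\textbf{v},\w)\geqslant |v_1+(k-1)-y_n|$.

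Part (vi) is the bookkeeping-heaviest step and I expect it to be the main obstacle, since the bound must absorb exactly one factor of $(n-1)$ coming from the internal structure of a cluster. My plan is to fix $a=x_i\in U$ and $b=y_j\in V$, then from $d(\x,\w)\geqslant |x_i-w_i|$, $d(\textbf{v},\y)\geqslant |v_j-y_j|$ and $|w_1-v_1|=|w_i-v_j+(j-i)|\geqslant |w_i-v_j|-|i-j|$ (using $w_i=w_1+i-1$ and $v_j=v_1+j-1$) to estimate
\begin{equation*}
    d(\x,\w)+|w_1-v_1|+d(\textbf{v},\y)\geqslant |a-w_i|+|w_i-v_j|+|v_j-b|-|i-j| \geqslant |a-b|-(n-1),
\end{equation*}
with the last step from the scalar triangle inequality and $|i-j|\leqslant n-1$. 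Taking the minimum over $\w,\textbf{v}$ gives $\overline{d}(\x,\y)\geqslant \mathrm{dist}(U,V)-(n-1)$.

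Finally, for (i), symmetry and non-negativity of $D$ are immediate, and definiteness reduces to showing $\overline{d}(\x,\y)=0$ forces $\x=\y$: any minimizing $\w,\textbf{v}$ must satisfy $d(\x,\w)=d(\textbf{v},\y)=|w_1-v_1|=0$, hence $\x=\w\in\mathcal{C}$, $\y=\textbf{v}\in\mathcal{C}$ with $w_1=v_1$, so $\x=\y$. The triangle inequality $D(\x,\z)\leqslant D(\x,\y)+D(\y,\z)$ I would prove by case analysis on which of $d$ or $\overline{d}$ realizes each of the two terms on the right. If both realize $d$, the usual $\ell^1$-triangle inequality suffices. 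In the mixed case where $D(\x,\y)=d(\x,\y)$ and $D(\y,\z)=\overline{d}(\y,\z)$ with witnesses $\textbf{v}'_*,\w'_*\in\mathcal{C}$, the witnesses $\w=\textbf{v}'_*,\textbf{v}=\w'_*$ in the definition of $\overline{d}(\x,\z)$ combined with $d(\x,\textbf{v}'_*)\leqslant d(\x,\y)+d(\y,\textbf{v}'_*)$ give the desired bound. In the purely $\overline{d}$ case, I would use the witnesses $\w_*,\w'_*$ of $\overline{d}(\x,\y)$ and $\overline{d}(\y,\z)$ respectively in $\overline{d}(\x,\z)$, and absorb the cross term via $|w_{*,1}-w'_{*,1}|\leqslant |w_{*,1}-v_{*,1}|+|v_{*,1}-v'_{*,1}|+|v'_{*,1}-w'_{*,1}|$ together with $|v_{*,1}-v'_{*,1}|\leqslant d(\textbf{v}_*,\textbf{v}'_*)\leqslant d(\textbf{v}_*,\y)+d(\y,\textbf{v}'_*)$.
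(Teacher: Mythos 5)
Your proposal is correct; items (ii), (iii), (v) and (vi) run along essentially the same lines as the paper's proof (for (vi) the paper writes $|w_1-z_1|=|w_j-z_j|$ and absorbs $|z_j-z_k|\leqslant n-1$, which is the same bookkeeping as your $|w_1-v_1|\geqslant |w_i-v_j|-|i-j|$, and its (v) is the same three-case split). Where you genuinely diverge is in (i) and (iv): the paper settles (i) in one line by realizing $D$ as the path metric on the graph obtained from the $\ell^1$-graph of $\chiln$ by adding edges between clustered configurations with $|x_1-y_1|=1$, and then deduces (iv) from (ii)--(iii) combined with the triangle inequality for $D$; you instead prove (iv) directly from the explicit formulas of (ii)--(iii) and verify the metric axioms of $D$ by hand, with explicit witnesses in each of the $d$/$\overline{d}$ cases. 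Both routes are valid; the graph realization is shorter, while your witness-chasing is more explicit and in fact yields slightly more: your ``purely $\overline{d}$'' case proves the triangle inequality $\overline{d}(\x,\z)\leqslant\overline{d}(\x,\y)+\overline{d}(\y,\z)$ for arbitrary configurations (only definiteness genuinely fails, since $\overline{d}(\x,\x)>0$ off $\mathcal{C}$), which is finer than the blanket remark after \eqref{exponential_fma} suggests. Two small points to tidy up in writing: note that the remaining mixed case $D(\x,\y)=\overline{d}(\x,\y)$, $D(\y,\z)=d(\y,\z)$ follows by the symmetry of $D$, and that the minima furnishing your witnesses are attained because $\mathcal{C}$ is finite (or, over $\mathbb{Z}$, because the minimized expression diverges at infinity).
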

\begin{proof}
	To establish property (i), we extend the natural graph of $\chiln$ with $\ell^1$-distance by edges connecting pairs $(\x,\y)$ of clustered configurations with $\vert x_1-y_1\vert = 1 $. As the function minimizing the distance on this graph, $D$ is a distance function. Properties (ii) and (iii) are immediate, by definition or $\vert x_1-y_1\vert\leqslant d(\x,\y)$. Property (iv) follows from (i)-(iii).
	\medskip\\
	(v) We use (iii): $ \overline{d}(\w,\y) = \min_{\mathbf{v} \in \mathcal{C}} \lbrace d(\w,\mathbf{v}) + |v_1-y_1| \rbrace $. Since $x_1 \in \w $, there is some $ j $ such that $ d(\mathbf{v}, \w) \geqslant |v_j - x_1 | $. By distinguishing the three cases, $ v_1 > y_1 $, $  y_1 \geqslant v_1 \geqslant x_1 $ and $ x_1 > v_1 $, one concludes from $ x_1 \leqslant y_1 $ and $ v_1 \leqslant v_j $ that $  |v_j - x_1 |  + |v_1-y_1| \geqslant y_1 - x_1 $. The second statement of (v) follows by left-right symmetry.
	\medskip\\
	(vi) Let $1\leqslant j,k\leqslant n$ be such that $x_j \equiv u\in U$ and $y_k \equiv v\in V$. Then, for any $\w,\z\in\mathcal{C}$, we have by the triangle inequality
	\begin{align*}
		d(\x,\w) + \vert w_1-z_1\vert + d(\z ,\y)\bigr\rbrace
			&\geqslant\, \vert u-w_j\vert + \vert w_j-z_j\vert + \vert z_k - v\vert\\
			&\geqslant\, \vert u-v\vert -\vert z_j-z_k\vert\,.
	\end{align*}
	Here, we used that $\vert w_1-z_1\vert=\vert w_j-z_j\vert$ for any $1\leqslant j\leqslant n$. The claim thus follows by $\vert z_j-z_k\vert\leqslant n-1$.
\end{proof}

The next lemma is the key observation regarding summability.
\begin{lemma}\label{lemma_sum}
	For any $\mu>0$ and $n\in\mathbb{N}$, we have
	\begin{equation}\label{inequality_sum}
		\sup\limits_{\x\in\mathcal{C}^{(1)}_\mathbb{Z}}\sum\limits_{\textbf{v}\in\mathcal{X}^{n}_\mathbb{Z}}
			e^{-\mu d(\x,\textbf{v})}
			\leqslant \frac{1}{1-e^{-\mu}}\left(\prod\limits_{k=1}^{\infty}\frac{1}{1-e^{-k\mu}}\right)^2
			=:C_{\infty}(\mu)\,.
	\end{equation}
\end{lemma}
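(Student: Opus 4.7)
The plan is to exploit the $\mathbb{Z}$-translation invariance of the summand and then reduce the problem to a classical partition generating function.

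First, since shifting every $v_j$ by a common integer preserves both the set $\mathcal{X}^n_\mathbb{Z}$ and the value of $d(\x,\textbf{v})$ when $\x$ is shifted accordingly, I may assume $\x=\{0,1,\ldots,n-1\}$, so that $d(\x,\textbf{v})=\sum_{j=1}^n |v_j-(j-1)|$. I then introduce the change of variables $w_j:=v_j-(j-1)$. The hard-core condition $v_{j+1}\geqslant v_j+1$ translates to $w_{j+1}\geqslant w_j$, giving a bijection between $\mathcal{X}^n_\mathbb{Z}$ and the set of weakly increasing integer sequences $w_1\leqslant w_2\leqslant\cdots\leqslant w_n$, under which $d(\x,\textbf{v})=\sum_j |w_j|$. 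The quantity to be bounded becomes
\[
	Z_n(\mu) \; := \; \sum_{\substack{w_1\leqslant\cdots\leqslant w_n \\ w_j\in\mathbb{Z}}} e^{-\mu\sum_j |w_j|}.
\]

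Next, I split each such sequence according to the sign of its entries: let $m\in\{0,1,\ldots,n\}$ be the number of strictly negative entries, necessarily $w_1,\ldots,w_m$ by monotonicity. Negating and reversing the negative block yields a partition $u_1\geqslant\cdots\geqslant u_m\geqslant 1$ contributing the same to $\sum_j |w_j|$; subtracting $1$ from each $u_i$ identifies it with an arbitrary partition into at most $m$ parts (of size $\geqslant 0$). The non-negative block $0\leqslant w_{m+1}\leqslant\cdots\leqslant w_n$ is, after reversal, a partition into at most $n-m$ parts. Invoking the classical Euler identity that the generating function for partitions into at most $k$ parts is $\prod_{i=1}^k(1-x^i)^{-1}$, with $x=e^{-\mu}$, yields
\[
	Z_n(\mu) \; = \; \sum_{m=0}^n e^{-m\mu}\prod_{i=1}^m\frac{1}{1-e^{-i\mu}}\prod_{i=1}^{n-m}\frac{1}{1-e^{-i\mu}}.
\]

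Finally, I bound each finite product by the convergent infinite product $\prod_{k=1}^\infty(1-e^{-k\mu})^{-1}$ (convergent since $\mu>0$) and control the remaining geometric sum by $\sum_{m=0}^n e^{-m\mu}\leqslant (1-e^{-\mu})^{-1}$, obtaining exactly the constant $C_\infty(\mu)$. The only non-routine step is recognizing the partition structure hidden in the non-decreasing sequences after the change of variables; once that is in place, the uniformity in $n$ comes for free from bounding partial products by the infinite one.
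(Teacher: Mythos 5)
Your proof is correct and follows essentially the same route as the paper's: after using translation invariance and the change of variables that turns hard-core configurations into weakly increasing integer sequences, you split according to the number of negative entries, bound each block via Euler's partition generating function $\prod_{i\geqslant 1}(1-e^{-i\mu})^{-1}$, and control the sum over the split index by the geometric series $\sum_m e^{-m\mu}\leqslant (1-e^{-\mu})^{-1}$. The only cosmetic difference is that you record the exact identity for $Z_n(\mu)$ before bounding, whereas the paper passes directly to the estimates.
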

\begin{remark}
	The product in the parenthesis is known in partition theory as Euler's generating function evaluated at $x=e^{-\mu}$ and, as an instance of $q-$Pochhammer symbol, often written as $(e^{-\mu},e^{-\mu})^{-1}_\infty$.
\end{remark}
\begin{proof}
	Notice first that $C_\infty(\mu)$ is a well-defined strictly decreasing function of $\mu>0$. It diverges to $+\infty$ for $\mu\to 0$ and converges to $1$ for $\mu\to +\infty$.
	
	The sum on the left hand side of (\ref{inequality_sum}) is translation invariant in $\x\in\mathcal{C}^{(1)}_\mathbb{Z}$, which may therefore be chosen arbitrarily. Setting $x_k=k$ and subsequently substituting $y_k:= v_k-k$, $1\leqslant k\leqslant n$, we obtain the expression
	\begin{equation*}
		\sum\limits_{y_1\in\mathbb{Z}} e^{-\mu\vert y_1\vert}\sum\limits_{y_2=y_1}^{\infty} e^{-\mu\vert y_2\vert}\:\cdots
			\sum\limits_{y_n=y_{n-1}}^{\infty}e^{-\mu\vert y_n\vert}\,.
	\end{equation*}
	Notice that $\lbrace y_1,y_2,\ldots,y_n\rbrace$ is a non-decreasing sequence. Hence, there exists $0\leqslant j\leqslant n$ such that $y_1,\ldots,y_j\in\mathbb{R}_{<0}$ and $y_{j+1},\ldots,y_n\in\mathbb{R}_{\geqslant 0}$. For $j=0$ resp.~$j=n$, the former resp.~the latter set are considered empty. Reordering the terms in the sums according to their $j$ yields 
	\begin{equation*}
		\sum\limits_{j=0}^{n}
		\left( \sum\limits_{y_1\leqslant\ldots\leqslant y_{j}\leqslant -1} 
			e^{\mu (y_1+\ldots +y_{j})}\right)
		\left(\sum\limits_{0\leqslant y_{j+1}\leqslant\ldots\leqslant y_n}e^{-\mu (y_{j+1}+\ldots +y_n)}\right)\,.
	\end{equation*}
	The claimed result now follows by the geometric series. In fact the first parenthesis is bounded by $e^{-j\mu}\cdot(e^{-\mu},e^{-\mu})^{-1}_\infty$ and the second by $(e^{-\mu},e^{-\mu})^{-1}_\infty$.
\end{proof}

An immediate consequence of the last lemma is
\begin{lemma}\label{summability_F}
	Let $F_\mu$ be as in~\eqref{definition_F} and $U,V\subset \Lambda$ be two connected, disjoint subsets. For any $\mu>0$, there exists a constant $C_\mu\in (0,\infty)$ such that for all $ n \geqslant 2\,$, $\Lambda$:
	\begin{equation}
		\sum\limits_{\substack{\x\in\chiln\\ \x\cap U\neq\emptyset}}
		\sum\limits_{\substack{\y\in\chiln\\ \y\cap V\neq\emptyset}} F_\mu(\x,\y) \leqslant C_\mu(n+1)\,.
	\end{equation}
\end{lemma}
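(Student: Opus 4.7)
The plan is to split the double sum according to the three cases in the definition~\eqref{definition_F} of $F_\mu$ and to treat each piece by combining Lemma~\ref{lemma_sum} with elementary geometric-series estimates in the head $w_1$ of a clustered configuration. The recurrent observation is that for $\y\in\mathcal{C}$ the constraint $\y\cap V\neq\emptyset$ forces $y_1$ into the window $\bigcup_{v\in V}\{v-n+1,\ldots,v\}$, so that
\begin{equation*}
\sum_{\substack{\y\in\mathcal{C}\\ \y\cap V\neq\emptyset}} e^{-\mu|w_1-y_1|}\ \leqslant\ \sum_{y_1\in\mathbb{Z}} e^{-\mu|w_1-y_1|}\ \leqslant\ \frac{2}{1-e^{-\mu}}
\end{equation*}
is bounded uniformly in $w_1$. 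For the warm-up case $\x,\y\in\mathcal{C}$ this already yields the claim after counting the admissible starting positions of $\x$.

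In the mixed case $\x\notin\mathcal{C}$, $\y\in\mathcal{C}$, I would substitute $F_\mu(\x,\y)=\sum_{\w\in\mathcal{C}}e^{-\mu d(\x,\w)-\mu|w_1-y_1|}$, interchange the order of summation, dispose of the $\y$-sum by the geometric bound above, and of the $\x$-sum by Lemma~\ref{lemma_sum}. The delicate point is the remaining sum over $w_1\in\Lambda$: a plain invocation of Lemma~\ref{lemma_sum} leaves no decay in $w_1$, and the sum would grow like $|\Lambda|$. To recover the missing decay I would refine the $\x$-sum by conditioning on which coordinate $x_k\in\x$ is forced to equal a chosen $u\in U$, and re-running the ordered-summation computation behind Lemma~\ref{lemma_sum} on the remaining $n-1$ coordinates. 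This yields the pointwise estimate
\begin{equation*}
\sum_{\substack{\x\in\chiln\\ \x\cap U\neq\emptyset}} e^{-\mu d(\x,\w)}\ \leqslant\ C_\infty(\mu)^2\sum_{u\in U}\sum_{k=1}^{n}e^{-\mu|u-w_1-k+1|},
\end{equation*}
which concentrates $w_1$ near $U$. Performing $\sum_{w_1\in\Lambda}$ and then enlarging $u\in U$ to $u\in\mathbb{Z}$ (at no cost, since $\sum_{u\in\mathbb{Z}}e^{-\mu|u-\cdot|}$ converges) produces a uniform constant times $\sum_{k=1}^{n}1=n$.

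The fully non-clustered case $\x,\y\notin\mathcal{C}$ is handled in the same spirit: factorizing $F_\mu$ and interchanging sums leaves
\begin{equation*}
\sum_{\w,\textbf{v}\in\mathcal{C}}e^{-\mu|w_1-v_1|}\Bigl(\sum_{\x\cap U\neq\emptyset}e^{-\mu d(\x,\w)}\Bigr)\Bigl(\sum_{\y\cap V\neq\emptyset}e^{-\mu d(\textbf{v},\y)}\Bigr),
\end{equation*}
and the sharpened bound applies to both parentheses, localizing $w_1$ near $U$ and $v_1$ near $V$. Successive one-dimensional convolutions, each carried out at a slightly lowered rate $\mu'<\mu$ via the elementary inequality $\sum_{w_1\in\mathbb{Z}}e^{-\mu(|w_1-a|+|w_1-b|)}\leqslant C\,e^{-\mu'|a-b|}$, collapse $\sum_{w_1,v_1}$ to a bounded constant and leave an outer sum over the index pairs $k,l\in\{1,\ldots,n\}$ that contributes the announced linear factor $n+1$. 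I expect precisely this bookkeeping to be the main obstacle: the whole point of the argument is to prevent crude uses of Lemma~\ref{lemma_sum} from destroying the $w_1,v_1$-decay and inflating the constant by uncontrolled factors of $|U|$, $|V|$ or $|\Lambda|$.
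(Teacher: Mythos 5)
Your proposal has a genuine gap, and it is the same gap in all three cases: nowhere do you use the disjointness of $U$ and $V$, yet without it the claimed bound is false with a constant that is uniform in $U$, $V$ and $\Lambda$ (and this uniformity is exactly what the Corollary needs, since its constants are fixed before quantifying over all connected disjoint $U,V\subset\Lambda$). In the warm-up case $\x,\y\in\mathcal{C}$, once you bound $\sum_{\y}e^{-\mu|x_1-y_1|}$ uniformly in $x_1$ by $2/(1-e^{-\mu})$, "counting the admissible starting positions of $\x$" gives $|U|+n-1$ terms, i.e.\ a bound of order $|U|$, which can be of order $|\Lambda|$ --- not $C_\mu(n+1)$. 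In the mixed case the same loss resurfaces in the step "performing $\sum_{w_1\in\Lambda}$ and then enlarging $u\in U$ to $u\in\mathbb{Z}$ at no cost": the single factor $e^{-\mu|u-w_1-k+1|}$ can pay for the sum over $w_1$ \emph{or} the sum over $u$, but not both, so $\sum_{w_1\in\Lambda}\sum_{u\in U}e^{-\mu|u-w_1-k+1|}$ is of order $\min(|U|,|\Lambda|)$ and your accounting yields $O(|U|\,n)$, not $O(n)$. The root cause is that you discarded the coupling of $w_1$ to $V$ the moment you bounded the clustered-$\y$ sum uniformly in $w_1$; likewise in the fully non-clustered case the sums over $u\in U$ and $v\in V$ are never paid for (your final tally only mentions $w_1,v_1$ and the indices $k,l$), and they can only be controlled by invoking $U\cap V=\emptyset$.

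The missing idea, which is how the paper proceeds, is to keep exponential decay \emph{between} the $U$-anchored and the $V$-anchored variables. For $\x,\y\in\mathcal{C}$ one assumes WLOG $U<V$, so $x_1\leqslant u_{\mathrm{max}}$ while $y_1\geqslant v_{\mathrm{min}}-(n-1)\geqslant u_{\mathrm{max}}-(n-2)$, and the \emph{coupled} double sum $\sum_{x_1\leqslant u_{\mathrm{max}}}\sum_{y_1\geqslant u_{\mathrm{max}}-(n-2)}e^{-\mu|x_1-y_1|}$ is bounded by $e^{-\mu}(1-e^{-\mu})^{-2}+(n-1)\coth(\mu/2)$, which is where the factor $n$ legitimately comes from. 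For the other cases the paper halves the exponent, using that $\x\cap U\neq\emptyset$ forces $d(\x,\w)\geqslant\mathrm{dist}(U,\w)$ (and, on the other side, $d(\w,\y)\geqslant\mathrm{dist}(\w,V)$ resp.\ $d(\textbf{z},\y)\geqslant\mathrm{dist}(\textbf{z},V)$), so that after decoupling, the sum over clustered $\w$ carries $e^{-\frac{\mu}{2}(\mathrm{dist}(U,\w)+\mathrm{dist}(\w,V))}$; since at most $n-1$ clustered $\w$ can be adjacent to both $U$ and $V$ simultaneously, this sum is $O(n)$, while the $\x$- and $\y$-sums are bounded uniformly in $\w$ by a geometric series and Lemma~\ref{lemma_sum} at rate $\mu/2$. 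Your refined pointwise estimate localizing $w_1$ near $U$ is a reasonable device, but it must be run while the $V$-anchor is still alive (e.g.\ at half rate, as above); as written, the plan cannot avoid a spurious factor $|U|$, $|V|$ or $|\Lambda|$.
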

\begin{proof}
	Assume without loss of generality that $U<V$, that is $u<v$ for all $u\in U$ and $v\in V$. Let $u_\mathrm{max}$ be the maximal element of $U$ and $v_\mathrm{min}$ the minimal element of $V$. To simplify notation, let henceforth $\mathcal{C}_A$ be the clustered configurations with at least one particle in $A\subset\Lambda$. We shall split the sum into four terms, according to whether $\x$ and $\y$ are clustered or not, and bound them separately. The emphasis of this proof lies on its shortness rather than on optimal bounds.
	\medskip\\
	\noindent 1.~$\x,\y\in\mathcal{C}$. Here, we have $F_\mu(\x,\y) = \exp\bigl[-\mu\vert x_1-y_1\vert\bigr]$. Since by assumption $u_\mathrm{max}\leqslant v_\mathrm{min}-1$, the following bound holds:
	\begin{equation}\label{eq_x_y_in_C}
		\sum\limits_{\substack{x_1\leqslant u_\mathrm{max}\\ y_1\geqslant u_\mathrm{max}-(n-2)}}
			e^{-\mu\vert x_1 - y_1\vert} \leqslant \frac{e^{-\mu}}{(1-e^{-\mu})^2} + (n-1)\coth\left(\frac{\mu}{2}\right)\,.
	\end{equation}
	The first term on the right hand side is the contribution for $y_1 > u_\mathrm{max}$ and the second term an estimate of the remainder.
	\medskip\\
	\noindent 2.~$\x\in\mathcal{C}$, $\y\not\in\mathcal{C}$. We estimate the sum
	\begin{equation}\label{eq:sum_F_2}
		\sum\limits_{\x\in\mathcal{C}_U}
		\sum\limits_{\substack{\y\in\chiln\setminus\mathcal{C}\\ \y\cap V\neq\emptyset}}
		\sum\limits_{\w\in\mathcal{C}} e^{-\mu\vert x_1-w_1\vert} e^{-\mu d(\w,\y)}\,.
	\end{equation}
	For clustered $\x$ and $\w$, $| x_1-w_1 | = | x_k-w_k |  $ for any $ k $. Moreover, since $\x$ has a particle in $U$, and $\y$ a particle in $V$, we have the lower bound
	\begin{equation*}
		\vert x_1 - w_1\vert + d(\w,\y)
			\geqslant \frac{1}{2}\bigl(\vert x_1-w_1\vert + \mathrm{dist}(U,\w) + \mathrm{dist}(\w,V) + d(\w,\y)\bigr)\,.
	\end{equation*}
	In fact, for any $\x\in\mathcal{C}_U$, there exist  $u\in U$ and $1\leqslant k(\x)\leqslant n$ such that $\vert x_1-w_1\vert = \vert u-w_{k(\x)}\vert\geqslant \min\lbrace \vert u-w_j\vert : u\in U, 1\leqslant j\leqslant n\rbrace =:\mathrm{dist}(U,\w)$; and similarly, $d(\w,\y)\geqslant \mathrm{dist}(\w,V)$. Hence, we may decouple the sum in (\ref{eq:sum_F_2}) and arrive at the bound
	\begin{align}
		& \sum\limits_{\x\in\mathcal{C}_U}\sum\limits_{\substack{\y\in\chiln\setminus\mathcal{C}\\ \y\cap V\neq\emptyset}}
			\sum\limits_{\w\in\mathcal{C}}
			e^{-\frac{\mu}{2}\vert x_1-w_1\vert}e^{-\frac{\mu}{2}\bigl(\mathrm{dist}(U,\w) + \mathrm{dist}(\w,V)\bigr)}
			e^{-\frac{\mu}{2}d(\w,\y)}  \label{eq:sum_F_3}  \\
		&\leqslant\, \left(\sum\limits_{\w\in\mathcal{C}} e^{-\frac{\mu}{2}\bigl(\mathrm{dist}(U,\w) + \mathrm{dist}(\w,V)\bigr)}\right) 
		\sup\limits_{\w\in\mathcal{C}} \left( \sum\limits_{\x\in\mathcal{C}_U}e^{-\frac{\mu}{2}\vert x_1-w_1\vert} 
			\sum\limits_{\substack{\y\in\chiln\setminus\mathcal{C}\\ \y\cap V\neq\emptyset}}
			e^{-\frac{\mu}{2}d(\w,\y)}\right)\,. \notag
	\end{align}
	By Lemma \ref{lemma_sum} and the geometric series, the last term on the right side is bounded by $\coth(\mu/4)\,C_\infty(\mu/2)$. To bound the first parenthesis, note that there only exist clustered $\w$ with $\mathrm{dist}(U,\w) + \mathrm{dist}(\w,V)=0$ if $U$ and $V$ are not too far apart; and in the worst case scenario where $v_\mathrm{min}=u_\mathrm{max}+1$, there are at most $(n-1)$ such $\w$. By the geometric series, the first parenthesis on the right hand side of (\ref{eq:sum_F_3}) is thus bounded by $(n-2)+\coth(\mu/4)$. By symmetry of $F_\mu$, the case $\x\not\in\mathcal{C}$ and $\y\in\mathcal{C}$ satisfies the same bound.
	\medskip\\
	\noindent 3.~$\x,\y\not\in\mathcal{C}$. The sum to be bounded reads
	\begin{equation}
		\sum\limits_{\substack{\x\in\chiln\setminus\mathcal{C}\\ \x\cap U\neq\emptyset}}
		\sum\limits_{\substack{\y\in\chiln\setminus\mathcal{C}\\ \y\cap V\neq\emptyset}}
		\sum\limits_{\w,\z\in\mathcal{C}} e^{-\mu d(\x,\w)}e^{-\mu\vert w_1-z_1\vert}e^{-\mu d(\z,\y)}\,.
	\end{equation}
	As in item 2, the estimate relies on decoupling the sums by giving lower bounds on $d(\x,\w)$, $\vert w_1-z_1\vert$, $d(\z,\y)$ or sums thereof which are independent of one variable. Here, we have
	\begin{equation*}
		d(\x,\w) \geqslant \frac{1}{2}\bigl(d(\x,\w) + \mathrm{dist}(U,\w)\bigr)\,,\qquad
		d(\z,\y) \geqslant\frac{1}{2}\bigl(\mathrm{dist}(\z,V)+d(\z,\y)\bigr)\,.
	\end{equation*}
	Combined with $\vert w_1-z_1\vert + \frac{1}{2} \mathrm{dist}(\z,V)\geqslant \frac{1}{2} \vert w_1-z_1\vert  +  \frac{1}{2} \mathrm{dist}(\w,V)$, the same bounds as in item 2 apply and yield
	\begin{equation}
		\sum\limits_{\substack{\x\in\chiln\setminus\mathcal{C}\\ \x\cap U\neq\emptyset}}
		\sum\limits_{\substack{\y\in\chiln\setminus\mathcal{C}\\ \y\cap V\neq\emptyset}} F_\mu(\x,\y)
			\leqslant \bigl((n-2)+\coth\left(\frac{\mu}{4}\right)\bigr)\coth\left(\frac{\mu}{4}\right)
			C_\infty\left(\frac{\mu}{2}\right)^2\,,
	\end{equation}
	concluding the proof.
\end{proof}
\begin{remark}
	For $U$ and $V$ not disjoint, the factor $(n-1)$ appearing for instance on the right hand side of (\ref{eq_x_y_in_C}) is typically replaced by $\vert U\cap V\vert + (n-1)$ and the expression $C_\mu(n+1)$ then depends on the size of the overlap.
\end{remark}

\section{Proof of relation to XXZ}\label{app_XXZ_relation}

\begin{proof}[Proof of Proposition~\ref{prop:XXZ}]
	For $\mathbb{C}^2$, we introduce the basis of eigenvectors $\lbrace e^{\pm}\rbrace$ of $S^{z}$ satisfying $S^{z}e^{\pm} = \pm\frac{1}{2}e^{\pm} $. Basis elements of $\mathcal{H}_{\vert\Lambda\vert}^\textsc{xxz}$ are $\vert\Lambda\vert$-fold tensor products of $e^{\pm}$ and uniquely determined through the number and positions of their down-spins, which may be summarized in a configuration $\x\in\mathcal{X}_\Lambda^n$ for some appropriate $0\leqslant n\leqslant\vert\Lambda\vert$. Denoting any such basis element by $e_\x$, this induces a unitary operator
	\begin{equation*}
		\mathcal{U}: \mathcal{H}_{n,\vert\Lambda\vert}^{\textsc{xxz}}\to \mathcal{H}_\Lambda^n\,,\qquad e_\x\mapsto\delta_\x
	\end{equation*}
	between the superselection sector with exactly $n$ down-spins and the space of $n$ hard-core particles.
	
	Next, we define the ladder operators $S^{\pm} := S^{x}\pm iS^{y}$ and the down-spin number operator $N:=S^{-}S^{+}=1/2-S^{z}$. They satisfy the following relations: $ S^{+}e^{+}=0 $, $ S^{-}e^{+}=\,e^{-} $, $ N e^{+} = 0 $, $		S^{+}e^{-}=e^{+} $, $ S^{-}e^{-}=\,0$ , $ N e^{-} = e^{-} $. The right hand side of (\ref{equivalence_hamiltonians}) may be recast as
	\begin{equation*}
		2 H_{n,\vert\Lambda\vert}^\textsc{xxz}(g,1)
			= -\sum\limits_{k\in\Lambda^{-}}A_k
			+2g\left(\sum\limits_{k\in\Lambda^{-}}U_k + \frac{1}{2}(N_\textsc{-l}+N_\textsc{l})\right)
			+\frac{\lambda}{g} \sum\limits_{k\in\Lambda}\omega(k) \, N_k\,,
	\end{equation*}
	where
	\begin{equation*}
		A_k := S_k^{+}\otimes S_{k+1}^{-}+S_{k}^{-}\otimes S_{k+1}^{+}\,, \qquad U_k := \frac{1}{4}
			\cdot\mathds{1}-\bigl(N_k-\frac{1}{2}\mathds{1}\bigr)\otimes\bigl(N_{k+1}-\frac{1}{2}\mathds{1}\bigr)\,,
	\end{equation*}
	and $\Lambda^{-}=[-L,L-1]\cap\mathbb{Z}$. Observe that both $A_k$ and $U_k$ vanish on $e_k^{+}\otimes e_{k+1}^{+}$ and $e_k^{-}\otimes e_{k+1}^{-}$. Moreover, when applied on $e_k^{+}\otimes e_{k+1}^{-}$ or $e_k^{-}\otimes e_{k+1}^{+}$, $A_k$ exchanges the spins while $U_k$ counts $1/2$, reminiscent of the actions of $A$ and $U$. A notable difference lies in that $U$ counts $1$ per cluster, while $\sum_{k\in\Lambda^{-}} U_k$ counts $1/2$ per interface between up- and down-spins. The additional potential term $(N_\textsc{-l}+N_\textsc{l})/2$ accounts for the lacking interface whenever a cluster sits at a boundary. The claim now follows by verifying that the actions of $\sum_{k\in\Lambda^{-}} A_k$ and $\sum_{k\in\Lambda^{-}} U_k + (N_\textsc{-l}+N_\textsc{l})/2$ on basis elements $e_\x$ indeed coincide with that of $A$ and $U$ on $\delta_\x$.
\end{proof}


\subsection*{Acknowledgment}
 We thank B.~Nachtergaele for proposing the problem and for illuminating discussions during his stay at TUM as a John von Neumann Fellow. V.B.~was supported by Grant P2EZP2\_162235 of the Swiss National Science Foundation. The final publication is available at Annales Henri Poincar\'e, Springer via http://dx.doi.org/10.1007/s00023-017-0591-0

\bibliography{XXZ_May_11}
\bibliographystyle{abbrv}

\end{document}